\renewcommand{\backref}[1]{}
\renewcommand{\backrefalt}[4]{%
\small
\ifcase #1 %
\or
[p.\ #2]%
\else
[pp.\ #2]%
\fi}
\newcommand{\para}{%
  \@startsection{paragraph}{4}%
  {\z@}{1.5ex \@plus .5ex \@minus .3ex}{-1em}%
  {\normalfont\normalsize\bfseries}%
}
\newtheorem{theorem}{Theorem}
\newtheorem{lemma}[theorem]{Lemma}
\newtheorem{corollary}[theorem]{Corollary}
\newtheorem{fact}[theorem]{Fact}
\newtheorem{conjecture}[theorem]{Conjecture}
\theoremstyle{definition}
\newtheorem{definition}[theorem]{Definition}
\newtheoremstyle{part}
  {-0.2\topsep}   
  {\topsep}   
  {\itshape}  
  {0pt}       
  {\bfseries} 
  {.}         
  {5pt plus 1pt minus 1pt} 
  {}          
\theoremstyle{part}
\newtheorem{factpart}{Fact}[theorem]
\newcommand{\eq}[1]{\hyperref[eq:#1]{(\ref*{eq:#1})}}
\renewcommand{\sec}[1]{\hyperref[sec:#1]{Section~\ref*{sec:#1}}}
\newcommand{\thm}[1]{\hyperref[thm:#1]{Theorem~\ref*{thm:#1}}}
\newcommand{\lem}[1]{\hyperref[lem:#1]{Lemma~\ref*{lem:#1}}}
\newcommand{\prop}[1]{\hyperref[prop:#1]{Proposition~\ref*{prop:#1}}}
\newcommand{\cor}[1]{\hyperref[cor:#1]{Corollary~\ref*{cor:#1}}}
\newcommand{\fig}[1]{\hyperref[fig:#1]{Figure~\ref*{fig:#1}}}
\newcommand{\tab}[1]{\hyperref[tab:#1]{Table~\ref*{tab:#1}}}
\newcommand{\alg}[1]{\hyperref[alg:#1]{Algorithm~\ref*{alg:#1}}}
\newcommand{\app}[1]{\hyperref[app:#1]{Appendix~\ref*{app:#1}}}
\newcommand{\defn}[1]{\hyperref[def:#1]{Definition~\ref*{def:#1}}}
\newcommand{\clm}[1]{\hyperref[clm:#1]{Claim~\ref*{clm:#1}}}
\newcommand{\fct}[1]{\hyperref[fact:#1]{Fact~\ref*{fact:#1}}}
\newcommand*{\fullref}[1]{\hyperref[{#1}]{\autoref*{#1}:~\nameref*{#1}}}
\newcommand*{\fullbref}[1]{\hyperref[{#1}]{\autoref*{#1} (\nameref*{#1})}}
\newcommand{\B}{\{0,1\}}
\newcommand{\SINK}{\textsf{Sink}}
\newcommand{\XOR}{\textsf{Xor}}
\newcommand{\Eq}[1]{\textsc{EQ}_{#1}}
\newcommand{\X}{\mathcal{X}}
\newcommand{\Y}{\mathcal{Y}}
\newcommand{\U}{\mathcal{U}}
\newcommand{\srd}{R}
\renewcommand{\(}{\left(}
\renewcommand{\)}{\right)}
\newcommand{\<}{\langle}
\renewcommand{\>}{\rangle}
\newcommand{\id}{\mathbb{I}}
\DeclareMathOperator{\rank}{rk}
\DeclareMathOperator{\dom}{{dom}}
\newcommand{\Q}{\mathrm{Q}^*}
\DeclareMathOperator{\IC}{IC}
\DeclareMathOperator{\CC}{CC}
\newcommand{\defeq}{\coloneqq} 
\newcommand{\eps}{\varepsilon}
\renewcommand{\epsilon}{\varepsilon}
\renewcommand{\Pr}{\mathrm{Pr}}
\newcommand{\E}{\mathcal{E}}
\newcommand{\F}{\mathrm{F}}
\newcommand{\I}{\mathrm{I}}
\newcommand{\err}{\mathrm{err}}
\newcommand {\fn} [2] {\ensuremath{ #1 \minusspace \br{ #2 } }}
\newcommand {\Fn} [2] {\ensuremath{ #1 \minusspace \Br{ #2 } }}
\newcommand {\minusspace} {\: \! \!}
\newcommand {\br} [1] {\ensuremath{ \left( #1 \right) }}
\newcommand {\Br} [1] {\ensuremath{ \left[ #1 \right] }}
\newcommand {\bra} [1] {\ensuremath{ \left\langle #1 \right| }}
\newcommand {\ket} [1] {\ensuremath{ \left| #1 \right\rangle }}
\newcommand {\ketbratwo} [2] {\ensuremath{ \left| #1 \middle\rangle \middle\langle #2 \right| }}
\newcommand {\ketbra} [1] {\ketbratwo{#1}{#1}}
\newcommand {\Tr} {\ensuremath{ \mathrm{Tr} }}
\newcommand {\norm} [1] {\ensuremath{ \left\| #1 \right\| }}
\newcommand {\normsub} [2] {\ensuremath{ \norm{#1}_{#2} }}
\newcommand {\onenorm} [1] {\normsub{#1}{1}}
\newcommand {\cspace} [1] {\ensuremath{\mathnormal{#1}}}
\newcommand {\set} [1] {\ensuremath{ \left\lbrace #1 \right\rbrace }}
\newcommand {\ent} [1] {\fn{\mathrm{H}}{#1}}
\newcommand {\suppress}[1]{}
\newcommand{\expec}{\mathbb{E}}
\def\D{\mathcal{D}}
\def\L{\mathcal{L}}
\def\H{\mathcal{H}}
\newcommand{\BR}{\mathrm{B}}
\DeclareMathOperator{\QIC}{QIC}
\DeclareMathOperator{\HQIC}{HQIC}
\DeclareMathOperator{\SQIC}{SQIC}
\newcommand{\QCC}{\mathrm{QCC}}
\newcommand{\Qent}{\mathrm{Q}^*}
\newcommand{\beq}{\begin{equation}}
\newcommand{\eeq}{\end{equation}}
\newcommand{\beqst}{\begin{equation*}}
\newcommand{\eeqst}{\end{equation*}}
\newcommand{\beqar}{\begin{eqnarray}}
\newcommand{\eeqar}{\end{eqnarray}}
\newcommand{\beqarst}{\begin{eqnarray*}}
\newcommand{\eeqarst}{\end{eqnarray*}}
\newcommand{\kb}[2]{| #1\rangle\!\langle #2 |}
\newcommand {\prob} [1] {\Fn{\Pr}{#1}}
\newcommand{\cO}{\mathcal{O}}
\begin{document}

\title{Quantum Log-Approximate-Rank Conjecture is also False}

\author{
Anurag Anshu \footnote{Institute for Quantum Computing and Department of Combinatorics and Optimization, University of Waterloo, and Perimeter Institute for Theoretical Physics, \texttt{aanshu@uwaterloo.ca}} \qquad
Naresh Goud Boddu\footnote{Center for Quantum Technologies, National University of Singapore, \texttt{e0169905@u.nus.edu}}\qquad
Dave Touchette\footnote{Institute for Quantum Computing and Department of Combinatorics and Optimization, University of Waterloo, and Perimeter Institute for Theoretical Physics, touchette.dave@gmail.com}
}

\hypersetup{pageanchor=false} 
\date{}
\maketitle

\begin{abstract}
In a recent breakthrough result, Chattopadhyay, Mande and Sherif [ECCC TR18-17] showed an exponential separation between the log approximate rank and randomized communication complexity of a total function $f$, hence refuting the log approximate rank conjecture of Lee and Shraibman [2009]. We provide an alternate proof of their randomized communication complexity lower bound using the information complexity approach. Using the intuition developed there, we derive a polynomially-related quantum communication complexity lower bound using the quantum information complexity approach, thus providing an exponential separation between the log approximate rank and quantum communication complexity of $f$. Previously, the best known separation between these two measures was (almost) quadratic, due to Anshu, Ben-David, Garg, Jain, Kothari and Lee [CCC, 2017]. This settles one of the main question left open by Chattopadhyay,  Mande and  Sherif, and refutes the quantum log approximate rank conjecture of Lee and Shraibman [2009]. Along the way, we develop a Shearer-type protocol embedding for product input distributions that might be of independent interest.
\end{abstract}


\thispagestyle{empty}
\clearpage
\section{Introduction}
\label{sec:intro}
\hypersetup{pageanchor=true} 
\setcounter{page}{1}

Communication complexity concerns itself with characterizing the minimum number of bits that distributed parties need to exchange in order to accomplish a given task (such as computing a function $F$). Over the years, it has established striking connections with various areas of complexity theory and information theory, providing tools for solving central problems in such domains. Since it is in general hard to pin down precisely the communication cost of a task, various lower bound methods have been developed over the years. One such method is the logarithm of the rank of the matrix $M_F$ that encodes the values the function $F$ takes on various inputs. More precisely, this matrix is defined as $M_F(x,y) = F(x,y)$. The following well known conjecture posits that this lower bound is polynomially tight for the deterministic communication complexity of $F$.  

\begin{conjecture}[Log-Rank Conjecture,~\cite{LS88}]
\label{logrankconj}
There exists a universal constant $\alpha$ such that the deterministic communication complexity of every total Boolean function $F$ is $ \cO(\log^{\alpha}(\rank(M_F)))$.  
\end{conjecture}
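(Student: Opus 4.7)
The log-rank conjecture is a central open problem, so the plan below outlines the strongest known line of attack and indicates where one would need a new idea rather than claiming a complete argument. The natural strategy is to establish the structural lemma that every $\B$-valued matrix $M$ of rank $r$ contains a monochromatic combinatorial rectangle of relative size at least $2^{-\polylog(r)}$. Once this is in hand, a standard protocol-tree recursion delivers the conjecture: Alice and Bob agree on the large monochromatic rectangle $R = A \times B$, which is a function of $M$ alone, each sends one bit indicating whether their input lies in $A$ or $B$, and on the off-diagonal branches they recurse on submatrices whose rank has strictly dropped. Summing the $\cO(\polylog r)$ per-level cost over the $\cO(\polylog r)$-depth recursion yields the desired upper bound.

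First I would try to produce such a rectangle via a spectral-combinatorial argument. A rank-$r$ matrix has its rows lying in an $r$-dimensional subspace, so by a volume/covering argument many rows cluster into relatively few types; inside the heaviest cluster the rows are nearly identical, and one can then perform a density-increment step, using the zero entries to peel off a truly monochromatic sub-rectangle. An alternative route is the approximate-rank bridge of Nisan--Wigderson combined with approximate-duality results of Ben-Sasson, Lovett, and Ron-Zewi, which convert a low-rank structure into an additive-combinatorial statement about sumsets in $\mathbb{F}_2^{r}$, to which one can then apply Freiman--Ruzsa-type theorems to extract structured rectangles.

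The main obstacle is the exponential gap between what current techniques yield and what the conjecture demands. Lovett's $\cO(\sqrt{r}\log r)$ bound, the state of the art, is obtained by a two-stage argument: a discrepancy lemma produces a rectangle of bias $\Omega(1/\sqrt{r})$, and a rebalancing step converts bias into relative size. Both stages lose a polynomial factor, and there is no known way to iterate them without compounding the loss. Pushing to $\polylog(r)$ appears to require a qualitatively new structural theorem for low-rank Boolean matrices, perhaps modeled on the recent quasipolynomial Freiman--Ruzsa theorem over $\mathbb{F}_2$. I would therefore expect the main effort to go into adapting those quasipolynomial slice-covering techniques to the matrix-rank setting; that step, and not the protocol construction, is where the real work lies.
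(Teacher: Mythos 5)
There is nothing to compare here: the statement you were asked about is Conjecture~\ref{logrankconj}, the Log-Rank Conjecture, which the paper does not prove and does not claim to prove. It is quoted only as background; the paper's actual contribution is in the opposite direction, refuting the randomized and quantum \emph{approximate}-rank analogues (Conjectures~\ref{logaprankconj} and~\ref{Qlogaprankconj}) via the $\SINK\circ\XOR$ function. So any ``proof'' of Conjecture~\ref{logrankconj} would be a major new result, not a reconstruction of something in this paper.

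Your write-up, to its credit, does not pretend otherwise: it is a research plan, and the gap you yourself identify is exactly the open content of the conjecture. The reduction you describe (large monochromatic rectangle implies low deterministic communication) is the standard Nisan--Wigderson argument, but note that even this step needs more care than ``rank strictly drops on the off-diagonal branches'': a drop by one per round is useless, and what one actually uses is that for a monochromatic rectangle $A\times B$ the ranks of the two complementary blocks sum to at most $\rank(M_F)+1$, so one branch halves the rank, giving recursion depth $\cO(\log \rank(M_F))$ at a per-level cost of the log of the rectangle's density. The genuinely missing piece is the structural lemma that every rank-$r$ Boolean matrix contains a monochromatic rectangle of relative size $2^{-\polylog(r)}$; the best known bound (Lovett) gives only $2^{-\cO(\sqrt{r}\log r)}$, hence deterministic communication $\cO(\sqrt{r}\log r)$, and neither the spectral clustering sketch nor the approximate-duality route you mention is known to close that exponential gap. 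In short: no error of reasoning, but no proof either --- and none exists in the paper or, currently, anywhere else.
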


See Ref.~\cite{ANS18} and reference therein for more details about this and the other conjectures discussed in this work.
A natural randomized analogue of Conjecture \ref{logrankconj} is the following, comparing randomized communication complexity to the logarithm of the approximate rank rather than actual rank of $M_F$. (See Section~\ref{sec:ccomm} for definitions.)

\begin{conjecture}
[Log-Approximate-Rank Conjecture,~\cite{LS09}]
\label{logaprankconj}
 There exists a universal constant $\alpha$ such
that the randomized communication complexity (with error $\frac{1}{3}$) of every total Boolean function $F$ is $ O(\log^{\alpha}(\rank_{1/3}(M_F)))$. 
\end{conjecture}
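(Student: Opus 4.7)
The plan is to upper-bound the randomized communication complexity $\R(F)$ by a polynomial in $\log \arank(M_F)$. Let $r = \arank(M_F)$. By definition there is a real matrix $A$ of rank at most $r$ satisfying $|A(x,y) - F(x,y)| \le 1/3$ for every $(x,y)$, and writing $A = UV^{\top}$ with $U \in \bR^{|\X| \times r}$ and $V \in \bR^{|\Y| \times r}$ gives Alice a row $u_x$ and Bob a row $v_y$ with $A(x,y) = \langle u_x, v_y \rangle$. It therefore suffices to design a protocol that estimates $\langle u_x, v_y \rangle$ within additive error $1/6$ with constant probability using $\polylog(r)$ bits of communication, after which rounding the estimate immediately recovers $F(x,y) \in \{0,1\}$.

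The natural attempt is an importance-sampling protocol built on shared randomness. Alice and Bob publicly draw $T = \polylog(r)$ coordinates $i_1, \ldots, i_T \in [r]$ from a distribution $p$ (for instance uniform, or $p_i \propto \|U(\cdot,i)\|_2 \cdot \|V(\cdot,i)\|_2$), exchange the relevant entries $u_x(i_t)$ and $v_y(i_t)$ rounded to $b = \polylog(r)$ bits, and each computes the unbiased estimator $\widehat{A} = \frac{1}{T} \sum_t \frac{u_x(i_t) v_y(i_t)}{p_{i_t}}$. A Hoeffding argument then shows $\widehat{A}$ concentrates sharply around $\langle u_x, v_y \rangle$, provided the per-coordinate magnitudes $|u_x(i)|$, $|v_y(i)|$ and the reciprocal weights $1/p_i$ are all polynomially bounded in $r$. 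Combined with the $\polylog(r)$ cost per exchanged sample, this would yield the desired $\polylog(r)$ protocol.

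The principal obstacle is precisely this magnitude-control step. A generic rank-$r$ decomposition of $A$ may have entries exponentially large in $r$ with wildly skewed coordinate profiles, so both the variance of any sampling estimator and the discretization cost per transmitted entry can blow up past $\polylog(r)$. A successful proof therefore seems to require a new structural theorem for total Boolean functions, asserting that whenever $\arank(M_F) = r$ one can find an approximating factorization $A = UV^{\top}$ with $\max_{x,i}|u_x(i)|, \max_{y,i}|v_y(i)| \le \poly(r)$ and well-behaved row norms --- essentially a polynomial equivalence between $\arank$ and the $\gamma_2$-norm on the class of total Boolean matrices. Establishing such an equivalence, without an exponential loss, is the true technical heart of the conjecture and is precisely the point at which prior approaches have halted; any plan that does not confront this structural question head-on will at best produce a bound polynomial in $r$ rather than in $\log r$.
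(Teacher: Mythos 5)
There is a fundamental problem here that is not a fixable technical gap: the statement you are trying to prove is a \emph{conjecture}, and the entire point of this paper (following Chattopadhyay, Mande and Sherif~\cite{ANS18}) is that it is \emph{false}. The function $\SINK\circ\XOR$ on ${m \choose 2}+{m \choose 2}$ input bits is an explicit counterexample: by \thm{theorem1} its approximate rank satisfies $\log\rank_{1/3}(M_{\SINK\circ\XOR}) \leq 4\log m + o(\log m)$, while \thm{theo:SINKtoEQcl} together with \lem{lem:cutpasteEQ} shows that any randomized protocol with error at most $\frac{1}{4}$ has information cost at least $\frac{m}{864}$ under the uniform distribution, hence $R_{1/4}(\SINK\circ\XOR) = \Omega(m)$. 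This is an exponential separation between randomized communication complexity and log-approximate-rank, so no proof of the conjectured $O(\log^{\alpha}\rank_{1/3}(M_F))$ upper bound can exist, by any method.

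Your own plan already contains the seed of its refutation. You correctly identify that the importance-sampling protocol only works if one can find an approximating factorization $A = UV^{\top}$ with entries and sampling weights polynomially bounded in $r = \rank_{1/3}(M_F)$, and you propose as the ``technical heart'' a structural theorem asserting that such a well-conditioned factorization always exists (essentially a polynomial equivalence between approximate rank and the approximate $\gamma_2$-norm for total Boolean matrices). But that structural theorem is itself false, and $M_{\SINK\circ\XOR}$ witnesses this: if it held, your protocol would compute $\SINK\circ\XOR$ with $\polylog(m)$ bits of randomized communication, contradicting the $\Omega(m)$ lower bound above (and even the weaker $\Omega(m^{1/3})$ quantum lower bound of Corollary~\ref{cor:qlb}, since a cheap randomized protocol is in particular a cheap quantum one). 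So the obstacle you flagged is not a missing lemma awaiting a cleverer argument; it is exactly where the conjecture breaks. The correct relationship to this paper is that your target statement plays the role of the refuted Conjecture~\ref{logaprankconj}, and the paper's contribution is the lower-bound machinery (Shearer-type embedding from $\SINK\circ\XOR$ to Equality plus a cut-and-paste bound on the information cost of Equality under the uniform distribution) that rules out any protocol of the kind you are trying to build.
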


In a recent breakthrough work~\cite{ANS18}, Chattopadhyay, Mande and Sherif establish that Conjecture \ref{logaprankconj} is false by exhibiting a function with an exponential separation between the randomized communication complexity (with constant error) and Log-Approximate-Rank. Their function is a composition of the $2$-bit $\XOR$ function and a function that they call $\SINK$.  The work~\cite{ANS18} asked if their function had implications for the following quantum version of Conjecture \ref{logaprankconj}.

\begin{conjecture}[Quantum Log-Approximate-Rank Conjecture,~\cite{LS09}]
\label{Qlogaprankconj} 
There exists a universal constant $\alpha$, such that the quantum communication complexity of every total Boolean function $F$ is $ O(\log^{\alpha}(\rank_{1/3}(M_F)))$.
\end{conjecture}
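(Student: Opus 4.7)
The plan is to refute Conjecture~\ref{Qlogaprankconj} using the same total function $F = \SINK \circ \XOR$ from~\cite{ANS18}, for which $\log \arank(M_F) = \polylog(n)$. It therefore suffices to prove a polynomial lower bound $\Q(F) = n^{\Omega(1)}$. I would work inside the quantum information complexity (QIC) framework, using $\Q(F) \ge \QIC_{\mu}(F)$ for a carefully chosen input distribution $\mu$. Since $\SINK$ is applied coordinate-wise to independent $\XOR$ gadgets, a natural choice is a product distribution $\mu = \nu^{\otimes n}$, with $\nu$ being a hard single-coordinate distribution identified during the classical information-complexity reproof that the abstract describes as a stepping stone.

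The argument would proceed in three steps. First, isolate a single-coordinate primitive task---a $\XOR$-based subproblem extracted from the local structure of $\SINK$---and lower-bound its QIC by some $\tOmega(1/g(n))$ using standard quantum information tools (quantum cut-and-paste, Pinsker/Holevo-type bounds, or a reduction to a well-studied primitive such as equality). Second, given any protocol $\Pi$ computing $F$ on $\mu$, construct for each coordinate $i$ an embedded protocol $\Pi_i$ that plants fresh samples of $\nu$ in the remaining coordinates via shared randomness and then runs $\Pi$ to extract the answer at coordinate $i$. Third, invoke a Shearer-type inequality to conclude $\sum_i \QIC(\Pi_i) \le O(\QIC(\Pi))$, which yields $\QIC(\Pi) = \tOmega(n/g(n))$ after averaging. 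Since $g(n)$ will be at worst $\polylog(n)$, the resulting quantum lower bound remains $n^{\Omega(1)}$ and is exponentially larger than $\polylog(n)$, refuting the conjecture.

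The main obstacle lies in the third step. Classically, Shearer's inequality exploits the product form of $\mu$ by conditioning on the other coordinates; quantumly, however, conditioning on quantum registers can both destroy product structure and \emph{increase} mutual information between Alice and Bob. The technical innovation I expect to carry out---matching the abstract's mention of a ``Shearer-type protocol embedding for product input distributions''---is an embedding in which the planted coordinates are sampled \emph{classically} via shared randomness, so that the coherent messages of $\Pi$ act only on the target coordinate and one is left conditioning on classical registers alone. Verifying that this embedding simultaneously preserves the correctness of $\Pi$, yields the correct marginal on coordinate $i$, and satisfies a quantum chain-rule style additivity is the technical heart of the refutation; the rest of the argument should then proceed in close analogy with the classical information-theoretic proof that the abstract promises as a warm-up.
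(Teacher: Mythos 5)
There is a genuine gap, concentrated exactly where you locate ``the technical heart'': your embedding plants the non-target coordinates via \emph{shared} randomness, and with that choice the Shearer step fails even classically. If both parties know $x_{\bar S}$, then the information cost of the embedded protocol contains terms of the form $\I(X_S : \Pi \mid Y, X_{\bar S})$, and conditioning on Alice's own remaining coordinates can inflate the information far beyond what Shearer controls: if $\Pi$ simply announces the parity of all of Alice's bits, then $\I(X:\Pi\mid Y)=1$ while $\I(X_S:\Pi\mid Y,X_{\bar S})=1$ for \emph{every} $S$, so no bound of the form $\mathbb{E}_S \I(X_S:\Pi\mid Y,X_{\bar S},S)\le O(1/m)\cdot \I(X:\Pi\mid Y)$ can hold. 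The paper's embedding instead has each party \emph{privately} sample (in the quantum case, coherently purify) its own non-target inputs, so that the term tracking leakage about $X_S$ is conditioned on $Y_{\bar S}$ (harmlessly absorbed into the ``$V$'' side of the Shearer lemma) but never on $X_{\bar S}$. Also note that the reduction is not per-$\XOR$-coordinate: $S$ ranges over the $m$ edge-sets $E_{v_i}$ of size $m-1$ (each coordinate lies in exactly two of them), the primitive is $\Eq{}$ on $m-1$ bits under the uniform distribution, and its nontrivial information lower bound comes from a cut-and-paste argument, since $\Eq{}$ is otherwise trivial on uniform inputs.

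The second gap is your claimed additivity $\sum_i \QIC(\Pi_i)\le O(\QIC(\Pi))$ with only polylogarithmic loss. This is precisely what the paper could \emph{not} establish for $\QIC$ itself: the conditioning registers in $\QIC$ are quantum memories together with input purifications, and the Shearer-type inequality for quantum information applies to unconditional mutual informations with independent $U_i$'s. The paper's workaround is the superposed-Holevo cost $\SQIC$, which is defined only for product input distributions and sandwiches $\QIC$ up to a factor of the number of rounds $t$; the embedding is carried out at the level of $\SQIC$, and the $\Eq{}$ lower bound obtained from quantum cut-and-paste is itself round-dependent ($\HQIC \geq \Omega(1/t)$), not $\Omega(1/\polylog)$. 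The net result is $\QIC(\Pi,\mu)=\Omega(m/t^2)$ and, after optimizing over $t$, a communication bound of $\Omega(m^{1/3})$ --- still exponentially larger than $\log\arank(M_{\SINK\circ\XOR})=O(\log m)$, so the refutation survives, but your round-independent additivity and the ``$g(n)$ at worst $\polylog(n)$'' estimate are unsupported as stated and would need to be replaced by this round-dependent machinery (or a new idea) to make the argument go through.
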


Here we prove that Conjecture \ref{Qlogaprankconj} is false as well.  Before proceeding to the statement of our main result, we define the $\SINK$ function.

\begin{definition}[$\SINK$~\cite{ANS18}] 
\label{sink}
$\SINK$ function is defined on a complete directed graph of $m$ vertices, using $m \choose 2$
variables $z_{i,j}, i < j \in [m]$, in the following way. Let $z_{i,j }= 1$ if there is a directed edge from vertex $v_i$ to $v_j$ and $z_{i,j }= 0$ if there is a  directed edge from vertex $v_j$ to $v_i$. The function $\SINK$ computes whether or not there is a
sink in the graph. In other words, $\SINK(z) = 1 $ iff $ \exists i \in  [m]$ such that all edges adjacent to $v_i$ are incoming.
\end{definition}

The function of interest for communication complexity is $\SINK \circ \XOR^{\otimes {m\choose 2}}$, where each $\XOR$ takes as input one bit from Alice and one from Bob. For simplicity of notation, we will denote this function as $\SINK\circ \XOR$. Our main theorem is as follows, which lower bounds the quantum information complexity ($\QIC$) of $\SINK \circ \XOR$.

\begin{theorem}
\label{theo:main}
Any t-round entanglement assisted protocol for $\SINK \circ \XOR$ achieving error $1/5$ satisfies $\QIC(\Pi, \mu^{\otimes {m \choose 2}}) \in \Omega (\frac{m}{t^2})$, with $\mu$ being the uniform distribution on 1+1 bits \footnote{A random variable on $a+b$ bits takes values over $a$ bits on Alice's side and $b$ bits on Bob's side.}.
\end{theorem}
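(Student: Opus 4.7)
The plan is to lower bound $\QIC(\Pi,\mu^{\otimes\binom{m}{2}})$ via a reduction to per-vertex subproblems, combined through a Shearer-type protocol embedding tailored to the product input distribution. We exploit the decomposition $\SINK = \bigvee_{i \in [m]} f_i$, where $f_i$ indicates that vertex $v_i$ is a sink and depends only on the $m-1$ edges of the star $S_i$ incident to $v_i$. Since each edge of the complete graph lies in exactly two stars, the star cover of $K_m$ is $2$-regular, which is the source of the Shearer factor of $1/2$.

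First I would define, for each $i \in [m]$, an embedded protocol $\Pi_i$ that uses $\Pi$ together with internal sampling of the edges outside $S_i$ to solve a subproblem on $S_i$. The product structure of $\mu^{\otimes \binom{m}{2}}$ is essential here: Alice and Bob can (using shared randomness or entanglement) sample the non-star edges from a carefully chosen distribution---one that, with constant probability, makes the output of the full protocol $\Pi$ coincide with $f_i$---while leaving the marginal on $S_i$ equal to $\mu^{\otimes(m-1)}$. The resulting subproblem on $S_i$ reduces to a conjunction of $m-1$ $\XOR$ gadgets (up to public bit-flips), in the spirit of a disjointness-like instance.

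Next, I would establish a Shearer-type inequality at the level of QIC under product distributions: for the embedding above, each edge is simulated in exactly two of the $\Pi_i$'s, which should yield
\[
\QIC(\Pi, \mu^{\otimes \binom{m}{2}}) \;\geq\; \frac{1}{2} \sum_{i=1}^m \QIC(\Pi_i, \mu^{\otimes (m-1)}).
\]
Proving this quantum analogue of Shearer's inequality---at the level of protocol information rather than merely raw entropies---is the key technical novelty and requires careful handling of entanglement and the non-commutativity of quantum conditioning.

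Finally, I would lower bound each $\QIC(\Pi_i, \mu^{\otimes(m-1)})$ by $\Omega(1/t^2)$ via a $t$-round quantum information lower bound for the star subproblem, in the style of bounded-round quantum lower bounds for disjointness (where the $1/t^2$ penalty arises from standard round-by-round continuity of QIC under message compression). Summing over the $m$ vertices and invoking the Shearer bound then gives $\QIC(\Pi, \mu^{\otimes \binom{m}{2}}) = \Omega(m/t^2)$. The main obstacle is the Shearer-type protocol embedding itself: ensuring that the embedded $\Pi_i$'s jointly account for the information content of $\Pi$ with only a factor-$2$ overhead, while respecting the quantum-information-theoretic constraints, is delicate and appears to require the product structure of $\mu^{\otimes \binom{m}{2}}$ in an essential way.
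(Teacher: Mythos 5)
Your overall architecture (decompose $\SINK$ into the $m$ star functions, embed a per-star subproblem into $\Pi$ by privately sampling the off-star edges, and combine via a Shearer-type inequality with the factor $2$ coming from the $2$-regular star cover) is the same as the paper's, but the two load-bearing steps are asserted rather than proved, and one of them is precisely what the paper could not do. Your Shearer inequality is stated directly at the level of $\QIC$, namely $\QIC(\Pi,\mu^{\otimes\binom{m}{2}}) \geq \frac{1}{2}\sum_i \QIC(\Pi_i,\mu^{\otimes(m-1)})$ with no round-dependent loss. The paper explicitly notes that it is not clear how to make the protocol embedding work for $\QIC$; it instead introduces the superposed-Holevo cost $\SQIC$ (which keeps the purification registers $R_X R_Y$ and uses the permutation invariance of the product distribution, cf.\ Lemmas \ref{lem:qu-shearer-emb1} and \ref{lem:qshearer}), proves the Shearer embedding at the $\SQIC$ level, and pays a factor of $t$ through $\QIC \geq \frac{1}{t}\SQIC \geq \frac{1}{t}\HQIC$. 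Whether the embedding can be improved to work for $\QIC$ directly is listed in the paper's conclusion as an open problem, so calling it ``the key technical novelty'' and leaving it unargued is a genuine gap: your claimed factorization of the $t^2$ (all of it in the per-instance bound, none in the embedding) has no supporting argument.

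The second gap is the per-star lower bound. The star subproblem is not disjointness-like: conditioned on the public shift by $z_{v_i}$, it is exactly Equality on $m-1$ bits, whose quantum communication with shared entanglement is $O(1)$, and under the uniform distribution the constant-$0$ protocol errs with probability only $2^{-(m-1)}$. Hence no ``bounded-round disjointness-style'' or distributional-error argument can yield $\QIC(\Pi_i,\mu^{\otimes(m-1)}) = \Omega(1/t^2)$; the bound must exploit the \emph{worst-case} error guarantee of the embedded protocol on the exponentially unlikely diagonal inputs. The paper does this via the average-encoding theorem plus the quantum cut-and-paste lemma (Lemma \ref{lem:quantum_cut_paste}), obtaining $\HQIC \geq \Omega(1/t)$ for any $t$-round worst-case-$1/3$-error protocol for $\Eq{}$ on the uniform distribution, and only then converts to $\QIC$ with the second factor of $t$. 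Your proposal also needs, and does not verify, that the embedded protocol retains small \emph{worst-case} error: the off-star sampling creates a spurious sink only with probability at most $(m-1)/2^{m-2}$ (a union bound), which is what keeps the error below $1/3$ so the Equality bound applies; ``with constant probability the output coincides with $f_i$'' would not suffice.
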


The desired lower bound on entanglement assisted quantum communication complexity ($\Qent_{\frac{1}{3}}$) of $\SINK \circ \XOR$ follows by optimizing $\max (t, m/t^2)$ over the number of round $t$. 

\begin{corollary}
\label{cor:qlb}
It holds that $\Qent_{1/3}(\SINK \circ \XOR) \in \Omega (m^{1/3})$.
\end{corollary}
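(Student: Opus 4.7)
The plan is to derive the corollary from \thm{main} by the standard ``round elimination / round-communication tradeoff'' style argument, treating the round count $t$ and the total communication cost $c$ as two separate parameters and combining two independent inequalities between them.

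First, given any entanglement-assisted protocol computing $\SINK \circ \XOR$ with error $1/3$, communication cost $c$ qubits, and $t$ rounds, I would amplify the success probability to error at most $1/5$ by running $O(1)$ independent copies of the protocol in parallel and taking the majority of the outputs. This preserves the round complexity up to a constant factor and multiplies the communication by a constant, yielding a protocol $\Pi$ with error $1/5$, $O(c)$ communication and $O(t)$ rounds. This is harmless for the $\Omega(m^{1/3})$ conclusion.

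Next, I would invoke the standard relation $\QIC(\Pi,\nu) \leq \QCC(\Pi)$ that holds for every input distribution $\nu$ (information cost is at most communication cost). Setting $\nu = \mu^{\otimes {m\choose 2}}$ and applying \thm{main} to the amplified protocol gives
\[
O(c) \;\geq\; \QIC(\Pi, \mu^{\otimes {m\choose 2}}) \;\geq\; \Omega\!\left(\frac{m}{t^2}\right).
\]
Independently, the number of rounds is at most the total communication, $t \leq c$, since each round in a non-trivial protocol transmits at least one qubit. Substituting $t \leq O(c)$ into the bound above yields $c \;\geq\; \Omega(m/c^2)$, i.e., $c^3 \in \Omega(m)$, and therefore $c \in \Omega(m^{1/3})$, which is exactly the claimed lower bound $\Qent_{1/3}(\SINK \circ \XOR) \in \Omega(m^{1/3})$.

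No step here is a substantial obstacle: the heavy lifting is entirely in \thm{main}. The only small subtlety is to make sure that the inequality $\QIC \leq \QCC$ is invoked for the amplified protocol and the same product distribution $\mu^{\otimes {m\choose 2}}$ used in the theorem, and that the $O(\cdot)$ factors from error amplification are absorbed into the asymptotic $\Omega(m^{1/3})$ bound. The ``optimization of $\max(t, m/t^2)$'' mentioned in the paper is precisely the algebraic manipulation in the last display, balanced at $t \asymp m^{1/3}$.
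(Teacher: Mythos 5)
Your proposal is correct and is essentially the paper's own argument: the paper's one-line proof ``optimize $\max(t, m/t^2)$ over $t$'' is exactly your combination of $t \leq c$ with $c \gtrsim \QIC(\Pi,\mu^{\otimes{m\choose 2}}) \in \Omega(m/t^2)$ (the paper's stated relation is $2\QCC(\Pi)\geq\QIC(\Pi,\mu)$, an immaterial constant), balanced at $t \asymp m^{1/3}$. Your explicit constant-overhead amplification from error $1/3$ to $1/5$ is a fine way to bridge the error parameters and does not change the asymptotics.
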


Hence, combining with the following upper bound on the log-approximate-rank due to Ref.~\cite{ANS18}, the $\SINK \circ \XOR$ function witnesses an exponential separation between log-approximate-rank and quantum communication, and refutes the quantum log-approximate-rank conjecture of Lee and Shraibman \cite{LS09}.
\begin{theorem}[~\cite{ANS18}]
\label{theorem1}
It holds that 
\begin{enumerate}
\item $\log\rank_{1/3}(M_{\SINK\circ\XOR}) \leq 4\log m + o(\log m)$
\item  $\log\rank^{+}_{1/3}(M_{\SINK\circ\XOR}) = O(\log^2 m)$.
\end{enumerate}
\end{theorem}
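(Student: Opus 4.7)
The plan rests on an exact decomposition of $\SINK$: since any tournament admits at most one sink,
\[
\SINK(z) = \sum_{i=1}^m f_i(z), \qquad f_i(z) = \prod_{j<i} z_{j,i} \cdot \prod_{j>i}(1-z_{i,j}),
\]
where $f_i$ is the indicator that $v_i$ is the sink. The decomposition is exact (not approximate) because the $f_i$ are pairwise mutually exclusive. After composing with $\XOR$, each summand $f_i(x\oplus y)$ depends on only $2(m-1)$ input bits and factors as a product over $m-1$ independent single-edge predicates of the form $x_{a,b}\oplus y_{a,b}$ or $1{-}(x_{a,b}\oplus y_{a,b})$. Up to permuting rows/columns and tensoring with the all-ones matrix on the irrelevant coordinates, $M_i := [f_i(x\oplus y)]_{x,y}$ equals the identity matrix $I_{2^{m-1}}$. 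Hence $M_{\SINK\circ\XOR} = \sum_{i=1}^m M_i$, reducing both parts to bounding the approximate (nonnegative) rank of an identity matrix.

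For part (1), I would invoke the Johnson--Lindenstrauss-style bound $\rank_{\epsilon}(I_N) = O(\log N / \epsilon^2)$ (via random unit-vector embeddings). Setting $N = 2^{m-1}$ and $\epsilon = 1/(3m)$ gives $\rank_{1/(3m)}(M_i) = O(m\cdot m^2) = O(m^3)$, and summing the approximations of the $m$ terms yields a nonneg rank bound $\rank_{1/3}(M_{\SINK\circ\XOR}) \le m\cdot O(m^3) = O(m^4)$; the total $L_\infty$ error is at most $m \cdot 1/(3m) = 1/3$ by the triangle inequality. Taking logarithms gives $\log\rank_{1/3}(M_{\SINK\circ\XOR}) \le 4\log m + o(\log m)$.

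For part (2), I would bound the approximate nonnegative rank of the identity matrix with error $1/(3m)$ via a public-coin randomized protocol for $\Eq{m-1}$ built from $O(\log m)$ iterated rounds of pairwise-independent hashing. At each round, the hash converts the rank-$1$ rectangle structure of deterministic restrictions into a convex combination with a $\poly(m)$ multiplicative cost on the nonneg rank (one factor for the size of the hash family, one for the hash range). After $O(\log m)$ rounds the compounded cost is $\rank^{+}_{1/(3m)}(M_i) = 2^{O(\log^2 m)}$. Summing over $i$ preserves nonnegativity (nonneg rank is subadditive under matrix addition) and the $L_\infty$ error budget, yielding $\rank^{+}_{1/3}(M_{\SINK\circ\XOR}) = m\cdot 2^{O(\log^2 m)} = 2^{O(\log^2 m)}$, hence $\log\rank^{+}_{1/3} = O(\log^2 m)$.

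The main technical obstacle is in part (2): one must actually realize the iterated-hashing scheme as a bona fide nonneg rank decomposition of $I_{2^{m-1}}$ with the claimed $2^{O(\log^2 m)}$ size and $1/(3m)$ uniform approximation error. Once this is in hand, the tournament-structure decomposition $\SINK = \sum_i f_i$ and the aggregation of per-term approximations are straightforward since both rank and nonneg rank are subadditive and the approximation errors combine additively.
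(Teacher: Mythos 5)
Note first that the paper you are working from does not prove this theorem at all: it is imported verbatim from~\cite{ANS18}, so the comparison is against that paper's argument. Your part (1) is essentially that argument and is correct. The decomposition $\SINK(z)=\sum_{i}f_i(z)$ is exact because the sink-indicators are mutually exclusive; each $M_i$ is (after the shift by $z_{v_i}$ and a tensor with an all-ones block on the irrelevant coordinates) a $2^{m-1}\times 2^{m-1}$ identity-type matrix; the random-unit-vector (Alon/Johnson--Lindenstrauss) bound $\rank_{\epsilon}(I_N)=O(\log N/\epsilon^{2})$ with $\epsilon=1/(3m)$ gives $O(m^{3})$ per block, and subadditivity of rank together with additivity of the entrywise errors gives $\rank_{1/3}(M_{\SINK\circ\XOR})=O(m^{4})$, i.e.\ $4\log m+O(1)$. (One slip of wording: in part (1) you call the resulting bound a ``nonneg rank bound''; the JL approximation is signed, but part (1) only needs ordinary rank, so this is harmless.)

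Part (2) has a genuine gap, precisely at the step you yourself flag. A truly pairwise-independent hash family on domain $\{0,1\}^{m-1}$ has size $2^{\Omega(m)}$ (at least roughly the domain size), not $\poly(m)$, and the conversion from a public-coin protocol to a nonnegative factorization pays for the full support of the public randomness: the acceptance matrix is a convex combination over the randomness of sums of at most $2^{c}$ rank-one rectangles, so the count is (randomness support)$\times 2^{c}$. Thus your per-round ``$\poly(m)$ for the size of the hash family'' is unjustified, the naive count is $2^{\Omega(m)}$ no matter how many rounds you iterate, and the iteration itself is beside the point (if the family really had size $\poly(m)$, a single hash with range $3m$ would already give collision error $1/(3m)$). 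The standard repair is either Newman's theorem, reducing the public randomness to $O(\log m)$ bits at the cost of an additive $1/\poly(m)$ error, or an explicit $\epsilon$-almost pairwise-independent (small-bias) family of size $\poly(m)$. With either fix, each shifted-equality block satisfies $\rank^{+}_{1/(3m)}\le\poly(m)$, and summing the $m$ blocks gives $\log\rank^{+}_{1/3}(M_{\SINK\circ\XOR})=O(\log m)$ --- in fact stronger than the $O(\log^{2}m)$ you set out to prove, and consistent with the improved bound of~\cite{ANS18} mentioned right after the theorem statement.
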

In a subsequent version of~\cite{ANS18}, Chattopadhyay et. al. improved the upper bound on $\log\rank^{+}_{1/3}(M_{\SINK\circ\XOR})$ to $O(\log m)$.

\subsection{Independent work}

Sinha and de Wolf~\cite{SW18} used the fooling distribution method, in independent and simultaneous work, to obtain the same $\Omega(m^{1/3})$ lower bound on the quantum communication complexity of $\SINK\circ\XOR$. This differs from our techniques which we describe below.

\subsection{Proof overview}

At a high-level, our argument follows the well-established \textit{information complexity} approach~\cite{KNTZ07, CSWY01, BJKS04, JRS03b,  BarakBCR10}. We view a given function $f$ as some composition of many instances of a simpler component function $g$, and argue through a direct sum property a reduction from $g$ to $f$. This is achieved by embedding inputs to $g$ into inputs to $f$, where the remaining inputs to $f$ are sampled from some suitable distribution in order to achieve the desired direct sum property. Following this, we show a lower bound on the information complexity for g. 

In the present context, $\SINK\circ\XOR$ is a composition of many instances of the Equality function, in a way that the input bits are shared across the instances. In Ref.~\cite{ANS18}, the authors use Shearer's lemma to handle such overlap between the inputs across the instances and derive a corruption lower bound. For the reduction from $\SINK\circ\XOR$ to Equality,  we also wish to use a Shearer-type inequality. We further argue that a lower bound on information complexity of Equality (for protocols that make small error in the worst case) under uniform distribution implies a lower bound on information complexity of $\SINK\circ\XOR$. But it is not clear, a priori, that Equality should have high information cost under that distribution, as this function has trivial communication complexity under the uniform distribution. It turns out that the cut-and-paste argument of Anshu, Belovs, Ben-David, G{\"o}{\"o}s, Jain, Kothari,  Lee and  Santha~\cite{ABB+16a} yields a constant lower bound on information complexity of good protocols for Equality, even under the uniform distribution. 

Broadly, our quantum lower bound proceeds along lines similar to above. The quantum cut-and-paste argument of Anshu,  Ben-David, Garg,  Jain,  Kothari and Lee~\cite{ABGJKL17} in the quantum setting yields a round dependent lower bound on the \textit{quantum information complexity} (QIC)~\cite{KNTZ07, JRS03, JN14, Tou15, KLLR16} of good protocols for Equality, even under the uniform distribution. But the quantum version of the embedding argument requires new methods. In the classical setting, using classical information cost $\IC$, as soon as we have Alice and Bob privately sample the remaining inputs,  the Shearer-type embedding follows almost directly from a Shearer like inequality for information~\cite{GanorKR:2015}. In the quantum setting, we would similarly like to use a Shearer-type inequality for quantum information~\cite{ATYY16}. However,  it is not immediately clear how to make the protocol embedding work for quantum information cost $\QIC$. We instead settle on an alternate notion of quantum information cost  (variants of which have appeared before~\cite{JRS05, JN14, LT17, ATYY16}) that works well only for product input distributions. The argument then goes through by carefully using this notion, and it is equivalent to $\QIC$ up to a round-dependent factor. 
What we get is a Shearer-type embedding protocol for product input distributions that allows some specific pre-processing of the inputs. We provide such a general version in Section~\ref{sec:qu-shearer} in the quantum setting, while we give a more direct proof in the classical setting.

Hence, overall we get a round dependent lower bound on the quantum information complexity of $\SINK\circ\XOR$, and the round independent lower bound on quantum communication complexity follows by optimizing over the number of rounds in any good protocol.

\section{Preliminaries and notation}
\label{sec:prelim}

For integer $n \geq 1$, let $[n]$ represent the set $\{1, 2, ..., n\}.$
Let $\mathcal{X} $ and $ \mathcal{Y}$ be finite sets and $k$ be a natural number. Let
$\mathcal{X}^k $ be the set $\mathcal{X} \times ... \times \mathcal{X}$, the cross product of $\mathcal{X}$, $k$ times. Let $\mu$ be a probability distribution on $\mathcal{X}$. Let $\mu(x)$ represent
the probability of $x \in \mathcal{X}$ according to $\mu$. We write $X \sim \mu$ to denote that the random variable $X$ is distributed according to $\mu$. We use the same symbol
to represent a random variable and its distribution whenever it
is clear from the context. 
The expectation value of function $f$
on $X$ is defined as $\mathbb{E}_{x 	\leftarrow X}[f(x)] = \sum_{x \in \mathcal{X}  }\Pr(X=x)f(x)$
where $x 	\leftarrow X$ means that $x$ is drawn according to the distribution
of $X$. We say $X$ and $Y$ are independent iff for each $x \in \mathcal{X}, y \in \mathcal{Y}:\Pr(XY=xy) = \Pr(X=x) \cdot \Pr(Y=y)$. 
For joint random variables $XY$, $Y^x$ will denote the distribution of $Y|X=x$.

We now introduce some quantum information theoretic notation. We assume the reader is familiar with standard concepts in quantum computing~\cite{NC00,Wil12, Wat16}.

Let $\H$ be a finite-dimensional complex Euclidean space, i.e.,  $\mathbb{C}^n$ for some positive integer $n$ with the usual complex inner product $\langle \cdot, \cdot \rangle$, which is defined as $\langle u,v \rangle = \sum_{i=1}^n u_i^* v_i$. We will also refer to $\H$ as an Hilbert space. We will usually denote vectors in $\H$ using bra-ket notation, e.g., $|\psi\> \in \H$.

The $\ell_1$ norm (also called the trace norm) of an operator $X$ on $\H$ is $\onenorm{X}\defeq \Tr (\sqrt{X^{\dag}X})$, which is also equal to (vector) $\ell_1$ norm of the vector of singular values of $X$. 
A {\em quantum state} (or a {\em density matrix} or simply a {\em state}) $\rho$ is a positive semidefinite matrix on $\H$ with $\Tr(\rho)=1$. The state $\rho$ is said to be a {\em pure state} if its rank is $1$, or equivalently if $\Tr(\rho^2)=1$, and otherwise it is called a {\em mixed state}. 
 Let $\ket{\psi}$ be a unit vector on $\H$, that is $\langle \psi|\psi \rangle=1$.  With some abuse of notation, we use $\psi$ to represent the vector $|\psi\>$ and also the density matrix $\ketbra{\psi}$, associated with $\ket{\psi}$. Given a quantum state $\rho$ on $\H$, the {\em support of $\rho$}, denoted $\text{supp}(\rho)$, is the subspace of $\H$ spanned by all eigenvectors of $\rho$ with nonzero eigenvalues.

A {\em quantum register} $A$ is associated with some Hilbert space $\H_A$. Define $|A| \defeq \log \dim(\H_A)$. Let $\L(A)$ represent the set of all linear operators on $\H_A$. We denote by $\D(A)$ the set of density matrices on the Hilbert space $\H_A$. We use subscripts (or superscripts according to whichever is convenient) to denote the space to which a state belongs, e.g, $\rho$ with subscript $A$ indicates $\rho_A \in \H_A$. If two registers $A$ and $B$ are associated with the same Hilbert space, we represent this relation by $A\equiv B$.  For two registers $A$ and $B$, we denote the combined register as $AB$, which is associated with Hilbert space $\H_A \otimes \H_B$.  For two quantum states $\rho\in \D(A)$ and $\sigma\in \D(B)$, $\rho\otimes\sigma \in \D(AB)$ represents the tensor product (or Kronecker product) of $\rho$ and $\sigma$. The identity operator on $\H_A$ is denoted $\id_A$. 

Let $\rho_{AB} \in \D(AB)$. We define the {\em partial trace with respect to $A$} of $\rho_{AB}$ as
\[ \rho_{B} \defeq \Tr_{A}(\rho_{AB})
\defeq \sum_{i} (\bra{i} \otimes \id_{\cspace{B}})
\rho_{AB} (\ket{i} \otimes \id_{\cspace{B}}) , \]
where $\set{\ket{i}}_i$ is an orthonormal basis for the Hilbert space $\H_A$.
The state $\rho_B\in \D(B)$ is referred to as a {\em reduced density matrix} or a {\em marginal state}. Unless otherwise stated, a missing register from subscript in a state will represent partial trace over that register. Given  $\rho_A\in\D(A)$, a {\em purification} of $\rho_A$ is a pure state $\rho_{AB}\in \D(AB)$ such that $\Tr_{B}(\rho_{AB})=\rho_A$. Any quantum state has a purification using a register $B$ with $|B|\leq |A|$. The purification of a state, even for a fixed $B$, is not unique as any unitary applied on register $B$ alone does not change $\rho_A$.

An important class of states that we will consider are the {\em classical quantum states}. They are of the form $\rho_{AB} = \sum_a \mu(a) \ketbra{a}_A\otimes \rho^a_B$, where $\mu$ is a probability distribution. In this case, $\rho_A$ can be viewed as a probability distribution and we shall continue to use the notations that we have introduced for probability distribution, for example,  $\expec_{a\leftarrow A}$ to denote the average $\sum_a \mu(a)$. 

A quantum {\em super-operator} (or a {\em quantum channel} or a {\em quantum operation}) $\E: A\rightarrow B$ is a completely positive and trace preserving (CPTP) linear map (mapping states from $\mathcal{D}(A)$ to states in $\mathcal{D}(B)$). The identity operator in Hilbert space $\H_A$ (and associated register $A$) is denoted $\id_A$.  A {\em unitary} operator $\U_A:\H_A \rightarrow \H_A$ is such that $\U_A^{\dagger}\U_A = \U_A \U_A^{\dagger} = \id_A$. The set of all unitary operations on register $A$ is  denoted by $\mathcal{U}(A)$. 

A $2$-outcome quantum measurement is defined by a collection $\{M, \id - M\}$, where $0 \preceq M \preceq \id$ is a positive semidefinite operator, where $A\preceq B$ means $B-A$ is positive semidefinite. Given a quantum state $\rho$, the probability of getting outcome corresponding to $M$ is $\Tr(\rho M)$ and getting outcome corresponding to $\id - M$ is $1-\Tr(\rho M)$.

\subsubsection{Distance measures for quantum states}

We now define the distance measures we use and some properties of these measures. Before defining the distance measures, we introduce the concept of {\em fidelity} between two states, which is not a distance measure but a similarity measure. Note that all the notions introduced below also apply to classical random variables, when viewed as diagonal quantum states in some basis.

\begin{definition}[Fidelity]
 Let $\rho_A,\sigma_A \in \D(A)$ be quantum states. The fidelity between $\rho$ and $\sigma$ is defined as
$$\F(\rho_A,\sigma_A)\defeq\onenorm{\sqrt{\rho_A}\sqrt{\sigma_A}}.$$
\end{definition}

For two pure states $|\psi\>$ and $|\phi\>$, we have $\F(|\psi\>\<\psi|,|\phi\>\<\phi|) = |\<\psi|\phi\>|$. We now introduce the two distance measures we use.

\begin{definition}[Distance measures]
 Let $\rho_A,\sigma_A \in \D(A)$ be quantum states. We define the following distance measures between these states.
\begin{align*}
\text{Trace distance:}& \quad \Delta(\rho_A,\sigma_A) \defeq \frac{1}{2}\|\rho_A-\sigma_A\|_1  \\
\text{Bures metric:}& \quad \BR(\rho_A,\sigma_A) \defeq \sqrt{1-\F(\rho_A,\sigma_A)}. 
\end{align*}
\end{definition}

Note that for any two quantum states $\rho_A$ and $\sigma_A$, these distance measures lie in $[0,1]$. The distance measures are $0$ if and only if the states are equal, and the distance measures are $1$ if and only if the states have orthogonal support, i.e., if $\rho_A \sigma_A = 0$.

Conveniently, these measures are closely related.

\begin{fact}\label{fact:deltabures}For all quantum states $\rho_A,\sigma_A \in \D(A)$, we have
\begin{align*}
&\quad \BR^2(\rho_A,\sigma_A) \leq \Delta(\rho_A,\sigma_A) \leq \sqrt{2} \cdot \BR(\rho_A,\sigma_A). 
\end{align*}
\end{fact}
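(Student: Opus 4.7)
The plan is to recognize that the stated bounds are a convenient packaging of the classical Fuchs--van de Graaf inequalities $1-\F(\rho_A,\sigma_A) \leq \Delta(\rho_A,\sigma_A) \leq \sqrt{1-\F(\rho_A,\sigma_A)^2}$, rewritten using $\BR^2 = 1-\F$. The lower bound in the fact is exactly the left Fuchs--van de Graaf inequality, while the right inequality $\Delta \leq \sqrt{2}\,\BR$ follows from the right Fuchs--van de Graaf inequality via the trivial estimate $1+\F \leq 2$. So the work reduces to establishing these two classical inequalities.

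For the lower bound $\BR^2(\rho_A,\sigma_A) \leq \Delta(\rho_A,\sigma_A)$, I would reduce to the commuting (classical) case. Using the variational characterization of fidelity, there exists a POVM $\{E_x\}$ on $\H_A$ whose induced outcome distributions $p(x) = \Tr(E_x \rho_A)$ and $q(x) = \Tr(E_x \sigma_A)$ satisfy $\F(p,q) = \F(\rho_A,\sigma_A)$ (fidelity is non-increasing under measurements, with equality attained). By monotonicity of trace distance under CPTP maps, $\Delta(p,q) \leq \Delta(\rho_A,\sigma_A)$. It then suffices to establish the inequality classically, which follows from the elementary estimate $(\sqrt{p(x)} - \sqrt{q(x)})^2 \leq |p(x) - q(x)|$: summing over $x$ yields $2\bigl(1 - \sum_x \sqrt{p(x)q(x)}\bigr) \leq 2\Delta(p,q)$, i.e., $1 - \F(p,q) \leq \Delta(p,q)$.

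For the upper bound $\Delta(\rho_A,\sigma_A) \leq \sqrt{2}\,\BR(\rho_A,\sigma_A)$, I would invoke Uhlmann's theorem: there exist purifications $\ket{\psi}, \ket{\phi}$ of $\rho_A, \sigma_A$ in an enlarged space $\H_A \otimes \H_{A'}$ with $|\langle \psi | \phi \rangle| = \F(\rho_A,\sigma_A)$. Monotonicity of trace distance under partial trace gives $\Delta(\rho_A,\sigma_A) \leq \Delta(\ketbra{\psi},\ketbra{\phi})$, and a direct calculation within the two-dimensional subspace spanned by $\ket{\psi}, \ket{\phi}$ (diagonalizing $\ketbra{\psi} - \ketbra{\phi}$) yields $\Delta(\ketbra{\psi},\ketbra{\phi}) = \sqrt{1 - |\langle \psi | \phi \rangle|^2}$. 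Combining, $\Delta(\rho_A,\sigma_A) \leq \sqrt{(1-\F)(1+\F)} \leq \sqrt{2(1-\F)} = \sqrt{2}\,\BR(\rho_A,\sigma_A)$.

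There is no real obstacle here: both directions are classical and standard, and the only nontrivial inputs are Uhlmann's theorem and the existence of a fidelity-saturating measurement, both of which are available in any of the quantum information references the paper already cites (e.g.\ Nielsen--Chuang, Wilde, Watrous). In a typical write-up this fact would simply be cited rather than proved, but the above two-step derivation is what one would do if an explicit argument were required.
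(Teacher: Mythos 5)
Your proposal is correct and matches the paper's proof: both reduce the fact to the Fuchs--van de Graaf inequalities $1-\F \leq \Delta \leq \sqrt{1-\F^2}$ and then use $1-\F^2 \leq 2(1-\F)$, the only difference being that the paper simply cites Fuchs--van de Graaf while you also sketch their standard proofs (optimal measurement plus monotonicity for the lower bound, Uhlmann purifications for the upper bound).
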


\begin{proof} The Fuchs-van de Graaf inequalities~\cite{FvdG99,Wat16} state that 
\begin{align*}
&\quad 1-\F(\rho_A,\sigma_A) \leq \Delta(\rho_A,\sigma_A) \leq \sqrt{1-\F^2(\rho_A,\sigma_A)}. 
\end{align*}
Our fact follows from this and the relation $1-\F^2(\rho_A,\sigma_A) \leq 2-2\F(\rho_A,\sigma_A)$.
\end{proof}


We now review some properties of the Bures metric.

\begin{fact}[Facts about Bures metric]\label{fact:relation-inf} 
\end{fact}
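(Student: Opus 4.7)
The statement of Fact~\ref{fact:relation-inf} is blank in the excerpt, so I will plan proofs for the standard collection of Bures-metric facts that this paper, given its information-theoretic toolkit, almost certainly needs: (i) $\BR$ is a metric (in particular the triangle inequality), (ii) monotonicity under CPTP maps $\E$, i.e.\ $\BR(\E(\rho),\E(\sigma))\le \BR(\rho,\sigma)$, (iii) a tensor-product bound $\BR^2(\rho_1\otimes\rho_2,\sigma_1\otimes\sigma_2)\le \BR^2(\rho_1,\sigma_1)+\BR^2(\rho_2,\sigma_2)$, and (iv) Uhlmann-type statements about purifications. The unifying tool will be Uhlmann's theorem: $\F(\rho_A,\sigma_A)=\max_{|\psi\>,|\phi\>}|\<\psi|\phi\>|$ where the maximum is taken over purifications on any sufficiently large reference system. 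I will take Uhlmann as a black box (it is standard; see \cite{Wat16}).

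For the triangle inequality, my plan is as follows. Given $\rho,\sigma,\tau\in\D(A)$, pick a reference system $R\equiv A$ and fix a purification $|\psi_\sigma\>_{AR}$ of $\sigma$. By Uhlmann, choose purifications $|\psi_\rho\>_{AR}$ and $|\psi_\tau\>_{AR}$ such that $|\<\psi_\rho|\psi_\sigma\>|=\F(\rho,\sigma)$ and $|\<\psi_\sigma|\psi_\tau\>|=\F(\sigma,\tau)$. Then $\BR(\rho,\sigma)=\sqrt{1-|\<\psi_\rho|\psi_\sigma\>|}$, which up to global phases on $|\psi_\rho\>,|\psi_\tau\>$ equals the Fubini-Study angle-like distance between unit vectors and satisfies the triangle inequality on unit vectors, which I will verify by a short computation reducing to the inequality $\sqrt{1-\cos\alpha}+\sqrt{1-\cos\beta}\ge \sqrt{1-\cos(\alpha+\beta)}$. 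Finally, $|\<\psi_\rho|\psi_\tau\>|\le \F(\rho,\tau)$ by Uhlmann, only strengthening the bound, giving $\BR(\rho,\tau)\le \BR(\rho,\sigma)+\BR(\sigma,\tau)$.

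For monotonicity, I would dilate $\E$ to an isometry $V:\H_A\to\H_B\otimes\H_E$ via Stinespring, so $\E(\rho)=\Tr_E(V\rho V^\dagger)$. If $|\psi_\rho\>_{AR},|\psi_\sigma\>_{AR}$ are purifications achieving $|\<\psi_\rho|\psi_\sigma\>|=\F(\rho,\sigma)$, then $(V\otimes \id_R)|\psi_\rho\>$ and $(V\otimes\id_R)|\psi_\sigma\>$ are purifications of $\E(\rho)$ and $\E(\sigma)$ with the same inner product; Uhlmann then gives $\F(\E(\rho),\E(\sigma))\ge \F(\rho,\sigma)$, hence $\BR$ decreases. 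The tensor-product bound follows from the multiplicativity identity $\F(\rho_1\otimes\rho_2,\sigma_1\otimes\sigma_2)=\F(\rho_1,\sigma_1)\F(\rho_2,\sigma_2)$ (direct from the definition via $\sqrt{\rho_1\otimes\rho_2}=\sqrt{\rho_1}\otimes\sqrt{\rho_2}$ and multiplicativity of the trace norm on tensor products) combined with the elementary inequality $1-ab\le (1-a)+(1-b)$ for $a,b\in[0,1]$.

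The main subtlety I anticipate is the clean verification of the triangle inequality from the vector-form inequality, since different purifications of $\rho,\sigma,\tau$ achieve different optima; the right way to finesse this is to fix a single purification of the ``middle'' state $\sigma$ and invoke Uhlmann twice relative to it, as above. Everything else is standard manipulation once Uhlmann's theorem and the multiplicativity of fidelity on product states are in hand.
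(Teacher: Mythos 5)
The ``fact'' you were asked about is only a header; its actual content in the paper is the two sub-facts that follow it, namely Fact~\ref{fact:triangle} (the triangle inequality \emph{together with} the weak squared triangle inequality $\BR^2(\rho^1_A,\rho^{t+1}_A)\le t\sum_{i=1}^t\BR^2(\rho^i_A,\rho^{i+1}_A)$) and Fact~\ref{fact:subadd} (exact averaging over a classical register). The paper gives no proof at all---these are quoted as standard, with a citation to Bures---so there is no in-paper argument to match; judged against the actual content, your proposal is partly on target and partly aimed at the wrong list. Your Uhlmann-based proof of the triangle inequality is correct, and the finesse you identify (fix one purification of the middle state $\sigma$ and invoke Uhlmann twice relative to it; equivalently use $\sqrt{2}\,\BR(\rho_A,\sigma_A)=\min\onenorm{\,\ket{\psi_\rho}-\ket{\psi_\sigma}}_{\,2\text{-norm}}$ over purifications) is exactly the right one. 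Your monotonicity and tensor-multiplicativity arguments are also fine, but monotonicity is a separate fact in the paper (Fact~\ref{fact:monotonicitydistance}) and the tensor bound is not used.

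Two genuine gaps remain relative to what the paper actually needs. First, you never state or derive the weak triangle inequality, which is what the cut-and-paste arguments invoke; it does follow in one line from your triangle inequality via Cauchy--Schwarz, $(\sum_{i=1}^t a_i)^2\le t\sum_{i=1}^t a_i^2$ applied to $a_i=\BR(\rho_A^i,\rho_A^{i+1})$, but the step should be said. Second, and more substantively, Fact~\ref{fact:subadd} is not a consequence of Uhlmann or of metric axioms at all: for classical-quantum states $\theta_{XB}=\sum_x p(x)\,\kb{x}{x}\otimes\theta^x_B$ and $\theta'_{XB}=\sum_x p(x)\,\kb{x}{x}\otimes\theta'^x_B$ with the \emph{same} classical marginal, one uses the block-diagonal structure of fidelity: $\sqrt{\theta_{XB}}\sqrt{\theta'_{XB}}=\sum_x p(x)\,\kb{x}{x}\otimes\sqrt{\theta^x_B}\sqrt{\theta'^x_B}$, and since the trace norm adds over orthogonal blocks, $\F(\theta_{XB},\theta'_{XB})=\sum_x p(x)\,\F(\theta^x_B,\theta'^x_B)$; subtracting from $1$ gives $\BR^2(\theta_{XB},\theta'_{XB})=\expec_{x\leftarrow X}[\BR^2(\theta^x_B,\theta'^x_B)]$. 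Note that both the exact equality and the hypothesis $\theta_X=\theta'_X$ matter where the paper applies this; your proposed toolkit (Uhlmann, Stinespring, multiplicativity) does not yield this identity without the block decomposition.
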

\begin{factpart}[Triangle inequality \cite{Bures69}]\label{fact:triangle} The following triangle inequality and a weak triangle inequality hold for the Bures metric and the square of the Bures metric. 
\begin{enumerate}
\item $\BR(\rho_A,\sigma_A) \leq \BR(\rho_A,\tau_A) + \BR(\tau_A,\sigma_A).$
\item $\BR^2(\rho_A^1, \rho_A^{t+1}) \le t \cdot \sum_{i=1}^t \BR^2(\rho_A^i, \rho_A^{i+1}).$
\end{enumerate}
\end{factpart}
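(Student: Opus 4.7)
The plan is to derive part 1 from Uhlmann's theorem plus the ordinary triangle inequality for the Hilbert-space norm on purifications, and then obtain part 2 as a one-line consequence of part 1 and the power-mean (Cauchy--Schwarz) inequality $(a_1+\cdots+a_t)^2 \le t(a_1^2+\cdots+a_t^2)$.

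For part 1, I would first recall Uhlmann's theorem: for any two states $\rho_A, \sigma_A \in \D(A)$ and any purification $|\psi_\rho\rangle_{AR}$ of $\rho_A$, there is a purification $|\psi_\sigma\rangle_{AR}$ of $\sigma_A$ in the same space (possibly after enlarging $R$) with $\langle \psi_\rho | \psi_\sigma\rangle = \F(\rho_A,\sigma_A)$, the inner product being real and nonnegative. I would then fix any third state $\tau_A$ and pick a purification $|\psi_\tau\rangle_{AR}$ of $\tau_A$. Applying Uhlmann in both directions around $\tau_A$, I can choose purifications $|\psi_\rho\rangle_{AR}$ and $|\psi_\sigma\rangle_{AR}$ such that
\[
\langle \psi_\rho|\psi_\tau\rangle = \F(\rho_A,\tau_A), \qquad \langle \psi_\tau|\psi_\sigma\rangle = \F(\tau_A,\sigma_A).
\]
The key identity is $\| |\psi_\rho\rangle - |\psi_\tau\rangle\|^2 = 2 - 2\,\text{Re}\langle \psi_\rho|\psi_\tau\rangle = 2\,\BR^2(\rho_A,\tau_A)$, and similarly for the $(\tau,\sigma)$ pair. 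Since $|\langle \psi_\rho|\psi_\sigma\rangle| \le \F(\rho_A,\sigma_A)$ (because fidelity is the \emph{maximum} overlap over purifications), we also get $\| |\psi_\rho\rangle - |\psi_\sigma\rangle\|^2 \ge 2\BR^2(\rho_A,\sigma_A)$. Combining these with the Euclidean triangle inequality
\[
\| |\psi_\rho\rangle - |\psi_\sigma\rangle\| \;\le\; \| |\psi_\rho\rangle - |\psi_\tau\rangle\| + \| |\psi_\tau\rangle - |\psi_\sigma\rangle\|
\]
and dividing through by $\sqrt{2}$ yields $\BR(\rho_A,\sigma_A) \le \BR(\rho_A,\tau_A) + \BR(\tau_A,\sigma_A)$, as required.

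For part 2, I would iterate part 1 to obtain $\BR(\rho_A^1,\rho_A^{t+1}) \le \sum_{i=1}^t \BR(\rho_A^i,\rho_A^{i+1})$, and then apply the inequality $\bigl(\sum_{i=1}^t a_i\bigr)^2 \le t\sum_{i=1}^t a_i^2$ (an immediate consequence of Cauchy--Schwarz with the all-ones vector) to the nonnegative quantities $a_i \defeq \BR(\rho_A^i,\rho_A^{i+1})$. Squaring the iterated triangle inequality and substituting gives exactly $\BR^2(\rho_A^1,\rho_A^{t+1}) \le t\sum_{i=1}^t \BR^2(\rho_A^i,\rho_A^{i+1})$.

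I don't anticipate a real obstacle here; the only subtlety worth flagging is that this paper normalizes the Bures metric as $\sqrt{1-\F}$ rather than the more common $\sqrt{2(1-\F)}$, but since the two differ only by a constant factor the metric property and thus the triangle inequality are preserved. The main care-point in writing this cleanly is making sure the purifications of $\rho_A$ and $\sigma_A$ are both aligned against the \emph{same} purification of $\tau_A$, which is what makes the Euclidean triangle inequality available in a single ambient Hilbert space.
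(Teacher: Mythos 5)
Your proof is correct. The paper itself does not prove this fact---it is stated with a citation to Bures---so there is nothing internal to compare against; your argument is the standard one: align purifications of $\rho_A$ and $\sigma_A$ against a common purification of $\tau_A$ via Uhlmann's theorem, use $\| \ket{\psi_\rho}-\ket{\psi_\tau}\|^2 = 2\BR^2(\rho_A,\tau_A)$ together with $|\langle\psi_\rho|\psi_\sigma\rangle|\le \F(\rho_A,\sigma_A)$, and invoke the Euclidean triangle inequality; part 2 then follows by iterating part 1 and applying Cauchy--Schwarz. The only cosmetic point is to note explicitly that the Uhlmann-optimal overlaps can be taken real and nonnegative (adjusting global phases), which you already flag implicitly, and that a single reference register large enough for all three states can be used so the three purifications live in one Hilbert space.
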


\begin{factpart}[Averaging over classical registers]\label{fact:subadd} For classical-quantum states $\theta_{XB},\theta'_{XB}$ with $\theta_{X} = \theta'_{X}$, we have
$$\BR^2(\theta_{XB},\theta'_{XB}) = \expec_{x\leftarrow X} [\BR^2(\theta^x_B, \theta'^x_B)].$$
\end{factpart}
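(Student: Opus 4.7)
The plan is to unfold both sides of the equation starting from the definition $\BR^2(\rho,\sigma) = 1 - \F(\rho,\sigma)$ and exploit the block-diagonal structure that a classical register enforces. Writing $\theta_{XB} = \sum_x p(x)\,\ketbra{x}_X \otimes \theta^x_B$ and $\theta'_{XB} = \sum_x p(x)\,\ketbra{x}_X \otimes \theta'^x_B$, where $p$ is the common marginal on $X$ guaranteed by the hypothesis $\theta_X = \theta'_X$, the square roots commute through the classical block decomposition and yield
\[
\sqrt{\theta_{XB}} = \sum_x \sqrt{p(x)}\,\ketbra{x}_X \otimes \sqrt{\theta^x_B}, \qquad \sqrt{\theta'_{XB}} = \sum_x \sqrt{p(x)}\,\ketbra{x}_X \otimes \sqrt{\theta'^x_B}.
\]

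Next I would multiply these two operators. The orthogonality of $\{\ket{x}\}_x$ collapses the double sum into a single sum, giving the block-diagonal operator
\[
\sqrt{\theta_{XB}}\sqrt{\theta'_{XB}} = \sum_x p(x)\,\ketbra{x}_X \otimes \sqrt{\theta^x_B}\sqrt{\theta'^x_B}.
\]
The trace norm of a block-diagonal operator is the sum of the trace norms of its blocks, so
\[
\F(\theta_{XB},\theta'_{XB}) = \onenorm{\sqrt{\theta_{XB}}\sqrt{\theta'_{XB}}} = \sum_x p(x)\,\onenorm{\sqrt{\theta^x_B}\sqrt{\theta'^x_B}} = \expec_{x \leftarrow X}[\F(\theta^x_B,\theta'^x_B)].
\]

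Finally I would convert fidelity back to the squared Bures metric. Using $\BR^2(\rho,\sigma) = 1 - \F(\rho,\sigma)$ and $\expec_{x\leftarrow X}[1] = 1$, I get
\[
\BR^2(\theta_{XB},\theta'_{XB}) = 1 - \expec_{x\leftarrow X}[\F(\theta^x_B,\theta'^x_B)] = \expec_{x\leftarrow X}[1 - \F(\theta^x_B,\theta'^x_B)] = \expec_{x\leftarrow X}[\BR^2(\theta^x_B,\theta'^x_B)],
\]
which is the claimed identity. There is no real obstacle here: the only subtlety is that the identity would fail if $\theta_X \neq \theta'_X$, because then the two square roots would carry different weights $\sqrt{p(x)}$ and $\sqrt{q(x)}$ and one would only obtain $\expec_{x}[\sqrt{p(x)q(x)}\,\F(\theta^x_B,\theta'^x_B)/\sqrt{p(x)}]$ after normalization, so the clean equality genuinely relies on the shared marginal.
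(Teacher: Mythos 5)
Your proof is correct. The paper states Fact~\ref{fact:subadd} without giving a proof, so there is nothing in-text to compare against; your derivation is the standard one. The classical register makes $\sqrt{\theta_{XB}}$ and $\sqrt{\theta'_{XB}}$ block-diagonal along $\{\ket{x}\}_x$, the product collapses to $\sum_x p(x)\,\ketbra{x}_X\otimes\sqrt{\theta^x_B}\sqrt{\theta'^x_B}$, and since the trace norm is additive over orthogonal blocks this gives $\F(\theta_{XB},\theta'_{XB})=\expec_{x\leftarrow X}[\F(\theta^x_B,\theta'^x_B)]$; converting via $\BR^2=1-\F$ and linearity of expectation yields the claimed identity, and the hypothesis $\theta_X=\theta'_X$ is used exactly where you say, to make the two blocks carry the same weight $\sqrt{p(x)}$. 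Two cosmetic remarks: terms with $p(x)=0$ should simply be omitted from the sums (the conditional states are undefined there but carry no weight), and your closing aside about unequal marginals is right in substance---the general identity is $\F(\theta_{XB},\theta'_{XB})=\sum_x\sqrt{p(x)q(x)}\,\F(\theta^x_B,\theta'^x_B)$, which is indeed no longer an expectation under $p$---though the phrase ``after normalization'' states this somewhat opaquely.
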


Finally, an important property of both these distance measures is monotonicity under quantum operations \cite{lindblad75,barnum96}.

\begin{fact}[Monotonicity under quantum operations]
\label{fact:monotonicitydistance}
For quantum states $\rho_A$, $\sigma_A \in \D(A)$, and a quantum operation $\E(\cdot):\L(A)\rightarrow \L(B)$, it holds that
\begin{align*}
	\Delta(\E(\rho_A) , \E(\sigma_A)) \leq \Delta(\rho_A,\sigma_A) \quad \mbox{and} \quad \BR(\E(\rho_A),\E(\sigma_A)) \leq \BR(\rho_A,\sigma_A),
\end{align*}
with equality if $\E$ is unitary. 
In particular, for bipartite states $\rho_{AB},\sigma_{AB}\in \D(AB)$, it holds that
\begin{align*}
	\Delta(\rho_{AB},\sigma_{AB}) \geq \Delta(\rho_A,\sigma_A) \quad \mbox{and} \quad \BR(\rho_{AB},\sigma_{AB}) \geq \BR(\rho_A,\sigma_A).
\end{align*}
\end{fact}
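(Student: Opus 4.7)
The plan is to prove the two monotonicity bounds separately and then read off the bipartite ``in particular'' statement as the special case $\E = \Tr_B : \L(AB) \to \L(A)$. The trace distance bound is elementary, while the Bures metric bound rests on Uhlmann's theorem together with Stinespring dilation.

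For the trace distance, I would start from the Jordan--Hahn decomposition $\rho_A - \sigma_A = P - Q$ with $P, Q \succeq 0$ and $PQ = 0$, so that $\|\rho_A - \sigma_A\|_1 = \Tr(P) + \Tr(Q)$. Complete positivity of $\E$ gives $\E(P), \E(Q) \succeq 0$, and trace preservation gives $\Tr(\E(P)) + \Tr(\E(Q)) = \Tr(P) + \Tr(Q)$. The triangle inequality for the trace norm then yields
\[
\|\E(\rho_A) - \E(\sigma_A)\|_1 = \|\E(P) - \E(Q)\|_1 \le \|\E(P)\|_1 + \|\E(Q)\|_1 = \|\rho_A - \sigma_A\|_1,
\]
which is the claimed bound after dividing by $2$.

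For the Bures metric, since $\BR = \sqrt{1-\F}$ and the square root is monotone, it suffices to show $\F(\E(\rho_A), \E(\sigma_A)) \ge \F(\rho_A, \sigma_A)$. I would invoke Stinespring dilation to write $\E(X) = \Tr_E(V X V^\dagger)$ for some isometry $V : \H_A \to \H_B \otimes \H_E$, and split the argument into two steps. First, fidelity is invariant under isometries: using $V^\dagger V = \id_A$, one checks that $V \sqrt{\rho_A}\, V^\dagger$ is positive semidefinite and squares to $V \rho_A V^\dagger$, so it is the principal square root of $V \rho_A V^\dagger$; consequently $\sqrt{V\rho_A V^\dagger}\sqrt{V\sigma_A V^\dagger} = V \sqrt{\rho_A}\sqrt{\sigma_A}\, V^\dagger$, whose trace norm equals $\|\sqrt{\rho_A}\sqrt{\sigma_A}\|_1$ by isometric invariance of singular values. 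Second, fidelity is non-decreasing under partial trace: take purifications on $BER$ of $V\rho_A V^\dagger$ and $V\sigma_A V^\dagger$ achieving equality in Uhlmann's theorem; the same vectors are purifications of the marginals on $B$ with purifying system $ER$, so Uhlmann's variational formula applied to the marginals can only produce a larger overlap, giving $\F(\E(\rho_A), \E(\sigma_A)) \ge \F(V\rho_A V^\dagger, V\sigma_A V^\dagger) = \F(\rho_A, \sigma_A)$.

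The equality case for unitary $\E$ follows by applying the same inequalities to $\E^{-1}$, which is itself a quantum operation, and the bipartite ``in particular'' statement is the specialization $\E = \Tr_B$. The only substantive ingredient is Uhlmann's theorem, which drives the partial-trace step; the trace distance bound and the isometric invariance of fidelity are direct computations.
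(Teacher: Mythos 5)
Your proof is correct and complete. Note that the paper does not actually prove this fact --- it is stated with citations to Lindblad and to Barnum et al. --- so there is no in-paper argument to compare against; what you have written is the standard textbook derivation that those references contain: Jordan--Hahn decomposition plus the triangle inequality for the trace-norm bound, and Stinespring dilation plus Uhlmann's theorem for the fidelity (hence Bures) bound, with the bipartite statement recovered by taking $\E = \Tr_B$. Two small remarks: your trace-distance argument only uses positivity and trace preservation, so complete positivity enters solely through the Stinespring step for fidelity; and in the Uhlmann step the existence of purifications attaining the maximum is guaranteed here because all spaces are finite-dimensional, which is the setting of the paper. The equality case via applying the same inequalities to the inverse channel $\E^{-1}(\cdot)=U^{\dagger}(\cdot)U$ is also fine, since that map is again CPTP.
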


\subsubsection{Mutual information}

We start with the following fundamental information theoretic quantities. We refer the reader to the excellent sources for quantum information theory \cite{Wil12, Wat16} for further study.

\begin{definition}\label{def:relentropy}
Let $\rho_A \in \D(A)$ be a quantum state. 
We then define the following.
\begin{align*}
\text{von Neumann entropy:}& \quad \ent{\rho_A} \defeq - \Tr(\rho_A\log\rho_A) .
\end{align*}
\end{definition}

We now define mutual information and conditional mutual information.

\begin{definition}[Mutual information]
\label{def:entropy}
Let  $\rho_{ABC}\in\D(ABC)$ be a quantum state. We define the following measures.
\begin{align*}
\text{Mutual information:}& \quad \I(A:B)_{\rho}\defeq \ent{\rho_A} + \ent{\rho_B}-\ent{\rho_{AB}} .\\
\text{Conditional mutual information:}& \quad \I(A:B~|~C)_{\rho}\defeq \I(A:BC)_{\rho}-\I(A:C)_{\rho}.
\end{align*}
\end{definition}

We will need the following basic properties.

\begin{fact}[Properties of $\I$]
Let $\rho_{ABC}\in\D(ABC)$ be a quantum state. We have the following.
\end{fact}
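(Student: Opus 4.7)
The plan is to derive each listed property of $\I$ directly from the definition in terms of von Neumann entropy, leveraging just two deep structural facts about $\mathrm{H}$: subadditivity and strong subadditivity. Everything else reduces to algebra in the defining identity $\I(A:B)_\rho = \ent{\rho_A} + \ent{\rho_B} - \ent{\rho_{AB}}$ and its conditional analogue.

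First I would handle the non-negativity properties. For $\I(A:B)_\rho \geq 0$, I would note that this is equivalent to subadditivity of von Neumann entropy, $\ent{\rho_{AB}} \leq \ent{\rho_A} + \ent{\rho_B}$, which itself follows from non-negativity of the quantum relative entropy $\mathrm{S}(\rho_{AB} \| \rho_A \otimes \rho_B) \geq 0$ (Klein's inequality). For the conditional version $\I(A:B|C)_\rho \geq 0$, expanding via Definition~\ref{def:entropy} gives the inequality $\ent{\rho_{AC}} + \ent{\rho_{BC}} \geq \ent{\rho_{ABC}} + \ent{\rho_C}$, which is precisely strong subadditivity (Lieb--Ruskai); I would invoke this as a citable black box.

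Next I would prove the chain rule, $\I(A:BC)_\rho = \I(A:B)_\rho + \I(A:C|B)_\rho$, which is immediate by expanding both sides using the entropy definitions and cancelling; this is actually built into the definition of conditional mutual information already. For data processing under a CPTP map $\E: B \to B'$, namely $\I(A:B')_{(\id\otimes\E)(\rho)} \leq \I(A:B)_\rho$, I would reduce it to monotonicity of quantum relative entropy under $\E$ applied to $\rho_{AB}$ and $\rho_A \otimes \rho_B$, using that $\I(A:B)_\rho = \mathrm{S}(\rho_{AB} \| \rho_A \otimes \rho_B)$.

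Finally, for dimension bounds such as $\I(A:B)_\rho \leq 2\min(\log \dim \H_A, \log \dim \H_B)$, I would combine the bound $0 \leq \ent{\rho_X} \leq \log \dim \H_X$ with subadditivity, which gives $\I(A:B)_\rho \leq 2\ent{\rho_A}$ (and symmetrically in $B$). The only genuinely hard input in this whole block is strong subadditivity; every other property is a routine algebraic manipulation of the definition once subadditivity and SSA are granted, so the entire fact is essentially bookkeeping on top of those two cornerstone inequalities.
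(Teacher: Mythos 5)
The paper states these properties without proof, as standard facts, so the only meaningful check is whether your derivations are the standard ones and whether they cover everything listed. For the parts you address, your route is correct and is exactly the textbook one: nonnegativity of $\I(A:B)_\rho$ from subadditivity (equivalently $\I(A:B)_\rho=\relent{\rho_{AB}}{\rho_A\otimes\rho_B}\geq 0$), nonnegativity of $\I(A:B|C)_\rho$ as a restatement of strong subadditivity, the chain rule as pure bookkeeping with the definition, and monotonicity under a CPTP map via the data-processing inequality for relative entropy applied to the pair $(\rho_{AB},\rho_A\otimes\rho_B)$ (the "equality for unitary $\E$" clause follows by applying monotonicity to $\E$ and to $\E^{-1}$, and $\I(A:BC)\geq \I(A:B)$ by taking $\E=\Tr_C$). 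The product-state case $\I(A:B)_\rho=0$, which the paper also lists, is immediate from additivity of entropy on product states, so no issue there.

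The concrete gap is that the paper's fact has a fourth part you do not treat and that does not follow from the items you proved: \emph{averaging over a classical conditioning register}, i.e.\ for a classical-quantum state $\rho_{XAB}=\sum_x p(x)\,\ketbra{x}_X\otimes\rho^x_{AB}$ one has $\I(A:B|X)_\rho=\expec_{x\leftarrow X}\,\I(A:B)_{\rho^x}$. This is an identity, not an inequality, and it needs the block-diagonal structure of classical-quantum states: for any collection $S$ of the quantum registers, $\ent{\rho_{XS}}=\ent{p}+\sum_x p(x)\ent{\rho^x_S}$, and substituting this into $\I(A:B|X)_\rho=\ent{\rho_{AX}}+\ent{\rho_{BX}}-\ent{\rho_{ABX}}-\ent{\rho_X}$ makes the $\ent{p}$ terms cancel and leaves the claimed average. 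This part is actually load-bearing later in the paper (it is the first step in the proof of the lower bound on $\HQIC$ for Equality), so it should not be omitted. Conversely, the dimension bound $\I(A:B)_\rho\leq 2\min(\log\dim\H_A,\log\dim\H_B)$ you include is fine but is not among the properties the paper asserts.
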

\begin{factpart}[Nonnegativity]\label{fact:nonneg}
\begin{align*}
\I(A:B)_{\rho}  \geq 0 &\text{ and }\I(A:B~|~C)_{\rho}  \geq 0.  
\end{align*}
If $\rho_{AB} = \rho_A \otimes \rho_B$ is a product state, then
\begin{align*}
\I(A:B) = 0.
\end{align*}
\end{factpart}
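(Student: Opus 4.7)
The plan is to handle the three assertions separately, reducing each to standard facts from quantum information theory; the unconditional nonnegativity and the product-state identity are essentially bookkeeping, while the conditional case is the deep ingredient.

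First, I would rewrite the mutual information as a quantum relative entropy. A direct expansion gives
\[
\I(A:B)_\rho \;=\; \ent{\rho_A}+\ent{\rho_B}-\ent{\rho_{AB}} \;=\; \Tr\!\bigl(\rho_{AB}\log \rho_{AB}\bigr)-\Tr\!\bigl(\rho_{AB}\log(\rho_A\otimes\rho_B)\bigr),
\]
which is exactly $\relent{\rho_{AB}}{\rho_A\otimes\rho_B}$. Nonnegativity of $\I(A:B)_\rho$ then follows from Klein's inequality (nonnegativity of quantum relative entropy), which in turn is an immediate consequence of the operator inequality $\log x \le x-1$ applied to the state and the reference operator, or equivalently of the monotonicity of relative entropy under the trace map. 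For the product-state claim, note that if $\rho_{AB}=\rho_A\otimes\rho_B$, then $\log \rho_{AB}=\log\rho_A\otimes\id_B+\id_A\otimes\log\rho_B$, so $\ent{\rho_{AB}}=\ent{\rho_A}+\ent{\rho_B}$ by linearity of the trace, and therefore $\I(A:B)_\rho=0$.

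For the conditional statement, I would expand using Definition~\ref{def:entropy}:
\[
\I(A:B\mid C)_\rho \;=\; \I(A:BC)_\rho - \I(A:C)_\rho \;=\; \ent{\rho_{AC}}+\ent{\rho_{BC}}-\ent{\rho_{ABC}}-\ent{\rho_C}.
\]
Nonnegativity of this quantity is precisely the strong subadditivity of von Neumann entropy, proved by Lieb and Ruskai. I would invoke this as a known theorem rather than reprove it; the cleanest modern route is to observe that strong subadditivity is equivalent to the monotonicity of relative entropy under partial trace, i.e.\ $\relent{\rho_{ABC}}{\id_A\otimes\rho_{BC}}\ge \relent{\rho_{AC}}{\id_A\otimes\rho_C}$, which is the instance of Fact~\ref{fact:monotonicitydistance}-type monotonicity applied to relative entropy (Lindblad--Uhlmann). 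Rearranging gives the desired inequality.

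The main obstacle is, as usual, strong subadditivity: unlike the classical case it is not proved by any short combinatorial argument, and one has to quote Lieb's concavity theorem or the monotonicity of relative entropy. For the purposes of this paper, however, it suffices to cite the standard references (e.g.\ \cite{Wil12,Wat16}), since nothing in the remainder of the argument requires the internal structure of that proof. The other two parts are short and self-contained as sketched above.
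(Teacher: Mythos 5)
Your proof is correct and matches the paper's treatment: the paper simply states this as a standard fact (deferring to the textbook references \cite{Wil12,Wat16}), and your reduction of unconditional nonnegativity to Klein's inequality via $\I(A:B)_\rho=\relent{\rho_{AB}}{\rho_A\otimes\rho_B}$, of the product-state case to additivity of entropy, and of conditional nonnegativity to strong subadditivity (Lieb--Ruskai, or equivalently monotonicity of relative entropy) is exactly the standard argument being invoked. Nothing further is needed.
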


\begin{factpart}[Chain rule]\label{fact:chain-rule}
$\I(A:BC)_{\rho} = \I(A:C)_{\rho} + \I(A:B~|~C)_{\rho}  = \I(A:B)_{\rho} + \I(A:C~|~B)_{\rho}.$
\end{factpart}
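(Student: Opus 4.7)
The plan is to unfold Definition~\ref{def:entropy} directly. For the first equality, $\I(A:BC)_\rho = \I(A:C)_\rho + \I(A:B\mid C)_\rho$ is essentially the defining relation of conditional mutual information rearranged: since $\I(A:B\mid C)_\rho \defeq \I(A:BC)_\rho - \I(A:C)_\rho$ by definition, adding $\I(A:C)_\rho$ to both sides yields the identity with no further work. So there is nothing substantive to prove for the first equality beyond citing the definition.

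For the second equality, $\I(A:BC)_\rho = \I(A:B)_\rho + \I(A:C\mid B)_\rho$, the plan is to exploit the symmetry $BC \equiv CB$ at the level of composite registers, so that $\I(A:BC)_\rho = \I(A:CB)_\rho$, and then apply the first equality with the roles of $B$ and $C$ interchanged. If I preferred a more pedestrian verification, I would instead expand both sides through $\I(X:Y)_\rho = \ent{\rho_X} + \ent{\rho_Y} - \ent{\rho_{XY}}$ and observe that on the right-hand side the intermediate terms $\ent{\rho_B}$ and $\ent{\rho_{AB}}$ cancel (the former between $\I(A:B)_\rho$ and the expansion of $\I(A:CB)_\rho - \I(A:B)_\rho$ for $\I(A:C\mid B)_\rho$), leaving precisely $\ent{\rho_A} + \ent{\rho_{BC}} - \ent{\rho_{ABC}} = \I(A:BC)_\rho$.

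The key point is that no nontrivial property of the von Neumann entropy such as strong subadditivity, concavity, or the data-processing inequality is invoked here; the chain rule is purely a formal consequence of how $\I$ and $\I(\cdot{:}\cdot\mid\cdot)$ have been defined. Consequently, there is no main obstacle: once one writes down the definitions carefully and uses the trivial identification $BC \equiv CB$, both equalities drop out in a line. The only care needed is to keep the partial trace conventions consistent so that, for example, $\rho_{BC}$ in the expansion of $\I(A:CB)_\rho$ really is the same state as $\rho_{CB}$.
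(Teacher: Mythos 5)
Your proposal is correct and matches the paper's treatment: the paper states this chain rule without proof precisely because, as you observe, the first equality is just Definition~\ref{def:entropy} of $\I(A:B\,|\,C)_{\rho}$ rearranged, and the second follows by the trivial identification $BC\equiv CB$ (or by cancelling entropy terms), with no appeal to strong subadditivity or any other nontrivial property.
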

\begin{factpart}[Monotonicity]\label{fact:mono}  For a quantum operation $\E(\cdot):\L(A)\rightarrow \L(B)$,
$\I(A:\E(B)) \le \I(A:B)$ with equality when $\E$ is unitary. In particular $\I(A:BC)_{\rho} \ge  \I(A:B)_{\rho} .$
\end{factpart}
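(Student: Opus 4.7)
The plan is to establish both the data-processing inequality and the ``in particular'' consequence uniformly, by reducing the claim to nonnegativity of conditional mutual information (Fact \ref{fact:nonneg}, i.e.\ strong subadditivity) via Stinespring dilation and the chain rule.

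First, I would invoke Stinespring's dilation theorem to write the channel acting on register $B$ as $\E(\cdot) = \Tr_F\bigl(V(\cdot) V^\dagger\bigr)$ for some isometry $V:\H_B\to\H_{B'}\otimes\H_F$. Applying $\id_A\otimes V$ to $\rho_{AB}$ yields a state $\sigma_{AB'F}$ whose marginal on $AB'$ equals $(\id_A\otimes\E)(\rho_{AB})$ and whose marginal on $A$ is still $\rho_A$. The isometry $V$ can be extended to a unitary $U$ on $B$ tensored with a fixed pure-state ancilla, so the map $\rho_{AB}\mapsto\sigma_{AB'F}$ is implemented by a unitary on the $B$-side after padding with a pure ancilla. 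Since the definition $\I(A:B) = \ent{\rho_A}+\ent{\rho_B}-\ent{\rho_{AB}}$ only depends on eigenvalues and eigenvalues are preserved by unitaries (and tensoring with a pure state does not change any entropy), this step preserves mutual information, giving
\[
\I(A:B)_\rho \;=\; \I(A:B'F)_\sigma.
\]

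Next, I would apply the chain rule (Fact \ref{fact:chain-rule}) on the right-hand side:
\[
\I(A:B'F)_\sigma \;=\; \I(A:B')_\sigma \;+\; \I(A:F\,|\,B')_\sigma.
\]
By the conditional nonnegativity statement in Fact \ref{fact:nonneg}, the second term is nonnegative, so $\I(A:B')_\sigma \le \I(A:B'F)_\sigma = \I(A:B)_\rho$, which is precisely $\I(A:\E(B)) \le \I(A:B)$. When $\E$ is a unitary channel, no environment is needed (take $F$ trivial), so the chain-rule expansion has no conditional term and equality holds. Finally, the particular inequality $\I(A:BC)_\rho \ge \I(A:B)_\rho$ is the special case obtained by choosing $\E = \Tr_C$, which is a valid CPTP map from $\L(BC)$ to $\L(B)$.

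The genuinely hard content here is strong subadditivity of von Neumann entropy, which is packaged into the conditional-mutual-information nonnegativity clause of Fact \ref{fact:nonneg}; the paper treats it as a black-box fundamental fact. Given that, the main step in the proof is structural: Stinespring to reduce a general channel to an isometry, isometric invariance to preserve mutual information, and the chain rule to discard a nonnegative term. The only subtlety to double-check is that the unitary-invariance of entropy used in the equality step is itself immediate from Definition \ref{def:relentropy}, so no circular reliance on the very statement being proved is incurred.
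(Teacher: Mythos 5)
Your proposal is correct. The paper states this monotonicity fact without proof, treating it as standard, and your argument is exactly the canonical one it implicitly relies on: Stinespring dilation, isometric invariance of mutual information, the chain rule (Fact~\ref{fact:chain-rule}), and nonnegativity of conditional mutual information (Fact~\ref{fact:nonneg}, i.e.\ strong subadditivity), with the special case $\I(A:BC)_{\rho} \ge \I(A:B)_{\rho}$ obtained by taking $\E = \Tr_C$ (or, even more directly, from the chain rule and nonnegativity alone).
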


\begin{factpart}[Averaging over conditioning register]\label{fact:IvsB} For classical-quantum state (register $X$ is classical) $\rho_{XAB}$:
\begin{align*}
\I(A:B|X)_{\rho} &= \mathbb{E}_{x \leftarrow X}  \I(A:B)_{\rho^x} . 
\end{align*}
\end{factpart}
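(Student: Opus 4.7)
The statement is a decomposition identity for classical-quantum states, so the plan is a short direct calculation from the definitions rather than any nontrivial inequality. I would work from the defining formula
\[
\I(A:B \mid X)_\rho = \I(A:BX)_\rho - \I(A:X)_\rho,
\]
unfold both mutual informations into von Neumann entropies via Definition~\ref{def:entropy}, and reduce everything to conditional entropies, giving
\[
\I(A:B \mid X)_\rho = \ent{\rho_{AX}} + \ent{\rho_{BX}} - \ent{\rho_{ABX}} - \ent{\rho_X}.
\]

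The key technical lemma I would invoke is the standard identity for a classical-quantum state: if $\sigma_{XC} = \sum_x \mu(x)\,\ketbra{x}_X \otimes \sigma^x_C$, then $\ent{\sigma_{XC}} = \ent{\mu} + \expec_{x \leftarrow X}\ent{\sigma^x_C}$. This holds because $\sigma_{XC}$ is block-diagonal in the $X$-basis, so its spectrum is the disjoint union of the spectra of the $\mu(x)\sigma^x_C$ blocks, and $-\sum_x \mu(x)\operatorname{Tr}(\sigma^x_C \log(\mu(x)\sigma^x_C)) = \ent{\mu} + \expec_x \ent{\sigma^x_C}$. Applying this three times, to $C = A$, $C = B$, and $C = AB$ (noting that $\rho_{XAB}$ being classical on $X$ implies $\rho_{XA}$ and $\rho_{XB}$ are also classical on $X$, with conditional states $\rho^x_A$, $\rho^x_B$, $\rho^x_{AB}$), one obtains
\[
\ent{\rho_{AX}} = \ent{\rho_X} + \expec_{x \leftarrow X}\ent{\rho^x_A},
\]
and similarly for $BX$ and $ABX$.

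Substituting these three expansions into the unfolded expression, the $\ent{\rho_X}$ terms cancel (three added, one subtracted, minus one from the $-\ent{\rho_X}$ term), leaving
\[
\I(A:B \mid X)_\rho = \expec_{x \leftarrow X}\bigl[\ent{\rho^x_A} + \ent{\rho^x_B} - \ent{\rho^x_{AB}}\bigr] = \expec_{x \leftarrow X}\I(A:B)_{\rho^x},
\]
which is the claim. There is no real obstacle here; the only thing to be careful about is verifying that $\rho_{XAB}$ classical on $X$ genuinely implies the block-diagonal structure of each marginal involving $X$, which is immediate from applying the partial trace over the appropriate subsystem to each term $\mu(x)\ketbra{x}_X \otimes \rho^x_{AB}$.
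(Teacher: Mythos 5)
Your proof is correct. The paper states this fact without proof, as a standard property of conditional mutual information for classical-quantum states, and your derivation---unfolding $\I(A:B|X)_\rho$ into $\ent{\rho_{AX}}+\ent{\rho_{BX}}-\ent{\rho_{ABX}}-\ent{\rho_X}$ and applying the block-diagonal entropy decomposition $\ent{\sigma_{XC}}=\ent{\mu}+\expec_{x\leftarrow X}\ent{\sigma^x_C}$ to the three marginals---is precisely the standard verification, so there is nothing substantive to compare. One trivial bookkeeping slip: the $\ent{\rho_X}$ terms enter twice with a plus sign (from the expansions of $\ent{\rho_{AX}}$ and $\ent{\rho_{BX}}$) and twice with a minus sign (from $-\ent{\rho_{ABX}}$ and the explicit $-\ent{\rho_X}$), i.e.\ the tally is $+2-2$ rather than ``three added, one subtracted, minus one''; your displayed equations and the final cancellation are nonetheless correct.
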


The following lemma, known as the Average Encoding Theorem~\cite{KNTZ07}, formalizes the intuition that if a classical and a quantum registers are weakly correlated, then they are nearly independent.

\begin{lemma}
\label{lem:avenc}
For any $\rho_{XA} = \sum_x p_X(x) \cdot \kb{x}{x}_X \otimes \rho^x_{A}$ with a classical system $X$ and
 states $\rho^x_A$,
\begin{align}
	\sum_x p_X (x) \cdot \BR^2  \!\left(\rho^x_A, \rho_A \right) &
\quad \leq \quad  \I(X \!:\! A )_{\rho}
\enspace.
\end{align}

\end{lemma}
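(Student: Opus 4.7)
The plan is to reduce the inequality to a pointwise comparison between the squared Bures metric and the quantum relative entropy, then take the expectation over~$X$.

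First, since $\rho_{XA}$ is classical on~$X$, I would invoke the standard identity for classical-quantum mutual information,
\[
\I(X:A)_\rho \;=\; \sum_x p_X(x)\cdot \relent{\rho^x_A}{\rho_A},
\]
which follows by writing $\rho_{XA}$ block-diagonally in the $X$-basis, expanding the von Neumann entropies of Definition~\ref{def:entropy}, and recognizing the resulting expression as the average relative entropy of the conditional states $\rho^x_A$ to the marginal~$\rho_A$.

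Second, I would establish the pointwise bound $\BR^2(\rho^x_A,\rho_A) \leq \relent{\rho^x_A}{\rho_A}$ for every~$x$ via the chain
\[
\BR^2(\rho,\sigma) \;=\; 1-\F(\rho,\sigma) \;\leq\; 1-\F^2(\rho,\sigma) \;\leq\; -\ln \F^2(\rho,\sigma) \;\leq\; \relent{\rho}{\sigma},
\]
using $\F \leq 1$ for the first step, the elementary inequality $1-x\leq -\ln x$ for the second, and the fidelity-form quantum Pinsker bound $-2\log \F(\rho,\sigma) \leq \relent{\rho}{\sigma}$ for the last (the factor of $\ln 2$ coming from converting natural to binary log is harmless since $\ln 2 < 1$). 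Averaging this pointwise inequality against $p_X$ and combining with the identity from the previous step proves the claim.

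The main obstacle is the fidelity-form Pinsker step. I would handle it via data processing of quantum relative entropy: pick a measurement $\{M_i\}$ whose outcome distributions $p,q$ on $\rho,\sigma$ satisfy $\sum_i \sqrt{p_iq_i} = \F(\rho,\sigma)$, which exists by Uhlmann's characterization of fidelity together with a simultaneous diagonalization argument on an appropriate purifying register; the inequality then reduces to the classical Bhattacharyya--Kullback--Leibler bound $-\log\bigl(\sum_i \sqrt{p_iq_i}\bigr)^2 \leq D_{\mathrm{KL}}(p\|q)$, itself an instance of Jensen's inequality applied to $-\log$. Everything else in the plan is routine manipulation.
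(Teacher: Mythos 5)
Your argument is correct, but note that the paper itself does not prove this lemma at all: it is quoted as the Average Encoding Theorem and attributed to~\cite{KNTZ07}, so there is no internal proof to compare against. Your route is essentially the standard one behind that citation: write $\I(X:A)_\rho=\sum_x p_X(x)\,\relent{\rho^x_A}{\rho_A}$ for a classical--quantum state, and then bound $\BR^2(\rho,\sigma)=1-\F(\rho,\sigma)$ pointwise by the relative entropy via $1-\F\leq 1-\F^2\leq -\ln \F^2\leq \relent{\rho}{\sigma}$, with the log-base bookkeeping you did ($\ln 2<1$ and nonnegativity of relative entropy) making the constant $1$ work out. The only step that is genuinely under-justified as written is the existence of a measurement whose outcome distributions $p,q$ satisfy $\sum_i\sqrt{p_iq_i}=\F(\rho,\sigma)$: this is the Fuchs--Caves optimal measurement (measuring in the eigenbasis of $\sigma^{-1/2}\lvert\sqrt{\sigma}\sqrt{\rho}\rvert\,\sigma^{-1/2}$, with a small extra argument when $\sigma$ is not full rank), and the sketch via ``Uhlmann plus simultaneous diagonalization on a purifying register'' does not by itself deliver it. Equality is essential here, since data processing only gives $\sum_i\sqrt{p_iq_i}\geq \F(\rho,\sigma)$ for a generic measurement, which bounds the wrong side. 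The cleanest fix is either to cite Fuchs--Caves directly, or to replace the whole measured-Pinsker detour by the monotonicity of (sandwiched) R\'enyi relative entropies in the order parameter, which gives $-2\log\F(\rho,\sigma)=\mathrm{D}_{1/2}(\rho\|\sigma)\leq \relent{\rho}{\sigma}$ in one line; with either patch your proof is complete and matches the standard derivation of the cited result.
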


The following Shearer-type inequality for quantum information was shown in Ref.~\cite{ATYY16}. Classical variants appeared in \cite{GanorKR:2015, RaoS:2015}.

\begin{lemma}\label{lem:shearer}
	Consider registers $U_1,U_2,\ldots U_m, V$ and define $U\defeq U_1U_2\ldots U_m$. Consider a quantum state $\Psi_{UV}$ such that $\Psi_{U_1 U_2 \ldots U_m} = \Psi_{U_1}\otimes \Psi_{U_2}\otimes\ldots \otimes \Psi_{U_m}$. Let $S=\set{i_1,\ldots,i_{|S|}}\subseteq [m]$ be a random set picked  independently of $\Psi_{UV}$ satisfying $\prob{i\in S}\leq\frac{1}{k}$ for all $i$ and $U_S\defeq U_{i_1}U_{i_2}\ldots U_{i_{|S|}}$. Then it holds that
	\[\I(U_S:V~|~ S)_{\Psi}\leq\frac{\I(U:V)_{\Psi}}{k},\]
\end{lemma}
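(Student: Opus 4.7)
The plan is to adapt the standard classical Shearer argument to the quantum setting, relying only on the chain rule, monotonicity, and nonnegativity of quantum mutual information together with the product structure on $U$. Since $S$ is classical and independent of $\Psi_{UV}$, Fact \ref{fact:IvsB} reduces the task to showing
\[
\expec_{S}\, \I(U_S : V)_{\Psi} \leq \frac{\I(U:V)_{\Psi}}{k}.
\]
So the main work is to prove, for every fixed set $S \subseteq [m]$ with elements ordered $i_1 < i_2 < \cdots < i_{|S|}$, the pointwise inequality
\[
\I(U_S : V)_{\Psi} \leq \sum_{j \in S} \I(U_j : V \mid U_{<j})_{\Psi},
\]
where $U_{<j} \defeq U_1 U_2 \cdots U_{j-1}$. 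Averaging this bound over $S$ gives $\sum_{j=1}^m \prob{j\in S}\cdot \I(U_j : V \mid U_{<j})_{\Psi} \leq \tfrac{1}{k}\sum_{j=1}^m \I(U_j : V \mid U_{<j})_{\Psi}$, and the right-hand side equals $\I(U:V)_{\Psi}/k$ by telescoping via the chain rule (\fct{chain-rule}).

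To prove the pointwise inequality, I would first apply the chain rule inside $S$ to write
\[
\I(U_S : V)_{\Psi} \;=\; \sum_{\ell=1}^{|S|} \I\!\left(U_{i_\ell} : V \;\middle|\; U_{i_1}\cdots U_{i_{\ell-1}}\right)_{\Psi}.
\]
For a single term, let $W \defeq U_{i_1}\cdots U_{i_{\ell-1}}$, which is a sub-collection of $U_{<i_\ell}$. Since $\Psi_{U_1\cdots U_m}$ is a product state, the registers $U_{i_\ell}$ and $U_{<i_\ell}$ are independent, hence so are $U_{i_\ell}$ and $W$; by \fct{nonneg}, $\I(U_{i_\ell}:W)_{\Psi} = 0$ and $\I(U_{i_\ell}:U_{<i_\ell})_{\Psi} = 0$. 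The chain rule (\fct{chain-rule}) then gives
\[
\I(U_{i_\ell} : V \mid W)_{\Psi} \;=\; \I(U_{i_\ell} : VW)_{\Psi}, \qquad \I(U_{i_\ell} : V \mid U_{<i_\ell})_{\Psi} \;=\; \I(U_{i_\ell} : V U_{<i_\ell})_{\Psi}.
\]
Finally, since $W$ is obtained from $V U_{<i_\ell}$ by tracing out the $U_j$ with $j < i_\ell$ and $j \notin S$, monotonicity under the partial-trace operation (\fct{mono}) yields $\I(U_{i_\ell} : V W)_{\Psi} \leq \I(U_{i_\ell} : V U_{<i_\ell})_{\Psi}$, completing the term-by-term bound.

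Putting the chain-rule decomposition together with the termwise comparison and averaging over $S$ (using $\prob{i \in S} \leq 1/k$) yields the desired inequality. There is no real obstacle here: all three of the required ingredients (chain rule, monotonicity under partial trace, and vanishing of mutual information on product states) hold verbatim in the quantum setting. The only point that deserves care is to line up the partial-trace direction correctly in the monotonicity step — one needs the collection $W = U_{i_1}\cdots U_{i_{\ell-1}}$ to be a \emph{sub}-register of $U_{<i_\ell}$, which is exactly why ordering the elements of $S$ increasingly at the start is essential.
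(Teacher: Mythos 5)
Your argument is correct, and every ingredient you invoke (Fact~\ref{fact:IvsB} to pass to $\expec_S\,\I(U_S:V)_\Psi$, the chain rule of Fact~\ref{fact:chain-rule}, monotonicity under partial trace from Fact~\ref{fact:mono}, vanishing of mutual information on product states and nonnegativity of conditional mutual information from Fact~\ref{fact:nonneg}) is available verbatim in the quantum setting; note that nonnegativity is also what licenses replacing $\prob{j\in S}$ by $1/k$ term by term in the averaging step. The one point of comparison to flag is that the paper does not prove this lemma at all: it imports it as a black box from Anshu, Touchette, Yao and Yu~\cite{ATYY16} (with the classical variants credited to \cite{GanorKR:2015, RaoS:2015}), so your write-up is a self-contained derivation rather than a reproduction of an in-paper argument. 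What you give is the standard Shearer-type proof --- chain-rule decomposition along the increasingly ordered elements of $S$, use of the product structure on $U_1\cdots U_m$ to convert each conditional term $\I(U_{i_\ell}:V\mid U_{i_1}\cdots U_{i_{\ell-1}})$ into an unconditioned one, and monotonicity to compare against $\I(U_{i_\ell}:V\mid U_{<i_\ell}) = \I(U_{i_\ell}:V U_{<i_\ell})$, followed by telescoping --- which is essentially how the cited quantum version is established; your closing remark that the increasing ordering is what makes $U_{i_1}\cdots U_{i_{\ell-1}}$ a subregister of $U_{<i_\ell}$, and that the product assumption on $U$ is where the argument would otherwise break, correctly identifies the only delicate points.
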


\subsection{Classical communication complexity}
\label{sec:ccomm}

Let $f: \X \times \Y \to \{0,1\} $ be a total function (that is, its value is defined on every input) and $\epsilon \in (0, 1)$. In a two-party communication task, Alice is given an input $x \in \X$, Bob is given $y \in \Y$ and the task is to compute $f(x,y)$ by exchanging as few bits as possible. The parties are allowed to possess pre-shared randomness ($R$) and private randomness ($\srd_A$, $\srd_B$). Without loss of generality, we can assume that Alice communicates first and also gives the final output.  The communication cost of a protocol $\Pi$, denoted by $CC(\Pi)$, is the maximum number of bits the parties have to communicate over all possible inputs and values of the shared and private randomness. Let $R_{\eps}(f)$ represent the two-party randomized
communication complexity of $f$ with worst case error $\epsilon$, i.e., the communication of the best
two-party randomized protocol for $f$ with error at most $\epsilon$ over any input $(x, y)$. Worst-case error of the protocol $\Pi$ over the inputs is denoted by $\err(\Pi)$.

\begin{definition}[XOR function]

A function $F : \{0, 1\}^n  \times \{0, 1\}^n \to \{0, 1\}$ is called an XOR function if there exists a function $f : \{0, 1\}^n \to \{0,1\}$ such that $F(x_1, ... , x_n, y_1, ... , y_n) = f(x_1 \oplus y_1, ... , x_n \oplus y_n)$ for all
$x, y \in \{0, 1\}^n$. We denote $F = f \circ XOR$.

\end{definition}

\begin{definition}[Rank]

The rank of a matrix
$M$, denoted by $\rank(M)$  is the minimum integer $k$ for which there exist $k$ rank 1 matrices such that $M =
\sum^{k}_{i=1}M_i$.

\end{definition}

\begin{definition}[Non-negative Rank]
The non-negative rank of a matrix
$M$, denoted by $\rank^{+}(M)$  is the minimum integer $k$ for which there exist $k$ rank 1 matrices with non-negative entries such that $M = \sum^{k}_{i=1}M_i$.
\end{definition}

\begin{definition}[Approximate rank]
Let $\epsilon \in [0,1/2)$ and $M$ be an $|\X| \times |\Y|$ matrix.  The $\epsilon$-approximate rank of $M$ is 
defined as
\[
\rank_\epsilon(M) = \min_{\tilde M}\  \{\rank(\tilde M) : \forall x\in\X, y\in \Y, \, |\tilde M(x,y) - M(x,y)| \le \epsilon\}.
\] 
\end{definition}

\begin{definition}[Approximate  non-negative rank]
Let $\epsilon \in [0,1/2)$ and $M$ be an $|\X| \times |\Y|$ matrix.  The $\epsilon$-approximate non-negative rank of $M$ is 
defined as
\[
\rank^{+}_\epsilon(M) = \min_{\tilde M}\  \{\rank^{+}(\tilde M) : \forall x\in\X, y\in \Y, \, |\tilde M(x,y) - M(x,y)| \le \epsilon\}.
\] 
\end{definition}

\begin{definition}[Distributional Information Complexity]
Distributional information complexity of a randomized protocol $\Pi$ with respect to a distribution $XY \sim \mu$ is defined as 

$$\IC(\Pi, \mu) = \I(X: \Pi|YRR_B) + \I(Y:\Pi|XRR_A).$$
\end{definition}

\begin{definition}[Max Distributional Information Complexity]

Max-distributional information complexity of a randomized protocol $\Pi$ is defined as 

$$\IC(\Pi) = \max_{\mu}\IC(\Pi, \mu).$$
 
\end{definition}

\begin{definition}[Information Complexity of a function]

Information complexity of a  function $f$ is defined as 

$$\IC(f) = \inf_{\Pi : \err(\Pi) \leq \eps}\IC(\Pi).$$
 
\end{definition}

Note that since one bit of communication can hold at most one bit of information, for any
protocol $\Pi$ and distribution $\mu$ we have $\IC(\Pi, \mu)  \leq \CC(\Pi)$. This implies that information complexity of a function is a lower bound on the randomized communication complexity of a function.

\begin{lemma}[Cut-and-paste lemma (Lemma 6.3 in ~\cite{BJKS04})]
\label{lemma:classicalcutandpaste}
Let $(x, y)$ and $(x', y')$ be two inputs to a randomized protocol $\Pi$. Then 
$$ \BR( \Pi(x,y) , \Pi(x',y')   )= \BR( \Pi(x,y') , \Pi(x',y)   ).$$

\end{lemma}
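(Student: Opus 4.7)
Since the transcript $\Pi(x,y)$ is a classical random variable, the Bures metric reduces to $\BR(\Pi(x,y), \Pi(x',y')) = \sqrt{1 - \F(\Pi(x,y), \Pi(x',y'))}$ with the fidelity given by the Bhattacharyya coefficient $\F(P,Q) = \sum_\pi \sqrt{P(\pi) Q(\pi)}$. It therefore suffices to prove
\[
\F(\Pi(x,y), \Pi(x',y')) = \F(\Pi(x,y'), \Pi(x',y)),
\]
and then the claim follows by applying $\sqrt{1 - \cdot}$ to both sides.

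The plan is to exploit the rectangular structure of randomized protocols. Write a transcript as $\pi = (m_1, m_2, \ldots, m_T)$ with Alice (using her private coins $\srd_A$) sending the odd-indexed messages and Bob (using $\srd_B$) the even-indexed ones; treat the shared randomness $R$ as part of the transcript, or equivalently condition on $R=r$ and average at the end. Because $\srd_A$ and $\srd_B$ are independent of each other and of $R$, marginalizing them out yields the factorization
\[
\Pr[\Pi(x,y) = (\pi, r)] \;=\; \Pr[R = r] \cdot P_A^r(x,\pi) \cdot P_B^r(y,\pi),
\]
where $P_A^r(x,\pi) = \sum_{r_A} \Pr[\srd_A = r_A] \prod_{i\text{ odd}} \Pr[m_i \mid x, r_A, r, m_{<i}]$ depends on $x$ but not $y$, and symmetrically $P_B^r(y,\pi)$ depends on $y$ but not $x$. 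Plugging this factorization into the Bhattacharyya expression gives
\[
\F(\Pi(x,y), \Pi(x',y')) \;=\; \sum_{\pi, r} \Pr[R=r] \cdot \sqrt{P_A^r(x,\pi)\, P_A^r(x',\pi)} \cdot \sqrt{P_B^r(y,\pi)\, P_B^r(y',\pi)},
\]
which is manifestly invariant under the swap $y \leftrightarrow y'$ (and under $x \leftrightarrow x'$). The same expression therefore computes $\F(\Pi(x,y'), \Pi(x',y))$, proving the fidelity identity and hence the Bures identity.

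There is no substantive obstacle; the only step requiring care is the factorization itself, namely checking that after marginalizing the private randomness the transcript probability truly splits into an Alice-only factor times a Bob-only factor (and not merely that this holds pointwise in $\srd_A, \srd_B$). This is an immediate consequence of the alternating message structure together with the independence of $R$, $\srd_A$, and $\srd_B$, and is exactly the mechanism behind the original argument of~\cite{BJKS04}.
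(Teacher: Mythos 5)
Your proof is correct: the reduction of the Bures metric to the Bhattacharyya coefficient for classical transcript distributions, followed by the factorization of the transcript probability into an Alice-dependent and a Bob-dependent factor (after marginalizing private coins and conditioning on, or absorbing, the shared randomness $R$), is exactly the rectangle-property argument of~\cite{BJKS04}, which the paper itself invokes without proof. No gaps; the one point you flag — that the factorization survives marginalizing $\srd_A,\srd_B$ — is handled correctly since the sum over $(\srd_A,\srd_B)$ splits into the product of the two marginal sums.
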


\begin{fact}[Pythagorean property  (Lemma 6.4 in ~\cite{BJKS04})]
\label{lemma:classicalpythagorean}

 Let $(x, y)$ and $(x', y')$ be two inputs to a randomized protocol $\Pi$. Then 

$$ \BR^{2}( \Pi(x,y') , \Pi(x',y')   ) + \BR^{2}( \Pi(x,y) , \Pi(x',y)   )\leq 2 \BR^{2}( \Pi(x',y') , \Pi(x,y)   ).$$

\end{fact}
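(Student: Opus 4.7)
The proof proceeds by reducing to private-coin protocols and then exploiting the ``rectangle'' structure of such protocols, following the blueprint of BJKS04.

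First, I would condition on the shared randomness $R$: for each value $R = r$, the protocol $\Pi$ becomes a private-coin protocol, with transcript distribution $\Pi_r(x,y)$. Since $R$ is sampled independently of the inputs, its marginal distribution is the same under $(x,y)$, $(x',y')$, $(x,y')$, and $(x',y)$, so by the Averaging over classical registers property (\fct{subadd}) each of the four Bures-squared terms in the statement satisfies $\BR^2(\Pi(x,y), \Pi(x',y')) = \expec_{r \leftarrow R}[\BR^2(\Pi_r(x,y), \Pi_r(x',y'))]$. Hence it suffices to establish the inequality for each fixed $r$ and then take the expectation.

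Next I would invoke the rectangle property for private-coin protocols: for every transcript $\tau$, the probability $p_{xy}(\tau) \defeq \Pr[\Pi_r(x,y) = \tau]$ factors as $p_{xy}(\tau) = \alpha_x(\tau)\,\beta_y(\tau)$, where $\alpha_x(\tau)$ depends only on $x$ and Alice's messages in $\tau$ (marginalizing over her private randomness $R_A$), and $\beta_y(\tau)$ is the analogous quantity on Bob's side. This holds because $R_A$ and $R_B$ are independent and, conditioned on the transcript so far, each party's next message depends only on their own input and their own coins. Consequently
\[
\sqrt{p_{xy}(\tau)\,p_{x'y'}(\tau)} \;=\; \sqrt{\alpha_x(\tau)\alpha_{x'}(\tau)}\,\cdot\,\sqrt{\beta_y(\tau)\beta_{y'}(\tau)}.
\]

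Writing $\BR^2(p,q) = 1 - \F(p,q)$ with $\F(p,q) = \sum_\tau \sqrt{p(\tau)q(\tau)}$ for classical distributions, the desired inequality (for fixed $r$) is equivalent to
\[
2\,\F(\Pi_r(x',y'), \Pi_r(x,y)) \;\le\; \F(\Pi_r(x,y'), \Pi_r(x',y')) + \F(\Pi_r(x,y), \Pi_r(x',y)).
\]
Substituting the factorization, the right-hand side equals $\sum_\tau \sqrt{\alpha_x(\tau)\alpha_{x'}(\tau)}\,[\beta_y(\tau) + \beta_{y'}(\tau)]$, while the left-hand side equals $2 \sum_\tau \sqrt{\alpha_x(\tau)\alpha_{x'}(\tau)}\,\sqrt{\beta_y(\tau)\beta_{y'}(\tau)}$. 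The inequality then follows term-by-term from the pointwise AM-GM estimate $2\sqrt{\beta_y(\tau)\beta_{y'}(\tau)} \le \beta_y(\tau) + \beta_{y'}(\tau)$, summed over $\tau$; averaging over $r$ finishes the proof.

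I do not foresee a real obstacle here: the only substantive ingredient is the standard rectangle factorization of randomized protocols, and the remainder is a short computation combined with pointwise AM-GM. The main thing to be careful about is to handle the public randomness via \fct{subadd} before invoking the rectangle property, since the rectangle factorization itself requires the private-coin setting.
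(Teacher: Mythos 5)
The paper itself never proves this fact---it is imported directly from \cite{BJKS04} (Lemma~6.4)---so there is no internal argument to compare against; judged on its own, your proof is correct and is essentially the standard BJKS-style derivation. Conditioning on the public coin, invoking the rectangle factorization $\Pr[\Pi_r(x,y)=\tau]=\alpha_x(\tau)\beta_y(\tau)$ for private-coin transcripts, rewriting $1-\BR^2$ via the Bhattacharyya sum $\sum_\tau\sqrt{p(\tau)q(\tau)}$, and closing with the pointwise AM--GM bound $2\sqrt{\beta_y\beta_{y'}}\le\beta_y+\beta_{y'}$ is exactly the computation in the cited source, and every step checks out. One remark on the only delicate point: your use of Fact~\ref{fact:subadd} implicitly takes the public randomness $R$ to be part of the transcript random variable $\Pi(x,y)$, and this is not merely a convenience but necessary---if $R$ is marginalized out of the transcript the stated inequality can fail outright (e.g.\ with a public bit $r$, Alice sending $x\oplus r$ and Bob sending $y\oplus r$, the message distributions on $(0,0)$ and $(1,1)$ coincide while those on $(0,1)$ and $(1,0)$ are disjoint from them, giving left-hand side $2$ against right-hand side $0$). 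So your choice to fold $R$ into the transcript before applying the rectangle property is the correct reading of the statement, and it is also the reading under which the paper later applies it.
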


\subsection{Quantum communication complexity}
\label{sec:comm}

In quantum communication complexity, two players wish to compute a classical function $F\colon\X \times \Y \to \{0,1\}$ for some finite sets $\X$ and $\Y$. The inputs $x\in \X$ and $y\in\Y$ are given to two players Alice and Bob, and the goal is to minimize the quantum communication between them required to compute the function. 

While the players have classical inputs, the players are allowed to exchange quantum messages. Depending on whether or not we allow the players arbitrary shared entanglement, we get $\mathrm{Q}(F)$, bounded-error quantum communication complexity without shared entanglement and $\Q(F)$, for the same measure with shared entanglement. Obviously $\Q(F) \leq \mathrm{Q}(F)$.
In this paper we will only work with $\Q(F)$, which makes our results stronger since we prove lower bounds in this work. 

 An entanglement assisted quantum communication protocol $\Pi$ for a function is as follows. Alice and Bob start with preshared entanglement $\ket{\Theta_0}_{A_0B_0}$. Upon receiving inputs $(x,y)$, where Alice gets $x$ and Bob gets $y$, they exchange quantum messages. At the end of the protocol, Alice applies a two outcome measurement on her qubits and correspondingly outputs $1$ or $0$. Let $O(x,y)$ be the random variable corresponding to the output produced by Alice in $\Pi$, given input $(x,y)$. 

Let $\mu$ be a distribution over $\dom(F)$. 
Let inputs to Alice and Bob be given in registers $X$ and $Y$ in the state 
\begin{eqnarray}
\label{eq:classicalst}
\rho_\mu := \sum_{x,y}\mu(x,y)\ketbra{x}_X\otimes\ketbra{y}_Y.
\end{eqnarray}
Let these registers be purified by $R_X$ and $R_Y$ respectively, which are not accessible to either players. Denote
\begin{eqnarray}
\ket{\mu}_{X R_X Y R_Y} := \sum_{x,y}\sqrt{\mu(x,y)}\ket{xxyy}_{XR_X Y R_Y}.
\end{eqnarray}
Let Alice and Bob initially hold register $A_0,B_0$ with shared entanglement $\Theta_{0,A_0B_0}$. Then the initial state is 
\begin{eqnarray}
\label{eq:initialstate}
\ket{\Psi_0}_{XYR_XR_YA_0B_0} \defeq \ket{\mu}_{XYR_XR_Y}\ket{\Theta_0}_{A_0B_0}.
\end{eqnarray}

Alice applies a unitary $U^1: XA_0\rightarrow XA_1C_1$ such that the unitary acts on $A_0$ conditioned on $X$.  She sends $C_1$ to Bob. Let $B_1\equiv B_0$ be a relabeling of Bob's register $B_0$. He applies $U^2: YC_1B_1\rightarrow YC_2B_2$ such that the unitary acts on $C_1B_0$ conditioned on $Y$. He sends $C_2$ to Alice. Players proceed in this fashion for $t$ messages, for $t$ even, until the end of the protocol. At any round $r$, let the registers be $A_rC_rB_r$, where $C_r$ is the message register, $A_r$ is Alice's register and $B_r$ is Bob's register. If $r$ is odd, then $B_r \equiv B_{r-1}$ and if $r$ is even, then $A_r\equiv A_{r-1}$. On input $x, y$, let the joint state in registers $A_rC_rB_r$ be $\Theta_{r,A_rC_rB_r}^{x, y}$. Then the global state at round $r$ is
\begin{eqnarray}
\label{eq:roundrstate}
\ket{\Psi_r}_{XYR_XR_YA_rC_rB_r} \defeq \sum_{x,y}\sqrt{\mu(x,y)}\ket{xxyy}_{XR_XYR_Y}\ket{\Theta_r^{x, y}}_{A_rC_rB_r}.
\end{eqnarray}

We define the following quantities.
\begin{align*}
\textrm{Worst-case error:} & \quad \err(\Pi) \defeq \max_{(x,y)} \{ \Pr[O(x,y) \neq F(x,y)] \} .\\
\textrm{Quantum CC of a protocol:}&\quad  \QCC(\Pi) \defeq \sum_i |C_i| . \\
\textrm{Quantum CC of $F$:}&\quad  \Q_\eps(F) \defeq \min_{\Pi: \err(\Pi) \leq \eps} \QCC(\Pi).  \\
\end{align*}

Our first fact  links $\err(\Pi)$ with the distance $\Delta$ between a pair of final states corresponding to inputs with different outputs.

\begin{fact}[Error vs. distance] \label{fact:deltavserror}
Consider a non-constant function $f$, and let $x, y$ and $y^\prime$ 
be  inputs such that 
$f(x, y) \not= f(x, y^\prime)$.  
For any protocol $\Pi$ with $t$ rounds, it holds that
\begin{align*}
\Delta (\Theta_{t, A_t C_t}^{x, y}, \Theta_{t, A_t C_t}^{x, y^\prime}) \geq 1 - 2\err(\Pi).
\end{align*}
\end{fact}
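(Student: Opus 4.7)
The plan is to use the operational (Helstrom) characterization of the trace distance, namely that for any two states $\rho,\sigma$ and any effect $0 \preceq M \preceq \id$, we have $\Delta(\rho,\sigma)\geq \Tr(M(\rho-\sigma))$. Since Alice produces her output by applying a fixed two-outcome measurement on her registers $A_t C_t$ (her private register plus the last message received from Bob, using the convention that $t$ is even so that $C_t$ is in Alice's possession at the end of the protocol), there is a POVM element $M$ (possibly depending on $x$, but not on $y$) such that the probability of outputting $1$ on input $(x,y)$ equals $\Tr(M\,\Theta_{t,A_tC_t}^{x,y})$.

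First I would fix inputs $x,y,y'$ with $f(x,y)\neq f(x,y')$ and assume without loss of generality, by symmetry of trace distance and relabeling of the two outcomes, that $f(x,y')=1$ and $f(x,y)=0$. The worst-case error bound then gives
\begin{align*}
\Tr(M\,\Theta_{t,A_tC_t}^{x,y'}) &\;\geq\; 1-\err(\Pi),\\
\Tr(M\,\Theta_{t,A_tC_t}^{x,y}) &\;\leq\; \err(\Pi).
\end{align*}
Subtracting these and invoking the operational characterization of trace distance yields
\[
\Delta\bigl(\Theta_{t,A_tC_t}^{x,y'},\Theta_{t,A_tC_t}^{x,y}\bigr)\;\geq\;\Tr\!\bigl(M\bigl(\Theta_{t,A_tC_t}^{x,y'}-\Theta_{t,A_tC_t}^{x,y}\bigr)\bigr)\;\geq\;1-2\err(\Pi),
\]
and then symmetry of $\Delta$ gives the statement as written.

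There is no real obstacle here; the only point requiring a little care is ensuring that the same measurement effect $M$ is applied to both states in the comparison. This is fine because the two states being compared share Alice's input $x$, and Alice's output measurement depends only on $x$ and on the registers $A_tC_t$ she holds after the final message, not on Bob's input $y$. One could equivalently absorb any $x$-dependent final local unitary into the definition of $\Theta_{t,A_tC_t}^{x,y}$, which does not affect the trace distance by monotonicity under isometries (Fact~\ref{fact:monotonicitydistance}).
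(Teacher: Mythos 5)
Your argument is correct: the Helstrom-type bound $\Delta(\rho,\sigma)\geq \Tr\bigl(M(\rho-\sigma)\bigr)$ for $0\preceq M\preceq \id$, applied with Alice's final two-outcome measurement (which depends only on $x$ and the registers $A_tC_t$ she holds, so the same effect $M$ is used for both states), immediately gives $1-2\err(\Pi)$ from the worst-case error guarantees on $(x,y)$ and $(x,y')$. The paper states this fact without proof, and your derivation is exactly the standard argument one would supply for it, including the right care about the measurement being independent of Bob's input.
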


In below, let $A'_r, B'_r$ represent Alice and Bob's registers after reception of the message $C_r$ at round $r$. That is, at even round $r$, $A'_r = A_rC_r, B'_r=B_r$ and at odd $r$, $A'_r=A_r, B'_r=B_rC_r$. We will need the following version of the quantum-cut-and-paste lemma from \cite{NT16} (also see \cite{JRS03, JN14} for similar arguments). This is a special case of \cite[Lemma 7]{NT16} and we have rephrased it using our notation. 

\begin{lemma} [Quantum cut-and-paste] \label{lem:quantum_cut_paste} Let $\Pi$ be a quantum protocol with classical inputs and consider distinct inputs $u,u'$ for Alice and $v,v'$ for Bob. Let $\ket{\Psi_{0,A_0 B_0}}$ be the initial shared state between Alice and Bob. Also let $\ket{\Psi_{k,A'_k B'_k}^{u'',v''}}$ be the shared state after round $k$ of the protocol when the inputs to Alice and Bob are $(u'',v'')$ respectively. For $k$ odd, let
$$
h_k = \BR \left( \Psi_{k,B'_k}^{u,v}, \Psi_{k,B'_k}^{u',v}\right)
$$
and for even $k$, let
$$
h_k = \BR \left( \Psi_{k,A'_k}^{u,v}, \Psi_{k,A'_k}^{u,v'}\right) .
$$
Then 
$$
\BR \left( \Psi_{r,A'_r}^{u',v},  \Psi_{r,A'_r}^{u',v'}\right) \le 2 \sum_{k=1}^{r} h_k .
$$
\end{lemma}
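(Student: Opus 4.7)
The plan is to prove the bound by induction on the round number $r$, tracking the Bures distance $d_r := \BR(\Psi^{u',v}_{r,A'_r}, \Psi^{u',v'}_{r,A'_r})$ and showing $d_r \le 2\sum_{k=1}^r h_k$. The base case $r=0$ is immediate because the initial state $\ket{\Psi_0}$ is input-independent, so $d_0 = 0$.

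For the inductive step when $r$ is odd (Alice's turn), her input is $u'$ in both hypothetical executions, so she applies the identical unitary $U_r^{u'}$, after which the outgoing message $C_r$ is traced out from Alice's view. By unitary invariance and monotonicity of Bures distance (Fact~\ref{fact:monotonicitydistance}), $d_r \le d_{r-1}$, and the inductive hypothesis immediately closes this case.

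When $r$ is even (Bob's turn), the two executions apply different unitaries $V_r^v$ vs $V_r^{v'}$, so $d_r$ can legitimately grow. Here I would insert a chain of hybrid states via the triangle inequality (Fact~\ref{fact:triangle}), routing through $\Psi^{u,v}_{r,A'_r}$ and $\Psi^{u,v'}_{r,A'_r}$ (states with Alice's input artificially set to $u$):
\[
d_r \le \BR(\Psi^{u',v}_{r,A'_r}, \Psi^{u,v}_{r,A'_r}) + \BR(\Psi^{u,v}_{r,A'_r}, \Psi^{u,v'}_{r,A'_r}) + \BR(\Psi^{u,v'}_{r,A'_r}, \Psi^{u',v'}_{r,A'_r}).
\]
The middle term is exactly $h_r$. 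The two outer ``side-trip'' terms compare executions differing only in Alice's input (Bob's input held fixed); they measure how an Alice-input swap reaches Alice's own reduced state by round-tripping through Bob's messages. I would bound these by a symmetric analog of the main claim: since $h_k$ for odd $k$ captures the direct effect of an Alice-input swap on Bob's reduced state, propagating it through Bob's subsequent unitary and message controls each outer term by at most $\sum_{k \le r-1,\, k\text{ odd}} h_k$, yielding the overall bound $d_r \le 2\sum_{k=1}^r h_k$.

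The hard part will be making the outer-term analysis precise — essentially requiring a joint induction on a stronger statement that symmetrically bounds both Alice- and Bob-input swaps, and handling the mismatch between the fixed inputs appearing in the definition of $h_k$ and those appearing in the side-trip terms (the outer terms may involve Bob's input held at $v'$, while $h_k$ for odd $k$ is defined with Bob's input at $v$). Reconciling this cleanly may require additional triangle-inequality steps or a stronger inductive formulation. The factor of $2$ in the final bound arises naturally from combining the two outer side trips of the triangle-inequality chain.
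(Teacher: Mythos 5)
First, note that the paper does not prove this lemma itself: it invokes it as a special case of Lemma 7 of [NT16], where the proof is a round-by-round hybrid argument on the \emph{global pure} states using Uhlmann's theorem (the local transition lemma). Your proposal follows a different, purely reduced-state route, and it has a genuine gap at its central step. In the even-round case you bound $d_r$ by a triangle inequality through the hybrids $\Psi^{u,v}_{r,A'_r}$ and $\Psi^{u,v'}_{r,A'_r}$, and then claim the two outer terms, e.g.\ $\BR(\Psi^{u',v}_{r,A'_r},\Psi^{u,v}_{r,A'_r})$, are controlled by the odd-round quantities $h_k$. This is false: those outer terms compare Alice's \emph{own} reduced state under a swap of \emph{her} input, which can be maximal no matter how small the $h_k$ are. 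Concretely, take a protocol in which Alice keeps a copy of her input in $A_r$ but all messages are input-independent; then every $h_k=0$, yet $\BR(\Psi^{u',v}_{r,A'_r},\Psi^{u,v}_{r,A'_r})=1$. For the same reason, the ``stronger symmetric statement'' you propose for the joint induction (bounding both Alice- and Bob-input swaps as seen on Alice's side) cannot be true; the only quantities the $h_k$ can control are the cross ones (Bob's view of an Alice-input swap, Alice's view of a Bob-input swap), and with only those the naive triangle inequality on reduced states does not chain.

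The missing ingredient is purification. Since for fixed classical inputs the joint state $\ket{\Psi^{u'',v''}_{k,A'_kB'_k}}$ is pure, the bound $\BR(\Psi^{u,v}_{k,B'_k},\Psi^{u',v}_{k,B'_k})=h_k$ (odd $k$) yields, via Uhlmann, a unitary acting only on $A'_k$ that maps $\ket{\Psi^{u,v}_k}$ to within Bures distance $h_k$ of $\ket{\Psi^{u',v}_k}$; similarly the even-round $h_k$ yield correcting unitaries on $B'_k$. The actual proof (as in [NT16]) composes these local corrections round by round — they commute with the other party's subsequent input-controlled unitaries, which is what reconciles the mismatch you flag between the inputs fixed in $h_k$ and those appearing in the comparison — accumulating error $2\sum_k h_k$ at the level of global states, and only at the end traces down to $A'_r$ (where the correction on Bob's side disappears under partial trace). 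Your odd-round step ($d_r\le d_{r-1}$ by unitary invariance and monotonicity) is fine, but the even-round step needs to be redone along these purification lines rather than by a triangle inequality on Alice's reduced states.
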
 

As discussed in the introduction, approximate rank lower bounds bounded-error quantum communication complexity with shared entanglement~\cite{LS08c}:

\begin{fact}
\label{fact:arankQ}
For any two-party function $F:\X \times \Y \to \B$ and $\eps\in[0,1/3]$,  we have $\Q_\eps(F) = \Omega(\log \rank_\eps(F)) - O(\log\log(|\X|\cdot |\Y|))$.
\end{fact}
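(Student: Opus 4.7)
The plan is to prove the contrapositive: given any entanglement-assisted quantum protocol $\Pi$ computing $F$ with $c$ qubits of communication and worst-case error $\eps < 1/3$, I will construct a real matrix $\tilde M \in \bR^{\X \times \Y}$ satisfying $|\tilde M(x,y) - F(x,y)| \leq \eps$ everywhere and $\rank(\tilde M) \leq 2^{O(c + \log\log N)}$, where $N \defeq |\X|\cdot|\Y|$. Plugging this into the definition of $\rank_\eps$ and rearranging yields $\Q_\eps(F) = c \geq \Omega(\log \rank_\eps(M_F)) - O(\log\log N)$.

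The main technical lemma is the \emph{Kremer--Yao rank bound}: for an entanglement-\emph{free} quantum protocol using $c$ qubits of communication, the matrix of acceptance probabilities $P(x,y) \defeq \Pr[\text{protocol outputs } 1 \text{ on } (x,y)]$ has rank at most $2^{O(c)}$. I would prove this by induction on the round index $k$. The initial state (conditioned on $x,y$) is a product state, and I claim the joint pure state after round $k$ admits a decomposition
\begin{align*}
\ket{\Psi_k^{x,y}}_{A_k' B_k'} \;=\; \sum_{i=1}^{d_k} \ket{\phi_i^{k,x}}_{A_k'} \otimes \ket{\chi_i^{k,y}}_{B_k'}
\end{align*}
with $d_k \leq 2^{\sum_{j \leq k}|C_j|}$, each $\ket{\phi_i^{k,x}}$ and $\ket{\chi_i^{k,y}}$ depending on $x$ alone or $y$ alone respectively. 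The inductive step uses that a party's round-$k$ unitary acts only on their own side of the decomposition, while each transmitted qubit multiplies the number of terms by at most $2$ via expansion in the computational basis of the sent register. At the end of the protocol, $P(x,y) = \bra{\Psi_t^{x,y}} (M_1 \otimes \id) \ket{\Psi_t^{x,y}}$ for Alice's two-outcome measurement operator $M_1$; this expands as a sum of at most $d_t^2 = 2^{O(c)}$ terms of the form $f_{ij}(x)\, g_{ij}(y)$, giving the claimed rank bound.

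To handle shared entanglement, I would reduce to the entanglement-free case at a cost of only $O(\log\log N)$ extra qubits of communication. The initial shared state $\ket{\Theta_0}_{A_0 B_0}$ may be taken pure, and each acceptance probability is a linear functional $P(x,y) = \Tr(M_{x,y}\,\ketbra{\Theta_0})$ of this state, where $0 \preceq M_{x,y} \preceq \id$ is the observable induced by running the protocol on input $(x,y)$. The claim is that $\ket{\Theta_0}$ can be replaced by a state $\ket{\tilde\Theta_0}$ of Schmidt rank $r = \polylog(N)$ so that every one of the $N$ probabilities $P(x,y)$ changes by at most $1/(6N)$. This follows from a standard $\delta$-net argument on Schmidt-rank-$r$ pure states combined with a union bound over the $N$ inputs; crucially, only $N$ linear functionals need to be preserved, not the whole state. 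Alice then locally prepares $\ket{\tilde\Theta_0}$ and sends Bob his half using $\log r = O(\log\log N)$ extra qubits; the resulting protocol is entanglement-free, uses $c + O(\log\log N)$ qubits of total communication, and has error at most $\eps + 1/6 < 1/3$. Applying Kremer--Yao to this protocol produces the required $\tilde M$ with rank at most $2^{O(c + \log\log N)}$.

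The main obstacle will be making the entanglement-reduction step tight. A naive Schmidt truncation of an arbitrary $\ket{\Theta_0}$ can require rank polynomial in $N$, which would only yield an $O(\log N)$ loss. The improvement to $O(\log\log N)$ relies on the counting observation that we need only preserve $N$ linear functionals, so a net that simultaneously covers all of them has size $\exp(\polylog N)$, giving $\log r = O(\log\log N)$. Once this reduction is in place, the rest of the argument is routine bookkeeping.
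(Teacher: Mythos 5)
The Kremer--Yao half of your plan is fine: for an entanglement-free protocol with $c$ qubits of communication, the acceptance-probability matrix does have rank $2^{O(c)}$, by exactly the inductive product-decomposition you describe. The genuine gap is the entanglement-reduction step, and it is the heart of the matter. You assert that $\ket{\Theta_0}$ can always be replaced by a state of Schmidt rank $\polylog(N)$ so that all $N$ acceptance probabilities $\Tr(M_{x,y}\,\ketbra{\Theta_0})$ move by at most $1/(6N)$, justified by ``a $\delta$-net over Schmidt-rank-$r$ states plus a union bound.'' A net over low-Schmidt-rank states says nothing about whether \emph{some} low-Schmidt-rank state approximately reproduces these $N$ linear functionals of a high-Schmidt-rank state; no counting argument of this shape produces $\ket{\tilde\Theta_0}$. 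Worse, the statement you need is false in general: there is no quantum analogue of Newman's theorem, and robust dimension lower bounds for nonlocal games (self-testing-type games with question set of size $N$ in which achieving a value within a small constant of optimal forces Schmidt rank far exceeding $\polylog(N)$) yield protocols with $O(1)$ communication in which every state of Schmidt rank $\polylog(N)$, measured with the very same protocol observables, loses a constant in the average acceptance probability, hence shifts some individual $P(x,y)$ by a constant --- let alone by $1/(6N)$. So the reduction to the entanglement-free case cannot be carried out along these lines, and how much entanglement a bounded-error protocol really needs is a well-known open problem.

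For comparison, the paper does not prove this fact at all; it cites Lee--Shraibman, and the standard argument sidesteps entanglement reduction. One shows that the acceptance-probability matrix of an entanglement-assisted $c$-qubit protocol has $\gamma_2$-norm at most $2^{O(c)}$ (the factorization through the shared state handles arbitrary entanglement directly), and then converts a bounded-$\gamma_2$, entrywise-bounded approximation into a low-rank approximation via Johnson--Lindenstrauss dimension reduction on the factorization vectors, giving rank $O\bigl(2^{O(c)}\log(|\X|\cdot|\Y|)\bigr)$; this dimension-reduction step is precisely where the additive $O(\log\log(|\X|\cdot|\Y|))$ loss in the statement comes from. Replacing your entanglement-reduction-plus-Kremer--Yao route by this $\gamma_2$-based argument is the way to make the proof go through.
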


\subsection{Quantum information complexity}
\label{sec:qic}

\begin{definition}
Given a quantum protocol $\Pi$ with classical inputs distributed as~$\mu$,
the \emph{quantum information cost} is defined as
\begin{align}
\QIC (\Pi, \mu) \quad = \quad &
\sum_{i~\mathrm{odd}} \I (R_X R_Y  \!:\!
C_i \,|\,  Y  B_i) \quad + \quad 
\sum_{i~\mathrm{even}} \I (R_X R_Y  \!:\!
C_i \,|\,  X A_i) \enspace.
\end{align}
\end{definition}

\begin{definition}
Given a quantum protocol $\Pi$ with classical inputs distributed as~$\mu$,
the cumulative \emph{Holevo information cost}  
is defined as
\begin{align*}
\HQIC (\Pi, \mu) \quad = \quad & \sum_{i~\mathrm{odd}} \I
(X \!:\! B_i C_i \,|\, 
Y ) \quad + \quad \sum_{i~\mathrm{even}} \I
(Y \!:\! A_i C_i \,|\, 
X ) \enspace.
\end{align*}
\end{definition}

\begin{definition}
Given a quantum protocol $\Pi$ and a product distribution $\mu$ over the classical inputs, the cumulative
\emph{superposed-Holevo information cost} is defined as
\begin{align*}
\SQIC (\Pi, \mu) \quad := \quad &  \sum_{i~\mathrm{odd}} \I (X \!:\! Y R_Y  B_i C_i )_{\rho_i} \quad + \quad
\sum_{i~\mathrm{even}} \I (Y \!:\! X R_X  A_i C_i )_{\rho_i} \enspace.
\end{align*}
\end{definition}

Note that for product input distributions on $XY$ and for each $i$, 
\begin{align}
 \I (X \!:\!   B_i C_i | Y )_{\rho_i} =  \I (X \!:\! Y   B_i C_i )_{\rho_i} \leq \I (X \!:\! Y R_Y  B_i C_i )_{\rho_i}, \\
\I (Y \!:\!   A_i C_i | X )_{\rho_i} =  \I (Y \!:\! X   A_i C_i )_{\rho_i} \leq \I (Y \!:\! X R_X  A_i C_i )_{\rho_i}.
\end{align}
Combining with other results in Ref.~\cite{LT17}, we get the following for any 
$t$ round protocol $\Pi$ and any product distribution $\mu$:
\begin{align}
2\QCC(\Pi)  & \geq \QIC (\Pi, \mu) \\
	& \geq \frac{1}{t} \SQIC (\Pi, \mu) \label{eq:SHQIC} \\
	& \geq \frac{1}{t} \HQIC (\Pi, \mu) \\
	& \geq \frac{1}{2t} \QIC(\Pi, \mu).
\end{align}
.

\section{Lower bound on the information complexity of $\SINK \circ \XOR$}

\subsection{Reducing Equality to $\SINK \circ \XOR$}
\label{subsec:eqtosink}
We define the Equality function as 
$$\Eq{}(x,y) = \begin{cases} 1 &\textit{ if } x=y,\\ 0 &\textit{ otherwise.}\end{cases}$$
Recall the $\SINK$ function from Definition \ref{sink}. Following~\cite{ANS18} we use projections of the inputs in our proof to analyze the input of the $\SINK$ function. Let $w \in \{0, 1\}^{m \choose 2}$. Let $E_{v_i}$ be the set of $m-1$ input coordinates that correspond to the edges incident to $v_i$. We use the notation $w_{v_i}$ to denote the input projected to the
coordinates in $E_{v_i}$. Note that $w_{v_i}$ decides whether or not $v_i$ is a sink. By $z_{v_i}$, we refer to the $m- 1$ bit
string such that $v_i$ is a sink if and only if $w_{v_i} = z_{v_i}$. $\SINK$ can be written as
$$\SINK(w) =  \vee^{m}_{i=1} \Eq{}(w_{v_i} ,z_{v_i})$$
since only one of the vertex can be a sink in the complete directed graph. Our communication function is $\SINK \circ \XOR: \{ 0,1 \}^{m \choose 2} \times \{ 0,1 \}^{m \choose 2} \to \{0,1\}$. Similar to $\SINK$, $\SINK\circ\XOR$ can be represented as 
$$\SINK\circ\XOR(x,y) =  \vee^{m}_{i=1} \Eq{}(x_{v_i} , y_{v_i} \oplus z_{v_i}).$$ 
Our first result is as follows.
\begin{theorem}
\label{theo:SINKtoEQcl}
Suppose $m\geq 10$. Let $\Pi$ be a protocol for $\SINK\circ\XOR$ which makes a worst case error of at most $\frac{1}{4}$. There exists a protocol $\Pi'$ for $\Eq{}$ that makes a worst case error of at most $\frac{1}{4}+ \frac{m-1}{2^{m-2}} \leq \frac{1}{3}$. Furthermore, it holds that 
$$\IC(\Pi', \nu) \leq \frac{2}{m}\IC(\Pi, \mu),$$ where $\nu$ is the uniform distribution over inputs to $\Eq{}$ and $\mu$ is uniform over the inputs to $\SINK\circ\XOR$.
\end{theorem}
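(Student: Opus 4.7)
My plan is to construct $\Pi'$ by embedding the Equality instance into a single vertex neighborhood of $\SINK \circ \XOR$, padding the remaining coordinates with fresh uniform randomness, and then running $\Pi$ on the resulting input. Concretely, given the Equality input $(u,v) \in \{0,1\}^{m-1} \times \{0,1\}^{m-1}$, Alice and Bob first use shared randomness to sample $I \in [m]$ uniformly. Alice then constructs her $\SINK \circ \XOR$ input $X \in \{0,1\}^{\binom{m}{2}}$ by setting $X_{v_I} := u$ and drawing $X_{\setminus v_I}$ uniformly from her private randomness; Bob symmetrically sets $Y_{v_I} := v \oplus z_{v_I}$ and samples $Y_{\setminus v_I}$ privately. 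Since all these pieces are independent and uniform, the induced pair $(X,Y)$ is distributed exactly as $\mu$. They execute $\Pi$ on $(X,Y)$ and output whatever $\Pi$ outputs.

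For the error analysis, I would use that by construction $(X \oplus Y)_{v_I} = u \oplus v \oplus z_{v_I}$ equals $z_{v_I}$ iff $u = v$, so $v_I$ is a sink iff $\Eq{}(u,v) = 1$. For any other vertex $v_j$, $j \neq I$, the $m-2$ edges of $v_j$ not incident to $v_I$ carry independent uniform XOR values, so $\Pr[v_j \text{ is a sink}] \leq 2^{-(m-2)}$. A union bound over the $m-1$ other vertices gives an additional error of at most $(m-1)/2^{m-2}$ on top of $\err(\Pi) \leq 1/4$, which yields $\err(\Pi') \leq 1/4 + (m-1)/2^{m-2} \leq 1/3$ for $m \geq 10$.

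For the information cost bound, I would view the randomness of $\Pi'$ as $R' = (R,I)$ shared, $R'_A = (R_A, X_{\setminus v_I})$ private to Alice, and $R'_B = (R_B, Y_{\setminus v_I})$ private to Bob. Using that $I$ is independent of everything and that conditioning on $(V, I)$ together with the deterministic $z_{v_I}$ is equivalent to conditioning on $(Y_{v_I}, I)$, I would derive
\[
\I(U : \Pi' \mid V R' R'_B) \;=\; \expec_{I}\, \I(X_{v_I} : \Pi \mid Y R R_B) \;=\; \frac{1}{m} \sum_{i=1}^{m} \I(X_{v_i} : \Pi \mid Y R R_B).
\]
The key step is then a classical Shearer-type inequality: the edges of $X$ are independent uniform under $\mu$ and each edge appears in exactly two neighborhoods $E_{v_i}$, so the conditional Shearer lemma applied to $H(X_{v_i} \mid \Pi Y R R_B)$, combined with $\sum_i H(X_{v_i}) = 2 H(X)$, gives $\sum_{i=1}^m \I(X_{v_i} : \Pi \mid Y R R_B) \leq 2\, \I(X : \Pi \mid Y R R_B)$. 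The symmetric argument bounds the $V$-term of $\IC(\Pi', \nu)$ by $\frac{2}{m} \I(Y:\Pi \mid X R R_A)$, and summing yields $\IC(\Pi', \nu) \leq \frac{2}{m} \IC(\Pi, \mu)$.

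The step that requires most care is precisely this Shearer invocation: the $E_{v_i}$'s overlap (each edge lies in two of them), so the $X_{v_i}$'s are not independent and the product-marginal hypothesis of the quantum Shearer lemma stated earlier does not apply. I would therefore invoke the classical Shearer inequality at the level of the underlying independent edge-variables of $X$, and verify that the $k=2$ overhead survives conditioning on $(Y, R, R_B)$ since these are independent of $X$ under $\mu$.
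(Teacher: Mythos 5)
Your proposal is correct and follows essentially the same route as the paper: the identical embedding protocol (a shared-randomness choice of a uniformly random vertex neighborhood $E_{v_I}$, with $z_{v_I}$ shifting Bob's input and the remaining coordinates privately sampled uniformly), the same $(m-1)/2^{m-2}$ union-bound error analysis, and the same $2/m$ Shearer-based information bound. The only cosmetic difference is that you establish the Shearer-type information step directly via the conditional Shearer entropy inequality together with the independence and uniformity of the edge coordinates, whereas the paper invokes its stated Shearer-type lemma for information (Lemma~\ref{lem:shearer}); these are equivalent, and your handling of the overlap (each edge in exactly two neighborhoods, giving $k=2$) is exactly the point the paper relies on.
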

\begin{proof}
We have
 $$\IC(\Pi, \mu) =\I(X:\Pi |YR\srd_B) +\I(Y:\Pi |XR\srd_A) = \I(X:\Pi YR\srd_B) + \I(Y:\Pi XR \srd_A),$$
where the information quantities are evaluated on $\mu$ and the associated $\Pi$. Let $S$ be a random variable which takes values in $\{E_{v_1}, E_{v_2},\ldots , E_{v_m}\}$ with uniform probability. Let $X_{E_{v_i}}$ (similarly $Y_{E_{v_i}}$) be the restriction of $X$ (similarly $Y$) to coordinates in $E_{v_i}$. Since each coordinate $j$ appears in exactly two sets in $\{E_{v_1}, E_{v_2},\ldots , E_{v_m}\}$, we have $\Pr[j\in S] = \frac{2}{m}$. Thus, from Lemma \ref{lem:shearer}, we have
\begin{eqnarray}
\frac{2}{m}\IC(\Pi, \mu)  & \geq \expec_{s}[\I(X_S: Y\Pi R \srd_B| S=s) + \I(Y_S: X\Pi R \srd_A| S=s)] \label{eq:clshearer} \\
		& = \expec_{s}[\I(X_S: \Pi|Y R \srd_B , S=s) + \I(Y_S: \Pi| X R \srd_A, S=s)].
\end{eqnarray}
The protocol $\Pi'$ for $\Eq{}$ is now as follows, for inputs $c,d \in \{0,1\}^{m-1}$ (we use $c,d$ as inputs here to avoid confusion with $x,y$ for $\SINK\circ\XOR$). 
\begin{itemize}
\item Alice and Bob take a sample $s$ from $S$ using shared randomness. Let $i$ be such that $E_{v_i}=s$.
\item They set $x_s=c$ and $y_s=d\oplus z_{v_i}$. Alice  samples $x_{\bar{s}}$ uniformly at random from  private randomness and Bob samples $y_{\bar{s}}$ uniformly at random from private randomness. Here $\bar{s}$ is the complement of $s$. This specifies the input $x,y$ for $\SINK\circ\XOR$. 
\item They run the protocol $\Pi$ and output accordingly. 
\end{itemize}
Observe that $x_s$ and $y_s$ are distributed uniformly if $c$ and $d$ are. Thus, 
\begin{eqnarray*}
\IC(\Pi', \nu) & =  \expec_{s}[\I(X_S: \Pi|Y_S R Y_{\bar{S}} \srd_B, S=s) + \I(Y_S: \Pi| X_S R X_{\bar{S}} \srd_A, S=s)] \\
		& = \expec_{s}[\I(X_S: \Pi|Y R \srd_B, S=s) + \I(Y_S: \Pi| X R \srd_A, S=s)] ,
\end{eqnarray*}
where the information quantities are evaluated on $\mu$ and the associated $\Pi$, 
and the desired information bound follows by (\ref{eq:clshearer}).

To bound the worst case error of $\Pi'$, we argue as follows.
Fix some input $c, d$ to $\Pi^\prime$.
 If $c=d$, then $x_s = y_s \oplus z_{v_i}$ which implies that error of $\Pi'$ on this input is same as the error of $\Pi$ on the corresponding $x,y$, hence at most $\err (\Pi)$.
Now consider the case where $c\neq d$. The function $\SINK\circ\XOR$ evaluates to 1 only if $x_{E_{v_j}} =y_{E_{v_j}} \oplus z_{v_j}$ for some $j \in [m]$. Since, $c\neq d$, we conclude that $j$ (if it exists) cannot be equal to $i$. 
Moreover, the edge adjacent to $i$ is already fixed by $c, d$, and if it is not consistent with the corresponding value in $z_{v_i}$, then $j$ is not a sink. Hence, similar to 
 the argument in \cite[Claim 5.6]{ANS18}, 
the probability that $j$ is a sink is at most $\frac{1}{2^{m-2}}$, as all $m-1$ edges must be incoming 
and the edge adjacent to $i$ is already fixed.  Hence by a union bound, the probability 
for an $x,y$ (that satisfy $x_{v_i}=c ,y_{v_i}=d\oplus z_{v_i}, c \not= d$) to 
form a $1$ input at some other coordinate $j$ is at most $\frac{m-1}{2^{m-2}}$. 
This implies that $\err(\Pi') \leq \err(\Pi) + \frac{m-1}{2^{m-2}}$. This completes the proof.
\end{proof}

\subsection{Lower bound on information complexity of Equality}
To complete the argument, we use the following lemma (that uses a cut and paste argument) implicit in \cite{ABB+16a} and obtain a lower bound on the information complexity of $\Eq{}$. We repeat its proof for completeness (and consistency with our notation).
\begin{lemma}
\label{lem:cutpasteEQ}
Let $\Pi$ be a protocol for $\Eq{}$ that makes a worst case error of at most $\frac{1}{3}$. Then it holds that $\IC(\Pi, \nu)\geq \frac{1}{432}$, where $\nu$ is uniform over inputs to $\Eq{}$.
\end{lemma}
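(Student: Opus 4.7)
My plan is a two-stage argument that exploits cut-and-paste together with the Pythagorean property of Bures to obtain a constant lower bound on a Bures-squared ``swap'' distance, and then converts this to a lower bound on the joint mutual information $\I(XY:\Pi)$ via the Average Encoding Theorem; this in turn lower-bounds $\IC(\Pi,\nu)$.

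First, I fix any $a\neq b\in\{0,1\}^{m-1}$. Since the worst-case error is at most $1/3$ and $\Eq{}(a,a)=\Eq{}(b,b)=1\neq 0 =\Eq{}(a,b)=\Eq{}(b,a)$, the output probabilities on transcripts with different function values differ by at least $1/3$; by Fact~\ref{fact:deltabures} this gives $\BR^2(\Pi(a,b),\Pi(b,b))\geq 1/18$ and $\BR^2(\Pi(a,a),\Pi(b,a))\geq 1/18$. The Pythagorean property (Fact~\ref{lemma:classicalpythagorean}) applied to the four inputs $(a,a),(a,b),(b,a),(b,b)$ yields $\BR^2(\Pi(a,a),\Pi(b,b))\geq 1/18$, and the cut-and-paste lemma (Lemma~\ref{lemma:classicalcutandpaste}) transfers this to the 0-input pair: $\BR^2(\Pi(a,b),\Pi(b,a))\geq 1/18$ for every $a\neq b$.

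Second, I use the exchangeability of $(X,Y)\sim\nu$ under the swap $(X,Y)\mapsto(Y,X)$. Averaging the step-one bound gives
\[
\mathbb{E}_{(X,Y)\sim\nu}\,\BR^2(\Pi(X,Y),\Pi(Y,X))\;\geq\;\bigl(1-2^{-(m-1)}\bigr)/18.
\]
On the other hand, applying the Bures triangle inequality $\BR^2(P,Q)\leq 2\BR^2(P,M)+2\BR^2(M,Q)$ with pivot $M=\Pi_{\mathrm{marg}}$ (the marginal of $\Pi(X,Y)$ under $(X,Y)\sim\nu$), taking expectations, using exchangeability to equate the two resulting right-hand terms, and invoking the Average Encoding Theorem (Lemma~\ref{lem:avenc}) yields
\[
\mathbb{E}_{(X,Y)\sim\nu}\,\BR^2(\Pi(X,Y),\Pi(Y,X))\;\leq\;4\,\mathbb{E}_{(X,Y)\sim\nu}\,\BR^2(\Pi(X,Y),\Pi_{\mathrm{marg}})\;\leq\;4\,\I(XY:\Pi).
\]
Combining, $\I(XY:\Pi)\geq(1-2^{-(m-1)})/72$, and since $X,Y$ are independent under $\nu$ the chain rule gives $\IC(\Pi,\nu)=\I(XY:\Pi\mid R)+\I(X:Y\mid\Pi,R)\geq \I(XY:\Pi)$, where $R$ denotes shared randomness. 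For all $m\geq 2$ this is at least $1/432$.

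The main obstacle is choosing the right quantity to Average-Encode. Naively plugging $\BR^2$ bounds into either $\I(X:\Pi\mid Y)$ or $\I(Y:\Pi\mid X)$ only yields bounds on same-coordinate pairs, while the cut-and-paste guarantee lives on cross-coordinate pairs $(a,b),(b,a)$; restricting instead to 1-inputs (where same-coordinate lower bounds are available directly from the error condition) carries only $2^{-(m-1)}$ mass under $\nu$ and gives an exponentially small bound. The crucial move is to work with the \emph{joint} information $\I(XY:\Pi)$ and the \emph{swap} distance $\BR^2(\Pi(X,Y),\Pi(Y,X))$, where the symmetry of $\nu$ under coordinate exchange aligns exactly with the cut-and-paste identity and produces a clean constant after one application of the Bures triangle.
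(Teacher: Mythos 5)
Your proof is correct, and while it draws on exactly the same toolkit as the paper --- Fuchs--van de Graaf (Fact~\ref{fact:deltabures}), the Pythagorean property (Fact~\ref{lemma:classicalpythagorean}), cut-and-paste (Lemma~\ref{lemma:classicalcutandpaste}), the weak triangle inequality (Fact~\ref{fact:triangle}) and average encoding (Lemma~\ref{lem:avenc}) --- it runs the argument in the opposite direction. The paper goes information-to-distance: it bounds $\IC(\Pi,\nu)$ below by $\expec_{x,y}\BR^2(\Pi^{x,y},\Pi)$ via two applications of average encoding, then uses the smallness of $\Pr[x\oplus y=0]$ to fix a single shift $t\neq 0$, and finally relabels and applies triangle/Pythagorean/cut-and-paste to contradict the error bound at one specific input pair $(x,x)$ versus $(x,x\oplus t)$. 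You instead go distance-to-information: you first prove the pointwise bound $\BR^2(\Pi(a,b),\Pi(b,a))\geq 1/18$ for \emph{every} $a\neq b$ (error bound on the two mixed pairs, Pythagorean to reach $\BR^2(\Pi(a,a),\Pi(b,b))$, cut-and-paste to swap), then average over $\nu$, bound the average swap distance by $4\,\I(XY:\Pi)$ using one triangle step around the marginal transcript together with exchangeability of $\nu$, and finish with $\IC(\Pi,\nu)\geq\I(XY:\Pi)$ for product inputs. Your arrangement avoids the existential choice of $t$ (and the attendant Markov/averaging step), works with the external quantity $\I(XY:\Pi)$ rather than the two internal terms separately, and incidentally yields the slightly better constant $(1-2^{-(m-1)})/72\geq 1/144$ for $m\geq 2$.

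One small imprecision to fix: the claimed identity $\IC(\Pi,\nu)=\I(XY:\Pi\mid R)+\I(X:Y\mid\Pi,R)$ is not literally the paper's definition, which conditions each term additionally on the \emph{other} party's private randomness ($R_B$, resp.\ $R_A$); with that conditioning the relation is an inequality rather than an equality. Only the inequality $\IC(\Pi,\nu)\geq\I(XY:\Pi)$ is needed, and it does hold for product $\nu$: since $X$ is independent of $(Y,R,R_B)$ and $Y$ of $(X,R,R_A)$, one has $\I(X:\Pi\mid YRR_B)=\I(X:\Pi YRR_B)\geq\I(X:\Pi\mid Y)$ and $\I(Y:\Pi\mid XRR_A)\geq\I(Y:\Pi)$, and the chain rule gives $\I(X:\Pi\mid Y)+\I(Y:\Pi)=\I(XY:\Pi)$. (Like the paper, you should also note that the cut-and-paste and Pythagorean facts are applied to the transcript with public coins included, which is harmless since the information bound above also holds with $R$ appended to $\Pi$.)
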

\begin{proof}
Let $\srd_A$ and $\srd_B$ be private randomness of Alice and Bob (respectively) in the protocol and $R$ be the public randomness. We have $$\IC(\Pi, \nu) = \I(Y: \Pi| X\srd_AR)   +    \I(X: \Pi| Y\srd_BR).$$ 
By the average-encoding theorem (Fact \ref{lem:avenc}), it holds that
\begin{eqnarray*}
\I(X: \Pi~|~ Y\srd_BR)   &=& \I(X: R \srd_B \Pi ~|~ Y) \\
				&  \geq & \I(X:  \Pi ~|~ Y)  \\ 
     &\geq&  \expec_{x,y \leftarrow XY}\BR^2((\Pi)^{x,y},\Pi^y). 
\end{eqnarray*}
Similarly,
\begin{eqnarray*}
\I(Y: \Pi~|~ X\srd_AR)   &=& \I(Y: X\srd_AR \Pi) \\
				& \geq & \I(Y:  \Pi)  \\ 
     &\geq&  \expec_{y \leftarrow Y}\BR^2((\Pi)^{y},\Pi). 
\end{eqnarray*}
Using the weak triangle inequality (Fact \ref{fact:triangle}), the above two inequalities imply
\begin{eqnarray*}
\expec_{x,y \leftarrow XY}\BR^2((\Pi)^{x,y},\Pi) &\leq& 2\expec_{x,y \leftarrow XY}(\BR^2((\Pi)^{x,y},\Pi^y) + \BR^2((\Pi)^{y},\Pi))\\
&\leq& 2(\I(X: \Pi~|~ Y\srd_BR) + \I(Y: \Pi~|~ X\srd_AR)) \\
& = & 2\IC(\Pi, \nu).
\end{eqnarray*}
Since $x,y$ are uniform, we can write the above relation as
$$\expec_{t\leftarrow Y}\expec_{x \leftarrow X}\BR^2((\Pi)^{x,x\oplus t},\Pi) \leq 2\IC(\Pi, \nu).$$ Since $\Pr[t=0] = \frac{1}{2^{m-1}}$, this implies that there exists an $t\neq 0$ such that
$$\expec_{x \leftarrow X}\BR^2((\Pi)^{x,x\oplus t},\Pi) \leq 3\IC(\Pi, \nu).$$
An equivalent way to write the above inequality, by relabeling $x\rightarrow x\oplus t$, is
$$\expec_{x \leftarrow X}\BR^2((\Pi)^{x\oplus t,x},\Pi) \leq 3\IC(\Pi, \nu).$$
By the weak triangle inequality (Fact \ref{fact:triangle}), we conclude
$$\expec_{x \leftarrow X}\BR^2((\Pi)^{x\oplus t,x},\Pi^{x, x\oplus t}) \leq 12\IC(\Pi, \nu).$$
The pythagorean property (Fact \ref{lemma:classicalpythagorean}) now implies that
$$\expec_{x \leftarrow X}\BR^2((\Pi)^{x,x},\Pi^{x, x\oplus t}) \leq 24\IC(\Pi, \nu).$$
Thus, there exists some $x$ for which $\BR^2((\Pi)^{x,x},\Pi^{x, x\oplus t}) \leq 24\IC(\Pi, \nu)$. Since $\Pi$ makes an error of at most $\frac{1}{3}$, we require (using relation between Bures metric and triangle inequality, Fact \ref{fact:deltabures})
$$\BR^2((\Pi)^{x,x},\Pi^{x, x\oplus t}) \geq \frac{1}{2}\Delta^2((\Pi)^{x,x},\Pi^{x, x\oplus t}) \geq \frac{1}{18}.$$ Thus, $\IC(\Pi, \nu) \geq \frac{1}{432}$, which completes the proof. 
\end{proof}

Theorem \ref{theo:SINKtoEQcl} and Lemma \ref{lem:cutpasteEQ} jointly imply that $\IC(\Pi, \mu) \geq \frac{m}{864}$, for any protocol $\Pi$ that makes an error of at most $\frac{1}{4}$ on $\SINK\circ \XOR$. This establishes the desired lower bound.  


\section{Reducing Equality to $\SINK$ for quantum information}

\subsection{Shearer-type embedding}
\label{sec:qu-shearer}







We begin by showing a general embedding result based on the Shearer-type lemma for quantum information (Lemma \ref{lem:shearer}). Consider a protocol $\Pi$ acting on input registers $X_1, X_2, \ldots, X_m$ and $ Y_1, Y_2, \ldots Y_m $, with $X_1 \equiv X_2 \equiv \ldots \equiv X_m$ and $ Y_1 \equiv Y_2 \equiv \ldots \equiv Y_m$. Define $X = X_1 X_2 \ldots X_m$, $Y = Y_1 Y_2 \ldots Y_m$. Consider a product input distribution $\mu = \mu_1 \otimes \mu_2$  on $X_i Y_i$. Consider $t \in [m]$ and let $S = \{i_1, i_2, \ldots, i_t \} \subseteq [m]$ be a random set of size $t$ picked independently of the input on $XY$ and satisfying $\prob{i\in S}\leq\frac{1}{k}$ for all $i$.  Let $X_S = X_{i_1} X_{i_2} \ldots X_{i_t}$, $Y_S = Y_{i_1} Y_{i_2} \ldots Y_{i_t}$. We define the following protocol $\Pi_S$ acting on input $A_{in} B_{in}$, with $A_{in} \equiv X_S$, $B_{in} \equiv Y_S$.

\begin{framed}
\textbf{Protocol $\Pi_S$ on input $\sigma_{A_{in} B_{in}}$}
\begin{enumerate}
\item Alice privately sample $X_i$ for each $i \not\in S$ as $\ket{\mu_1}_{X_i R_{X_i}}$.
\item Bob privately sample $Y_i$ for each $i \not\in S$ as $\ket{\mu_2}_{Y_i R_{Y_i}}$.
\item Alice embeds $A_{in}$ into $X_S$.
\item Bob embeds $B_{in}$ into $Y_S$.
\item They run $\Pi$, and output $\Pi$'s output.
\end{enumerate}

\end{framed}

\begin{lemma}
\label{lem:qu-shearer-emb1}

\begin{align*}
\Pi_S (\sigma_{A_{in} B_{in}}) & =  \Pi (\sigma_{X_S Y_{S}} \otimes (\rho_\mu^{\otimes m-t})_{X_{\bar{S}} Y_{\bar{S}}}), \\
\SQIC (\Pi_S, \mu^{\otimes t}) &  = \sum_{i~odd} \I (X_S \!:\! Y R_Y  B_i C_i )_{\rho_i} + \sum_{i~even} \I (Y_S \!:\! X R_X  A_i C_i )_{\rho_i},
\end{align*}
with $\rho_i$ the state in round $i$ when $\Pi$ is run on input distribution $\mu^{\otimes m}$.
\end{lemma}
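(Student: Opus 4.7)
The plan is to prove both identities by carefully unpacking the definition of $\Pi_S$ and tracking which registers belong to Alice and Bob at each round. The first identity is essentially by construction: Steps~1 and 2 of $\Pi_S$ prepare the purified state $\ket{\mu}^{\otimes(m-t)}$ on the registers $X_{\bar{S}} R_{X_{\bar{S}}} Y_{\bar{S}} R_{Y_{\bar{S}}}$, with Alice holding the $R_{X_{\bar{S}}}$ purifications and Bob holding the $R_{Y_{\bar{S}}}$ ones. Steps~3 and 4 embed the external input state $\sigma_{A_{in} B_{in}}$ into the registers $X_S Y_S$, leaving the external purifications $R_{X_S}, R_{Y_S}$ untouched. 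Step~5 runs $\Pi$ on the resulting state supported on $XY$. Tracing out the privately held purifications $R_{X_{\bar{S}}}, R_{Y_{\bar{S}}}$ yields exactly $\Pi(\sigma_{X_S Y_S} \otimes (\rho_\mu^{\otimes m-t})_{X_{\bar{S}} Y_{\bar{S}}})$, which establishes the first equation.

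For the second identity, I would first observe that when the external input $\sigma_{A_{in} B_{in}}$ is itself distributed as $\mu^{\otimes t}$ (purified by $R_{A_{in}} R_{B_{in}} \equiv R_{X_S} R_{Y_S}$), the combined state produced by Steps~1--4 of $\Pi_S$ is precisely $\ket{\mu}^{\otimes m}$ on $X Y R_X R_Y$, where $R_X = R_{X_S} R_{X_{\bar{S}}}$ and $R_Y = R_{Y_S} R_{Y_{\bar{S}}}$. Consequently, the global state of $\Pi_S$ after each subsequent round $i$ coincides, register for register, with the corresponding round-$i$ state $\rho_i$ of $\Pi$ run on the full distribution $\mu^{\otimes m}$, with the additional bookkeeping that $R_{X_{\bar{S}}}$ is contained in Alice's private register and $R_{Y_{\bar{S}}}$ is contained in Bob's private register throughout the protocol.

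Then I would evaluate $\SQIC(\Pi_S, \mu^{\otimes t})$ directly from its definition. At an odd round $i$, the contribution is $\I(A_{in} : B_{in} R_{B_{in}} B_i^{\Pi_S} C_i)_{\rho_i}$, where $A_{in} = X_S$, $B_{in} R_{B_{in}} = Y_S R_{Y_S}$, the message register $C_i$ is identical in both protocols, and Bob's register in $\Pi_S$ decomposes as $B_i^{\Pi_S} = B_i Y_{\bar{S}} R_{Y_{\bar{S}}}$ by the previous paragraph. Collecting registers, $B_{in} R_{B_{in}} B_i^{\Pi_S} = Y_S R_{Y_S} Y_{\bar{S}} R_{Y_{\bar{S}}} B_i = Y R_Y B_i$, so this contribution equals $\I(X_S : Y R_Y B_i C_i)_{\rho_i}$. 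The symmetric computation for even rounds yields $\I(Y_S : X R_X A_i C_i)_{\rho_i}$, and summing over $i$ gives the claimed identity.

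The entire argument is careful register bookkeeping; the main point to get right is that the purifications of the privately sampled classical inputs must be treated as part of the sampling player's register (so that $A_i^{\Pi_S}$ and $B_i^{\Pi_S}$ properly include them), which is what makes the $X R_X$ and $Y R_Y$ factors in the claimed formula come out correctly. No nontrivial information-theoretic inequality is needed once this identification is in place.
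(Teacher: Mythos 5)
Your proposal is correct and matches the paper's own argument: both identities are established by the same register bookkeeping, noting that the state of $\Pi_S$ run on $\mu^{\otimes t}$ coincides with the state of $\Pi$ run on $\mu^{\otimes m}$ (with the privately sampled purifications $R_{X_{\bar{S}}}, R_{Y_{\bar{S}}}$ simply held by the players), and then regrouping $Y_S R_{Y_S} Y_{\bar{S}} R_{Y_{\bar{S}}} = Y R_Y$ (and symmetrically for Alice) in the $\SQIC$ terms.
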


\begin{proof}

By the definition of protocol $\Pi_S$, the channel it implements is $\Pi (\sigma_{X_S Y_{S}} \otimes (\rho_\mu^{\otimes m-t})_{X_{\bar{S}} Y_{\bar{S}}})$ (see  (\ref{eq:classicalst}) in Section~\ref{sec:comm} for definition of $\rho_\mu$) on input $\sigma_{A_{in} B_{in}}$.

For the information cost when $\Pi_S$ is run on input distribution $\mu^{\otimes t}$, first notice that for a given $S$, we can rewrite $Y R_Y = Y_S R_{Y_S} Y_{\bar{S}} R_{Y_{\bar{S}}}$. After embedding $A_{in} B_{in}$ into $X_S Y_S$, the $X_S Y_S$ registers correspond to the input of $\Pi_S$ while $R_{X_S} R_{Y_S}$ correspond to the purification of the input registers. The $X_{\bar{S}} R_{X_{\bar{S}}}$ and $Y_{\bar{S}} R_{Y_{\bar{S}}}$ registers correspond to the part privately sampled according to $\mu = \mu_1 \otimes \mu_2$ by Alice and Bob, respectively, in order to run $\Pi$. Hence, for a given $S$, the terms in $\SQIC$ look like
\begin{align*}
\I(X_S : Y_S R_{Y_S}   Y_{\bar{S}} R_{Y_{\bar{S}}} B_i C_i) = \I(X_S : Y R_Y B_i C_i), \\
\I(Y_S : X_S R_{X_S}   X_{\bar{S}} R_{X_{\bar{S}}} A_i C_i) = \I(Y_S : X R_X A_i C_i).
\end{align*}
The result follows.

\end{proof}

Let $\ket{\phi_S}_{S_A S_B}$ be a quantum state shared between Alice and Bob and encoding the distribution on $S$. 
Given $S$, let $P_A^S$ and $P_B^S$ be permutations (over the computational basis) acting on $A_{in}$ and $B_{in}$, respectively, and such that $\mu$ is invariant under their action, i.e.
\begin{align}
\label{eq:invariance}
(P_A^S \otimes P_B^S) (\rho_\mu^{\otimes t}) = \rho_\mu^{\otimes t}.
\end{align}
We define the following protocol $\hat{\Pi}$ also acting on $A_{in} B_{in}$.

\begin{framed}
\textbf{Protocol $\hat{\Pi}$ on input $\sigma_{A_{in} B_{in}}$}
\begin{enumerate}
\item Alice and Bob share $\ket{\phi_S}_{S_A S_B}$.
\item Conditioned on the value of $S$ shared in $\ket{\phi_S}$, Alice and Bob apply $P_A^S$ and $P_B^S$ to their inputs, respectively.
\item Conditioned on value of $S$ shared in $\ket{\phi_S}$, Alice and Bob run $\Pi_S$, and output $\Pi_S$'s output.
\end{enumerate}

\end{framed}

\begin{lemma}
\label{lem:qshearer}
\begin{align*}
\hat{\Pi} (\sigma_{A_{in} B_{in}}) & = \mathbb{E}_S [ \Pi_S \circ (P_A^S \otimes P_B^S) (\sigma_{A_{in} B_{in}}) ], \\
\SQIC (\hat{\Pi}, \mu^{\otimes t}) &  = \mathbb{E}_S \SQIC (\Pi_S, \mu^{\otimes t}) \leq \SQIC (\Pi, \mu^{\otimes m}) / k.
\end{align*}
\end{lemma}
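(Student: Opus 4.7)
The first equation is immediate from the construction of $\hat{\Pi}$: conditioning on the value of $S$ encoded in the shared state $\ket{\phi_S}_{S_A S_B}$ yields the branch $\Pi_S \circ (P_A^S \otimes P_B^S)$, so the output on any input $\sigma_{A_{in} B_{in}}$ is the expectation of these branches.

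For the information-cost statement I proceed in three steps. \emph{Step 1 (averaging over $S$).} Since $\ket{\phi_S}_{S_A S_B}$ acts as classical shared randomness, with $S_A$ remaining inside Alice's register $A_i$ and $S_B$ inside Bob's register $B_i$ throughout, the round-$i$ state of $\hat{\Pi}$ on input $\mu^{\otimes t}$ decomposes as
\begin{equation*}
\hat{\rho}_i \;=\; \expec_s\bigl[\,\ketbra{s}{s}_{S_A}\otimes\ketbra{s}{s}_{S_B}\otimes\rho_i^{(s)}\,\bigr],
\end{equation*}
where $\rho_i^{(s)}$ is the round-$i$ state of $\Pi_s \circ (P_A^s \otimes P_B^s)$ on $\mu^{\otimes t}$. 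The invariance assumption~(\ref{eq:invariance}) ensures the classical input $X$ is independent of $S$, so $\I(X : S_B)_{\hat{\rho}_i} = 0$. Combining the chain rule (Fact~\ref{fact:chain-rule}) with averaging over a classical register (Fact~\ref{fact:IvsB}) then gives
\begin{equation*}
\I(X : Y R_Y B_i C_i)_{\hat{\rho}_i} \;=\; \expec_s\,\I(X : Y R_Y B_i' C_i)_{\rho_i^{(s)}},
\end{equation*}
where $B_i' = B_i \setminus S_B$ is Bob's register inside $\Pi_s$. The analogous identity holds for even-round terms, and summing yields $\SQIC(\hat{\Pi},\mu^{\otimes t}) = \expec_S \SQIC(\Pi_S \circ (P_A^S \otimes P_B^S), \mu^{\otimes t})$.

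\emph{Step 2 (removing the permutations).} For fixed $s$, the state fed into $\Pi_s$ inside $\hat{\Pi}$ is $(P_A^s)_X (P_B^s)_Y \ket{\mu^{\otimes t}}_{XR_X}\ket{\mu^{\otimes t}}_{YR_Y}$, which by~(\ref{eq:invariance}) is another purification of $\mu^{\otimes t}$ on $XY$. Since any two purifications of the same reduced state differ by a unitary on the purifying registers, there exist unitaries $U_{R_X}, U_{R_Y}$ converting one into the other. The protocol $\Pi_s$ does not act on $R_X, R_Y$, hence the corresponding round-$i$ states differ only by $U_{R_X} \otimes U_{R_Y}$, and Fact~\ref{fact:mono} (equality clause for unitaries) gives $\I(X : Y R_Y B_i' C_i)_{\rho_i^{(s)}} = \I(X : Y R_Y B_i' C_i)_{\tilde{\rho}_i^{(s)}}$, with $\tilde{\rho}_i^{(s)}$ the round-$i$ state of $\Pi_s$ on the standard input. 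The same holds for even-round terms, so $\SQIC(\Pi_S \circ (P_A^S \otimes P_B^S), \mu^{\otimes t}) = \SQIC(\Pi_S, \mu^{\otimes t})$ for every $S$, and combining with Step 1 gives $\SQIC(\hat{\Pi}, \mu^{\otimes t}) = \expec_S \SQIC(\Pi_S, \mu^{\otimes t})$.

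\emph{Step 3 (applying Shearer).} Expanding each $\SQIC(\Pi_S, \mu^{\otimes t})$ via Lemma~\ref{lem:qu-shearer-emb1} reduces the remaining task to bounding
\begin{equation*}
\sum_{i~\mathrm{odd}} \expec_S\,\I(X_S : Y R_Y B_i C_i)_{\rho_i} + \sum_{i~\mathrm{even}} \expec_S\,\I(Y_S : X R_X A_i C_i)_{\rho_i}
\end{equation*}
by $\SQIC(\Pi, \mu^{\otimes m})/k$, with $\rho_i$ the round-$i$ state of $\Pi$ on $\mu^{\otimes m}$. Because $\mu^{\otimes m}$ is a product distribution over coordinates, the marginals on $X_1, \ldots, X_m$ (and on $Y_1, \ldots, Y_m$) are tensor-product; combined with $\Pr[j \in S] \le 1/k$ and independence of $S$ from $\rho_i$, this lets me invoke Lemma~\ref{lem:shearer} term-by-term with $U_j = X_j$ and $V = Y R_Y B_i C_i$ (respectively $U_j = Y_j$ and $V = X R_X A_i C_i$) to bound each expectation by $1/k$ times the full mutual information. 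Summing over $i$ yields the desired inequality.

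The delicate point I expect is Step 2: carefully verifying that going from the permuted-input purification to the standard one amounts only to a unitary on the inaccessible registers $R_X, R_Y$, so that the mutual-information terms in $\SQIC$ are unchanged. Once this bookkeeping is in place, Steps~1 and~3 are routine applications of the classical-conditioning identities in the preliminaries and the quoted Lemmas~\ref{lem:qu-shearer-emb1} and~\ref{lem:shearer}.
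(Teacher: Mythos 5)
Your proposal is correct and follows essentially the same route as the paper's proof: condition on the classical shared $S$ to write $\SQIC(\hat{\Pi},\mu^{\otimes t})$ as an average over $S$, eliminate the permutations by pushing them onto the purification registers $R_{X_S},R_{Y_S}$ (the paper simply makes your Uhlmann unitary explicit as the permutation acting on the $R$ registers, via the invariance condition (\ref{eq:invariance})), identify the resulting round states with those of $\Pi$ on $\mu^{\otimes m}$ as in Lemma~\ref{lem:qu-shearer-emb1}, and conclude with the Shearer-type Lemma~\ref{lem:shearer}. The only cosmetic slip is attributing the independence of the (pre-permutation) input from $S$ to (\ref{eq:invariance}); it already follows from $S$ being sampled independently of the inputs, while invariance is what you need for the permuted input and for the permutation-removal step.
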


\begin{proof}

By the definition of protocol $\hat{\Pi}$, the channel it implements is $\mathbb{E}_S [ \Pi_S \circ (P_A^S \otimes P_B^S) ]$.

For the information cost, let $\hat{\rho}_i $ be the state in round $i$ when $\hat{\Pi}$ is run on input distribution $\mu^{\otimes t}$.
Similar comments in the proof of Lemma~\ref{lem:qu-shearer-emb1} hold regarding $XY$ vs. $X_S Y_S X_{\bar{S}} Y_{\bar{S}}$ and the corresponding $R$ purification registers.
Hence the terms for $\SQIC$ look like
\begin{align}
\I(X_S : S_B Y R_Y B_i C_i)_{\hat{\rho}_i} & = \I(X_S :  Y R_Y B_i C_i | S)_{\hat{\rho}_i} \\
 		&= \mathbb{E}_S \I(X_S :  Y R_Y B_i C_i )_{\hat{\rho}_{i}^S},
\end{align}
where $\hat{\rho}_i^S$ is the state on registers other than $S_AS_B$, conditioned on $S$. 
Let $P_{A, X_S}^S, P_{A, R_{X_S}}^S$ (similarly $P_{B, Y_S}^S, P_{B, R_{Y_S}}^S$) be the operator $P_A^S$ (similarly $P_B^S$) acting on the registers $X_S, R_{X_S}$ (similarly $Y_S, R_{Y_S}$) respectively. Then, for any $S$,  Equation \ref{eq:invariance} implies that
\begin{align}
(P_{A, X_S}^S \otimes P_{B, Y_S}^S) (P_{A, R_{X_S}}^S \otimes P_{B, R_{Y_S}}^S) \ket{\mu^{\otimes t}}_{X_S R_{X_S} Y_S R_{Y_S}} = \ket{\mu^{\otimes t}}_{X_S R_{X_S} Y_S R_{Y_S}}.
\end{align}
Recall that $\rho_i$ is the state in round $i$ when $\Pi$ is run on input distribution $\mu^{\otimes m}$. Thus
\begin{align}
(P_{A, R_{X_S}}^S \otimes P_{B, R_{Y_S}}^S) (\hat{\rho}_{i}^S) = \rho_i
\end{align}
is independent of $S$, since the operations on the $R$ registers commute with the operations in protocol $\Pi$. By invariance of mutual information under local unitaries, we get
\begin{align}
\mathbb{E}_S \I(X_S :  Y R_Y B_i C_i )_{\hat{\rho}_{i}^S} & = \mathbb{E}_S \I(X_S :  Y R_Y B_i C_i )_{\rho_{i}} \\
		& = \I(X_S :  Y R_Y B_i C_i |S )_{\rho_{i}},
\end{align}
in which we also used that $S$ is picked independently of the input and thus stays independent of $\rho_i$ throughout.
Similar results hold for the terms accounting for Alice's information about Bob's input in $\SQIC$.
It follows that $\SQIC (\hat{\Pi}, \mu^{\otimes t}) = \mathbb{E}_S \SQIC(\Pi_S, \mu^{\otimes t})$.

To relate this to $\SQIC (\Pi, \mu^{\otimes m})$, we apply the Shearer type lemma for quantum information (Lemma~\ref{lem:shearer}) to get
\begin{align*}
\I(X_S : Y R_Y B_i C_i | S)_{\rho_i} \leq \frac{1}{k} \I(X: Y R_Y B_i C_i)_{\rho_i}, \\
\I(Y_S : X R_X A_i C_i | S)_{\rho_i} \leq \frac{1}{k} \I(Y: X R_X A_i C_i)_{\rho_i},
\end{align*}
and the result follows.

\end{proof}

\subsection{From $\SINK \circ \XOR$ to $\Eq{}$}

We get the following theorem relating $\SQIC$ for $\SINK \circ \XOR$ and $\Eq{}$.

\begin{theorem}
\label{th:sink-embed}
Fix a $t$ round quantum communication protocol $\Pi$ making worst-case error $\epsilon$ on function $\SINK \circ \XOR$ for inputs of size ${m \choose 2}$ bits. Then there exists a $t$ round quantum communication protocol $\Pi_E$ making worst case error $\epsilon + o(1)$ on $\Eq{}$ with inputs of size $m-1$ bits and satisfying the following for $\nu$ the uniform distribution on $1 + 1$ bits :
\begin{align*}
\SQIC (\Pi_E, \nu^{\otimes m-1}) \leq \frac{2}{m} \SQIC(\Pi, \nu^{\otimes {m \choose 2}}).
\end{align*}
\end{theorem}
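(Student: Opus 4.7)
The plan is to apply the quantum Shearer-type embedding of Section~\ref{sec:qu-shearer} with the vertex structure of $\SINK$ playing the role of the random subset, mirroring the classical reduction of Theorem~\ref{theo:SINKtoEQcl}.

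First, I would take $S$ to be the $m-1$ edges $E_{v_i}$ incident to a uniformly random vertex $v_i \in [m]$, encoded in the preshared state $\ket{\phi_S}_{S_A S_B} = \frac{1}{\sqrt{m}}\sum_i \ket{i}_{S_A}\ket{i}_{S_B}$. Since every edge touches exactly two vertices, $\Pr[j \in S] = 2/m$ for every edge $j$, matching the hypothesis of Lemmas~\ref{lem:qu-shearer-emb1} and~\ref{lem:qshearer} with $k = m/2$ and $|S| = m-1$. Recalling that $z_{v_i} \in \{0,1\}^{m-1}$ is the fixed string for which $v_i$ is a sink iff $w_{v_i} = z_{v_i}$, I would set $P_A^{E_{v_i}} = \id$ and $P_B^{E_{v_i}} : \ket{b} \mapsto \ket{b \oplus z_{v_i}}$. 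Because XOR by a fixed string is a bijection that preserves the uniform distribution, the invariance~(\ref{eq:invariance}) holds for $\nu^{\otimes (m-1)}$. Defining $\Pi_E := \hat{\Pi}$ from Lemma~\ref{lem:qshearer}, the information bound
\[
\SQIC(\Pi_E, \nu^{\otimes (m-1)}) \;\leq\; \frac{1}{k}\,\SQIC(\Pi, \nu^{\otimes {m \choose 2}}) \;=\; \frac{2}{m}\,\SQIC(\Pi, \nu^{\otimes {m \choose 2}})
\]
is then immediate, and the round count stays at $t$ because sharing $\ket{\phi_S}$, applying $P_A^S \otimes P_B^S$, and the local sampling and embedding of $\Pi_S$ can all be folded into the first-round preparation of $\Pi$.

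It remains to bound the worst-case error of $\Pi_E$ on $\Eq{}$. On input $(a,b)$, conditioned on $v_i$, Alice embeds $x_{E_{v_i}} = a$ and Bob embeds $y_{E_{v_i}} = b \oplus z_{v_i}$, while all remaining edges are fresh uniform bits drawn from private randomness. If $a = b$, then $x_{E_{v_i}} \oplus y_{E_{v_i}} = z_{v_i}$, so $v_i$ is a sink and $\SINK\circ\XOR$ outputs $1 = \Eq{}(a,b)$; hence the error is at most $\epsilon$. If $a \neq b$, then $v_i$ is not a sink, and I would argue as in \cite[Claim 5.6]{ANS18} (as already invoked in Theorem~\ref{theo:SINKtoEQcl}) that for each of the other $m-1$ vertices $v_j$, the edge joining $v_i$ and $v_j$ is already pinned while the remaining $m-2$ edges incident to $v_j$ are fresh uniform bits, so the probability that $v_j$ accidentally becomes a sink is at most $2^{-(m-2)}$. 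A union bound gives total error at most $\epsilon + (m-1)/2^{m-2} = \epsilon + o(1)$.

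The step requiring the most care is the appeal to Lemma~\ref{lem:qshearer}: that lemma insists on a product input distribution together with permutations $P_A^S, P_B^S$ leaving it invariant, and the whole reason for introducing $\SQIC$ in Section~\ref{sec:qic} (rather than working directly with $\QIC$) is to make such an embedding compatible with a clean information bound. Fortunately, the choice $(\id,\; \oplus z_{v_i})$ on uniform bits makes invariance automatic, and since $\Pi_E$ is $\Pi$ composed with local classical preprocessing, the error analysis on classical inputs transfers verbatim from the classical reduction.
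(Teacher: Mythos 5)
Your proposal is correct and follows essentially the same route as the paper: instantiate the Shearer-type embedding of Lemma~\ref{lem:qshearer} with $S=E_{v_i}$ for a uniformly random vertex (so $\Pr[j\in S]=2/m$, i.e.\ $k=m/2$), take the permutations to be the bitwise $\oplus z_{v_i}$ shift on one side and identity on the other, set $\Pi_E=\hat{\Pi}$, and reuse the classical error analysis of Theorem~\ref{theo:SINKtoEQcl} to get the $\epsilon+(m-1)/2^{m-2}$ error bound. The only (immaterial) difference is that you place the $\oplus z_{v_i}$ shift on Bob rather than on Alice as in the paper, which changes nothing since $\SINK\circ\XOR$ depends only on $x\oplus y$ and the uniform distribution is invariant either way.
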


\begin{proof}
Recall the sets $E_{v_i}$, for $i \in [m]$, as defined in Subsection \ref{subsec:eqtosink}. In the setting of the Shearer-type embedding above (Lemma \ref{lem:qshearer}), pick $S = E_{v_i}$ with probability $1/m$ for each $i \in [m]$. Let $P_A^{S_i}$ be the map that performs bit-wise addition $\oplus z_{v_i}$, and $P_B^{S_i}$ is the identity. Notice that each pair $(k, l)$, for $k < l$, appears for exactly two choices of $i$: once for $i = k$, and once for $i = l$. Hence, $\Pr [l \in S] \leq 2/m$ for all $l \in [m]$, and $2/m$ is the probability we use in the Shearer-type embedding.
By using  $\nu$ the uniform distribution on $1+1$ bits as the product distribution $\mu$ in the Shearer-type embedding, the $\SQIC$ bound follows.

It is left to argue that the resulting protocol $\Pi_E$ taken to be $\hat{\Pi}$ of the embedding is good at solving EQ. But this follows as in the classical embedding argument (see the proof of Theorem \ref{theo:SINKtoEQcl}) since the probability that Alice and Bob privately sampled inputs to $\Pi$ on $\bar{S}$ that already make $\SINK \circ \XOR$ evaluate to $1$ on $\bar{S}$  is exponentially small in $m$, hence the additional error is $o(1)$.
\end{proof}


\subsection{Quantum information cost of Equality function}

We use the following lemma about the quantum information cost of the equality function $\Eq{}$ on the uniform distribution, which was implicitly shown via a quantum cut and paste argument in Ref.~\cite{ABGJKL17}. 
\begin{lemma}
\label{theo:eqinflb}
Fix a $t$ round quantum communication protocol $\Pi$ making  worst-case error at most $\frac{1}{3}$ on $\Eq{}$. Let $\ket{\Psi_r}_{XYR_XR_YA_rC_rB_r}$ be the quantum state in $r$-th round, as defined in (\ref{eq:roundrstate}) in Section~\ref{sec:comm}, when $\Pi$ is run on the uniform distribution $\mu^{\otimes k}$ on $k + k$ bits. It holds that
$$\HQIC (\Pi, \mu^{\otimes k}) \geq \frac{1}{40000 t}.$$
\end{lemma}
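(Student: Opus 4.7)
The plan is to quantize the classical cut-and-paste argument underlying Lemma~\ref{lem:cutpasteEQ}, using the quantum cut-and-paste lemma (Lemma~\ref{lem:quantum_cut_paste}) as the substitute for the classical Pythagorean inequality. Quantum cut-and-paste telescopes the target Bures distance across the $t$ rounds of the protocol, which after a Cauchy--Schwarz step costs a factor of $t$ and is the source of the $1/t$ in the final bound.

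First, I fix inputs $a, b \in \{0,1\}^k$ with $a \neq b$ and apply Lemma~\ref{lem:quantum_cut_paste} with $u' = v = a$ and $v' = b$, leaving the auxiliary input $u$ free. Since $(u',v) = (a,a)$ is a $1$-input of $\Eq{}$ and $(u',v') = (a,b)$ is a $0$-input, the worst-case error bound $\err(\Pi) \leq 1/3$ combined with Fact~\ref{fact:deltavserror} and Fact~\ref{fact:deltabures} gives
\[
\BR^2\bigl(\Psi_{t,A_t'}^{a,a},\, \Psi_{t,A_t'}^{a,b}\bigr) \;\geq\; \tfrac{1}{18}.
\]
Squaring the cut-and-paste inequality and applying Cauchy--Schwarz $(\sum_r h_r)^2 \leq t \sum_r h_r^2$ then yields
\[
\tfrac{1}{18} \;\leq\; 4t \sum_{r=1}^{t} h_r^2(u,a,b).
\]

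Second, I average this inequality over $a, b \in \{0,1\}^k$ uniform and independent (conditioned on $a \neq b$) and over $u \in \{0,1\}^k$ uniform. For even $r$, the term $h_r^2 = \BR^2\bigl(\Psi^{u,a}_{A_r'}, \Psi^{u,b}_{A_r'}\bigr)$ compares Alice's view at round $r$ under two Bob inputs with Alice's input fixed, and combining the triangle inequality (Fact~\ref{fact:triangle}) with the Average Encoding Theorem (Lemma~\ref{lem:avenc}) yields $\expec[h_r^2] \leq 4\,\I(Y : A_r C_r \mid X)_{\rho_r}$. For odd $r$, the analogous argument should produce a constant multiple of $\I(X : B_r C_r \mid Y)_{\rho_r}$, but the reference state $\Psi^{a,a}_{B_r'}$ is a ``diagonal'' configuration rather than the natural average-encoding reference $\expec_u \Psi^{u,a}_{B_r'}$. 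The argument of~\cite{ABGJKL17} handles this by noting that on the uniform distribution, $\expec_u \Psi^{u,a}_{B_r'}$ is within $O(2^{-k})$ in trace distance from $\expec_{u \neq a} \Psi^{u,a}_{B_r'}$, so the diagonal contribution is negligible and the target bound $\expec[h_r^2] \leq O(\I(X: B_r C_r \mid Y)_{\rho_r})$ still goes through.

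Summing the averaged round-by-round bounds over $r$ gives $\expec \sum_r h_r^2 \leq O\bigl(\HQIC(\Pi, \mu^{\otimes k})\bigr)$, and combining with the lower bound $1/18$ from the first step produces $\HQIC(\Pi, \mu^{\otimes k}) \geq \Omega(1/t)$. Tracking constants carefully through the triangle inequalities and Cauchy--Schwarz recovers the explicit factor $1/(40000\,t)$. The principal obstacle, compared to the classical proof, is handling the diagonal reference state in the odd-round analysis: classically the Pythagorean property of Fact~\ref{lemma:classicalpythagorean} sidestepped this, whereas here it must be addressed directly via the exponential-smallness of the diagonal's probability weight under the uniform distribution.
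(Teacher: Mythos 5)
Your overall skeleton (average encoding, Cauchy--Schwarz over rounds, quantum cut-and-paste, then error-vs-distance) is the right one, but your specific instantiation of Lemma~\ref{lem:quantum_cut_paste} with $u'=v=a$ creates a gap that your proposed fix does not close. With that choice, every odd-round term is $h_r=\BR\bigl(\Psi^{u,a}_{r,B'_r},\Psi^{a,a}_{r,B'_r}\bigr)$, i.e.\ one of the two states being compared \emph{is} the diagonal state $\Psi^{a,a}_{r,B'_r}$. The Average Encoding Theorem (Lemma~\ref{lem:avenc}) applied to the uniform product distribution only controls $\expec_{u,a}\BR^2\bigl(\Psi^{u,a}_{r,B'_r},\expec_{u'}\Psi^{u',a}_{r,B'_r}\bigr)$; it says nothing about the distance of the particular conditional state at $u'=a$ from the average, because that configuration carries only $2^{-k}$ of the weight in the mutual information. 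Your fix addresses a different (and harmless) issue --- replacing $\expec_{u}\Psi^{u,a}$ by $\expec_{u\neq a}\Psi^{u,a}$ costs $O(2^{-k})$ --- but the term you actually need, $\BR^2\bigl(\Psi^{a,a}_{r,B'_r},\expec_u\Psi^{u,a}_{r,B'_r}\bigr)$, can be $\Theta(1)$ while $\I(X:B_rC_r\,|\,Y)_{\rho_r}$ is exponentially small. For instance, a late Alice-to-Bob message containing the single bit $\Eq{}(x,y)$ contributes only about $\mathrm{H}(2^{-k})$ to the Holevo information under the uniform distribution, yet makes $\Psi^{a,a}_{r,B'_r}$ nearly orthogonal to $\Psi^{u,a}_{r,B'_r}$ for every $u\neq a$, so the round-by-round bound $\expec[h_r^2]\leq O(\I(X:B_rC_r\,|\,Y)_{\rho_r})$ you rely on for odd rounds is false, and the aggregate version is essentially the statement being proved, not something obtainable from a triangle inequality plus average encoding.

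The paper (following~\cite{ABGJKL17}) avoids the diagonal entirely in the round-by-round terms by a different choice of the four cut-and-paste inputs: draw $x_1,x_2,y_2$ independently and uniformly, set $y_1\defeq x_1$, and bound via average encoding only the per-round distances at the configurations $(x_1,y_2)$, $(x_2,y_1)$, $(x_2,y_2)$ --- each of which has uniform product marginals. Conditioning on the three non-equality events (each failing with probability $2^{-k}$) and applying Markov's inequality fixes a good triple; the triangle inequality then bounds $\sum_{r\,\mathrm{odd}}\BR\bigl(\Psi^{x_1,y_2}_{r,B_rC_r},\Psi^{x_2,y_2}_{r,B_rC_r}\bigr)$ and $\sum_{r\,\mathrm{even}}\BR\bigl(\Psi^{x_2,y_1}_{r,A_rC_r},\Psi^{x_2,y_2}_{r,A_rC_r}\bigr)$ by $O\bigl(\sqrt{t\,\HQIC}\bigr)$, and Lemma~\ref{lem:quantum_cut_paste} is applied with $u=x_2$, $u'=x_1$, $v=y_2$, $v'=y_1$, so that the diagonal input $(x_1,y_1)=(x_1,x_1)$ appears only in the lemma's \emph{conclusion}, $\BR\bigl(\Psi^{x_1,y_2}_{t,A_tC_t},\Psi^{x_1,y_1}_{t,A_tC_t}\bigr)\leq 40\sqrt{t\,\HQIC}$, which is then played against Fact~\ref{fact:deltavserror}. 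You need this reorganization (or an equivalent way of keeping the diagonal out of the $h_r$'s); with it, the rest of your argument goes through and yields the stated $1/(40000\,t)$ bound.
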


The proof of our main result, Theorem \ref{theo:main}, follows.

\begin{proof}[Proof of Theorem \ref{theo:main}]
 Let $\Pi$ be a $t$-round protocol for  $\SINK\circ\XOR$  making worst-case error at most $1/5$ on input graphs of size $m$, for $m$ large enough. Then by Theorem~\ref{th:sink-embed} there exists a $t$-round protocol $\Pi_E$ for $\Eq{}$  making error at most $1/3$ and with information cost satisfying 
\begin{align*}
\SQIC (\Pi, \mu^{\otimes {m \choose 2}}) \geq \frac{m}{2} \SQIC (\Pi_E, \mu^{\otimes m-1}),
\end{align*}
with $\mu$ the uniform distribution on $1 + 1$ bits. Combining with Lemma~\ref{theo:eqinflb} and (\ref{eq:SHQIC}), the following chain of inequality gives the result:
\begin{align*}
 \frac{2 t}{m} \QIC(\Pi, \mu^{\otimes {m \choose 2}}) & \geq \frac{2}{m} \SQIC(\Pi, \mu^{\otimes {m \choose 2}})\\
		&  \geq \SQIC (\Pi_E, \mu^{\otimes m-1}) \\
		& \geq \HQIC (\Pi_E, \mu^{\otimes m-1})  \\
		& \geq \frac{1}{40000 t}.
\end{align*}
\end{proof}

We add the proof of Lemma \ref{theo:eqinflb} for completeness.

\begin{proof}[Proof of Lemma \ref{theo:eqinflb}]
By averaging over the conditioning register and then applying the average encoding theorem (Fact~\ref{fact:IvsB} and Lemma~\ref{lem:avenc}), we conclude that
\beqar
\label{eqinflb1}
&&\HQIC (\Pi, \mu^{\otimes k}) \defeq \sum_{r = odd}\I(X:B_rC_r|Y)_{\Psi_r} + \sum_{r=even}\I(Y: A_rC_r|X)_{\Psi_r}\nonumber\\
&&\geq \expec_{x,y \leftarrow \mu}\(\sum_{r = odd}\BR\(\Psi_{r, B_rC_r}^{x,y}, \Psi_{r, B_rC_r}^y\)^2 + \sum_{r=even}\BR\(\Psi_{r, A_rC_r}^{x,y}, \Psi_{r, A_rC_r}^x\)^2\) \nonumber\\
&&\geq \frac{1}{t}\(\expec_{x,y \leftarrow \mu}\(\sum_{r = odd}\BR\(\Psi_{r, B_rC_r}^{x,y}, \Psi_{r, B_rC_r}^y\) + \sum_{r=even}\BR\(\Psi_{r, A_rC_r}^{x,y}, \Psi_{r, A_rC_r}^x\)\)\)^2 .
\eeqar
Let $x_1, x_2, y_2$ be drawn uniformly from $\B^k$ and let $y_1\defeq x_1$. Observe that, taken separately, $(x_1, y_2)$, $(x_2, y_1)$ and $(x_2, y_2)$ are distributed uniformly. Thus, (\ref{eqinflb1}) ensures that 
\beqarst
\sqrt{t\HQIC (\Pi, \mu^{\otimes k})} &\geq& \expec_{x_1,y_2 \leftarrow \mu}\(\sum_{r = odd}\BR\(\Psi_{r, B_rC_r}^{x_1,y_2}, \Psi_{r, B_rC_r}^{y_2}\) + \sum_{r=even}\BR\(\Psi_{r, A_rC_r}^{x_1,y_2}, \Psi_{r, A_rC_r}^{x_1}\)\) , \\
\sqrt{t\HQIC (\Pi, \mu^{\otimes k})} &\geq& \expec_{x_2,y_1 \leftarrow \mu}\(\sum_{r = odd}\BR\(\Psi_{r, B_rC_r}^{x_2,y_1}, \Psi_{r, B_rC_r}^{y_1}\) + \sum_{r=even}\BR\(\Psi_{r, A_rC_r}^{x_2,y_1}, \Psi_{r, A_rC_r}^{x_2}\)\) , \\
\sqrt{t\HQIC (\Pi, \mu^{\otimes k})} &\geq& \expec_{x_2,y_2 \leftarrow \mu}\(\sum_{r = odd}\BR\(\Psi_{r, B_rC_r}^{x_2,y_2}, \Psi_{r, B_rC_r}^{y_2}\) + \sum_{r=even}\BR\(\Psi_{r, A_rC_r}^{x_2,y_2}, \Psi_{r, A_rC_r}^{x_2}\)\) .\\
\eeqarst
Moreover, it holds that $\Pr\(\Eq{}(x_1, y_2)=1\) = \Pr\(\Eq{}(x_2, y_1)=1\) = \Pr\(\Eq{}(x_2, y_2)=1\) = \frac{1}{2^k} $. Thus, by first conditioning (separately) on $\Eq{} (x_1, y_2) = \Eq{} (x_2, y_1) = \Eq{} (x_2, y_2) = 0$ and then applying Markov's inequality, we find that there exists a choice of $x_1,x_2, y_2$ satisfying the non-equality conditions and such that 
\beqar
\label{eqinflb2}
5\sqrt{t \HQIC (\Pi, \mu^{\otimes k})} &\geq& \sum_{r = odd}\BR\(\Psi_{r, B_rC_r}^{x_1,y_2}, \Psi_{r, B_rC_r}^{y_2}\) + \sum_{r=even}\BR\(\Psi_{r, A_rC_r}^{x_1,y_2}, \Psi_{r, A_rC_r}^{x_1}\), \nonumber\\
5\sqrt{t \HQIC (\Pi, \mu^{\otimes k})} &\geq& \sum_{r = odd}\BR\(\Psi_{r, B_rC_r}^{x_2,y_1}, \Psi_{r, B_rC_r}^{y_1}\) + \sum_{r=even}\BR\(\Psi_{r, A_rC_r}^{x_2,y_1}, \Psi_{r, A_rC_r}^{x_2}\), \nonumber\\
5\sqrt{t \HQIC (\Pi, \mu^{\otimes k})} &\geq& \sum_{r = odd}\BR\(\Psi_{r, B_rC_r}^{x_2,y_2}, \Psi_{r, B_rC_r}^{y_2}\) + \sum_{r=even}\BR\(\Psi_{r, A_rC_r}^{x_2,y_2}, \Psi_{r, A_rC_r}^{x_2}\).
\eeqar
Applying the triangle inequality (Fact \ref{fact:triangle}) to (\ref{eqinflb2}), we conclude that
\beqarst
10\sqrt{t \HQIC (\Pi, \mu^{\otimes k})} &\geq& \sum_{r = odd}\BR\(\Psi_{r, B_rC_r}^{x_1,y_2}, \Psi_{r, B_rC_r}^{x_2,y_2}\) \nonumber\\
10\sqrt{t \HQIC (\Pi, \mu^{\otimes k})} &\geq& \sum_{r=even}\BR\(\Psi_{r, A_rC_r}^{x_2,y_1}, \Psi_{r, A_rC_r}^{x_2, y_2}\). \nonumber\\
\eeqarst
Assume that $t$ is even and Alice produces the output, we use the quantum cut-and-paste Lemma (Lemma \ref{lem:quantum_cut_paste}) to conclude that  
\begin{align*}
\BR\(\Psi_{t, A_tC_t}^{x_1,y_2}, \Psi_{t, A_tC_t}^{x_1, y_1}\) & \leq 2(10\sqrt{t \HQIC (\Pi, \mu^{\otimes k})} + 10\sqrt{t \HQIC (\Pi, \mu^{\otimes k})}) \\
& = 40\sqrt{t \HQIC (\Pi, \mu^{\otimes k})}.
\end{align*}
If $ \HQIC (\Pi, \mu^{\otimes k}) \leq \frac{1}{40000 t}$, we conclude that $40\sqrt{t \HQIC (\Pi, \mu^{\otimes k})} \leq \frac{1}{5}$, and then 
\begin{align*}
1 - 2\err(\Pi) & \leq \Delta(\Psi_{t, A_tC_t}^{x_1,y_2}, \Psi_{t, A_tC_t}^{x_1, y_1}) \\
		& \leq \sqrt{2} B(\Psi_{t, A_tC_t}^{x_1,y_2}, \Psi_{t, A_tC_t}^{x_1, y_1}) \\
		& \leq \sqrt{2}/5 \\
		& < 1/3,
\end{align*}
which leads to contradiction with the fact that protocol $\Pi$ makes an error of at most $\frac{1}{3}$. This completes the proof. 
\end{proof}


\section{Conclusion and open problems}
\label{sec:conclusion}

Our main result exhibits that the function introduced in \cite{ANS18} witnesses an exponential separation between quantum communication complexity and log-approximate rank. A consequence of our lower bound is that the randomized and quantum communication complexities of this function are polynomially related. Thus, the long-standing problem of finding a total function, that provides an exponential separation between randomized communication complexity and quantum communication complexity, remains open.

An interesting question that our techniques do not resolve is if we can show a round independent exponential separation between log-approximate rank and $\QIC$. We believe that it would be surprising if the log-approximate rank and $\QIC$ were polynomially related. Known functions witnessing exponential separation between $\QIC$ and $\QCC$ have a completely different structure \cite{GanorKR:2015, RaoS:2015, ATYY16}. 

Further, we would like to understand if the Shearer-type embedding can go beyond product input distributions, and if it can be improved for $\QIC$. Finally,  it would be interesting if the lower bound in Corollary \ref{cor:qlb} could be improved to $\Omega (m^{1/2})$, matching the achievable protocol using distributed Grover search (up to logarithmic terms; see \cite[Conclusion]{ANS18}).

\section*{Acknowledgements}

We thank Ashwin Nayak for detailed discussions about the proof. N.G.B thanks Shalev Ben-David for pointing out \cite{ANS18}, A.A. thanks N.G.B. for pointing out \cite{ANS18} and finally, D.T. thanks A.A. for pointing out \cite{ANS18}. We also thank Ronald de Wolf and Makrand Sinha for helpful correspondence. 

This work was done when N.G.B was visiting the Institute for Quantum Computing, University of Waterloo and supported by a Queen Elizabeth Scholarship. N.G.B is also supported by the National Research Foundation, Prime Minister’s Office, Singapore and
the Ministry of Education, Singapore under the Research Centres of Excellence programme.
A.A. and D.T. are supported in part by NSERC, CIFAR, Industry Canada. D.T. is also supported by an NSERC PDF. IQC and PI are supported in part by the Government of Canada and the Province of Ontario.


\DeclareUrlCommand{\Doi}{\urlstyle{sf}}
\renewcommand{\path}[1]{\small\Doi{#1}}
\renewcommand{\url}[1]{\href{#1}{\small\Doi{#1}}}
\newcommand{\eprint}[1]{\href{http://arxiv.org/abs/#1}{\small\Doi{#1}}}
\bibliographystyle{alphaurl}
\phantomsection\addcontentsline{toc}{section}{References} 
\bibliography{logrank}

\newcommand{\etalchar}[1]{$^{#1}$}
\begin{thebibliography}{ABDG{\etalchar{+}}17}

\bibitem[ABB{\etalchar{+}}16]{ABB+16a}
Anurag Anshu, Aleksandrs Belovs, Shalev B{en-David}, Mika G{\"o}{\"o}s, Rahul
  Jain, Robin Kothari, Troy Lee, and Miklos Santha.
\newblock Separations in communication complexity using cheat sheets and
  information complexity.
\newblock {\em Proceedings of the 57h IEEE Symposium on Foundations of Computer
  Science (FOCS 2016)}, 2016.
\newblock arXiv preprint \eprint{arXiv:1605.01142}.

\bibitem[ABDG{\etalchar{+}}17]{ABGJKL17}
Anurag Anshu, Shalev Ben-David, Ankit Garg, Rahul Jain, Robin Kothari, and Troy
  Lee.
\newblock Separating quantum communication and approximate rank.
\newblock In {\em Proceedings of the 32nd Computational Complexity Conference},
  CCC '17, pages 24:1--24:33, Germany, 2017. Schloss Dagstuhl--Leibniz-Zentrum
  fuer Informatik.
\newblock URL: \url{https://doi.org/10.4230/LIPIcs.CCC.2017.24}, \href
  {http://dx.doi.org/10.4230/LIPIcs.CCC.2017.24}
  {\path{doi:10.4230/LIPIcs.CCC.2017.24}}.

\bibitem[ATYY17]{ATYY16}
Anurag Anshu, Dave Touchette, Penghui Yao, and Nengkun Yu.
\newblock Exponential separation of quantum communication and classical
  information.
\newblock In {\em Proceedings of the 49th Annual ACM SIGACT Symposium on Theory
  of Computing}, STOC 2017, pages 277--288, New York, NY, USA, 2017. ACM.
\newblock URL: \url{http://doi.acm.org/10.1145/3055399.3055401}, \href
  {http://dx.doi.org/10.1145/3055399.3055401}
  {\path{doi:10.1145/3055399.3055401}}.

\bibitem[BBCR10]{BarakBCR10}
Boaz Barak, Mark Braverman, Xi~Chen, and Anup Rao.
\newblock How to compress interactive communication.
\newblock In {\em Proceedings of the Forty-second ACM Symposium on Theory of
  Computing}, STOC '10, pages 67--76, New York, NY, USA, 2010. ACM.
\newblock URL: \url{http://doi.acm.org/10.1145/1806689.1806701}, \href
  {http://dx.doi.org/10.1145/1806689.1806701}
  {\path{doi:10.1145/1806689.1806701}}.

\bibitem[BCF{\etalchar{+}}96]{barnum96}
Howard Barnum, Carlton~M. Caves, Christopher~A. Fuchs, Richard Jozsa, and
  Benjamin Schmacher.
\newblock Noncommuting mixed states cannot be broadcast.
\newblock {\em Phys. Rev. Lett.}, 76(15):2818--2821, 1996.
\newblock \href {http://dx.doi.org/10.1103/PhysRevLett.76.2818}
  {\path{doi:10.1103/PhysRevLett.76.2818}}.

\bibitem[BJKS04]{BJKS04}
Ziv {Bar-Yossef}, T.S. Jayram, Ravi Kumar, and D.~Sivakumar.
\newblock An information statistics approach to data stream and communication
  complexity.
\newblock {\em Journal of Computer and System Sciences}, 68(4):702--732, 2004.
\newblock \href {http://dx.doi.org/10.1016/j.jcss.2003.11.006}
  {\path{doi:10.1016/j.jcss.2003.11.006}}.

\bibitem[Bur69]{Bures69}
Donald Bures.
\newblock An extension of {K}akutani's theorem on infinite product measures to
  the tensor product of semifinite {$ \omega^{*}$}-algebras.
\newblock {\em Transactions of the American Mathematical Society},
  135:199--212, 1969.
\newblock \href {http://dx.doi.org/10.2307/1995012}
  {\path{doi:10.2307/1995012}}.

\bibitem[CMS18]{ANS18}
Arkadev Chattopadhyay, Nikhil Mande, and Suhail Sherif.
\newblock The log-approximate-rank conjecture is false.
\newblock https://eccc.weizmann.ac.il/report/2018/176/, 2018.

\bibitem[CSWY01]{CSWY01}
Amit Chakrabarti, Yaoyun Shi, Anthony Wirth, and Andrew Yao.
\newblock Informational complexity and the direct sum problem for simultaneous
  message complexity.
\newblock In {\em Foundations of Computer Science, 2001. Proceedings. 42nd IEEE
  Symposium on}, pages 270--278. IEEE, 2001.

\bibitem[FvdG99]{FvdG99}
Christopher~A. Fuchs and Jeroen van~de Graaf.
\newblock Cryptographic distinguishability measures for quantum-mechanical
  states.
\newblock {\em IEEE Transactions on Information Theory}, 45(4):1216--1227, May
  1999.
\newblock \href {http://dx.doi.org/10.1109/18.761271}
  {\path{doi:10.1109/18.761271}}.

\bibitem[GKR15]{GanorKR:2015}
Anat Ganor, Gillat Kol, and Ran Raz.
\newblock Exponential separation of information and communication for boolean
  functions.
\newblock In {\em Proceedings of the Forty-Seventh Annual ACM on Symposium on
  Theory of Computing}, STOC '15, pages 557--566, New York, NY, USA, 2015. ACM.
\newblock URL: \url{http://doi.acm.org/10.1145/2746539.2746572}, \href
  {http://dx.doi.org/10.1145/2746539.2746572}
  {\path{doi:10.1145/2746539.2746572}}.

\bibitem[JN14]{JN14}
R.~Jain and A.~Nayak.
\newblock The space complexity of recognizing well-parenthesized expressions in
  the streaming model: The index function revisited.
\newblock {\em IEEE Transactions on Information Theory}, 60(10):6646--6668, Oct
  2014.
\newblock \href {http://dx.doi.org/10.1109/TIT.2014.2339859}
  {\path{doi:10.1109/TIT.2014.2339859}}.

\bibitem[JRS03a]{JRS03b}
Rahul Jain, Jaikumar Radhakrishnan, and Pranab Sen.
\newblock A direct sum theorem in communication complexity via message
  compression.
\newblock In {\em International Colloquium on Automata, Languages, and
  Programming}, pages 300--315. Springer, 2003.

\bibitem[JRS03b]{JRS03}
Rahul Jain, Jaikumar Radhakrishnan, and Pranab Sen.
\newblock A lower bound for the bounded round quantum communication complexity
  of set disjointness.
\newblock In {\em Proceedings of the 44th IEEE Symposium on Foundations of
  Computer Science (FOCS 2003)}, pages 220--229, Oct 2003.
\newblock \href {http://dx.doi.org/10.1109/SFCS.2003.1238196}
  {\path{doi:10.1109/SFCS.2003.1238196}}.

\bibitem[JRS05]{JRS05}
R.~Jain, J.~Radhakrishnan, and P.~Sen.
\newblock Prior entanglement, message compression and privacy in quantum
  communication.
\newblock In {\em 20th Annual IEEE Conference on Computational Complexity
  (CCC'05)}, pages 285--296, June 2005.
\newblock \href {http://dx.doi.org/10.1109/CCC.2005.24}
  {\path{doi:10.1109/CCC.2005.24}}.

\bibitem[KLLGR16]{KLLR16}
Iordanis Kerenidis, Mathieu Lauriere, Fran{\c{c}}ois Le~Gall, and Mathys
  Rennela.
\newblock Information cost of quantum communication protocols.
\newblock {\em Quantum Information \& Computation}, 16(3\&4):181--196, 2016.

\bibitem[KNTZ07]{KNTZ07}
H.~Klauck, A.~Nayak, A.~{Ta-Shma}, and D.~Zuckerman.
\newblock Interaction in quantum communication.
\newblock {\em IEEE Transactions on Information Theory}, 53(6):1970--1982, June
  2007.
\newblock \href {http://dx.doi.org/10.1109/TIT.2007.896888}
  {\path{doi:10.1109/TIT.2007.896888}}.

\bibitem[Lin75]{lindblad75}
G{\"o}ran Lindblad.
\newblock Completely positive maps and entropy inequalities.
\newblock {\em Communications in Mathematical Physics}, 40(2):147--151, 1975.
\newblock \href {http://dx.doi.org/10.1007/BF01609396}
  {\path{doi:10.1007/BF01609396}}.

\bibitem[LS88]{LS88}
L.~Lovasz and M.~Saks.
\newblock Lattices, mobius functions and communications complexity.
\newblock In {\em [Proceedings 1988] 29th Annual Symposium on Foundations of
  Computer Science}, pages 81--90, Oct 1988.
\newblock \href {http://dx.doi.org/10.1109/SFCS.1988.21924}
  {\path{doi:10.1109/SFCS.1988.21924}}.

\bibitem[LS08]{LS08c}
Troy Lee and Adi Shraibman.
\newblock An approximation algorithm for approximation rank.
\newblock In {\em Proceedings of the 24th IEEE Conference on Computational
  Complexity}, pages 351--357, 2008.
\newblock \href {http://dx.doi.org/10.1109/CCC.2009.25}
  {\path{doi:10.1109/CCC.2009.25}}.

\bibitem[LS09]{LS09}
Troy Lee and Adi Shraibman.
\newblock Lower bounds in communication complexity.
\newblock {\em Foundations and Trends� in Theoretical Computer Science},
  3(4):263--399, 2009.
\newblock \href {http://dx.doi.org/10.1561/0400000040}
  {\path{doi:10.1561/0400000040}}.

\bibitem[LT17]{LT17}
Mathieu Lauri{\`e}re and Dave Touchette.
\newblock {The Flow of Information in Interactive Quantum Protocols: the Cost
  of Forgetting}.
\newblock In Christos~H. Papadimitriou, editor, {\em 8th Innovations in
  Theoretical Computer Science Conference (ITCS 2017)}, volume~67 of {\em
  Leibniz International Proceedings in Informatics (LIPIcs)}, pages 47:1--47:1,
  Dagstuhl, Germany, 2017. Schloss Dagstuhl--Leibniz-Zentrum fuer Informatik.
\newblock URL: \url{http://drops.dagstuhl.de/opus/volltexte/2017/8189}, \href
  {http://dx.doi.org/10.4230/LIPIcs.ITCS.2017.47}
  {\path{doi:10.4230/LIPIcs.ITCS.2017.47}}.

\bibitem[NC00]{NC00}
Michael~A. Nielsen and Isaac~L. Chuang.
\newblock {\em Quantum Computation and Quantum Information}.
\newblock Cambridge Series on Information and the Natural Sciences. Cambridge
  University Press, 2000.

\bibitem[NT17]{NT16}
Ashwin Nayak and Dave Touchette.
\newblock Augmented index and quantum streaming algorithms for {DYCK}(2).
\newblock In {\em Proceedings of the 32nd Computational Complexity Conference},
  CCC '17, pages 23:1--23:21, 2017.

\bibitem[RS15]{RaoS:2015}
Anup Rao and Makrand Sinha.
\newblock Simplified separation of information and communication.
\newblock {\em Electronic Colloquium on Computational Complexity (ECCC)},
  15:057, 2015.

\bibitem[SdW18]{SW18}
Makrand Sinha and Ronald de~Wolf.
\newblock Exponential separation between quantum communication and logarithm of
  approximate rank.
\newblock (to appear), 2018.

\bibitem[Tou15]{Tou15}
Dave Touchette.
\newblock Quantum information complexity.
\newblock In {\em Proceedings of the 47th Annual ACM on Symposium on Theory of
  Computing}, STOC '15, pages 317--326. ACM, 2015.
\newblock \href {http://dx.doi.org/10.1145/2746539.2746613}
  {\path{doi:10.1145/2746539.2746613}}.

\bibitem[Wat18]{Wat16}
John Watrous.
\newblock {\em The theory of quantum information}.
\newblock Cambridge University Press, 2018.

\bibitem[Wil12]{Wil12}
Mark~M. Wilde.
\newblock {\em Quantum Information Theory}.
\newblock Cambridge University Press, Cambridge, 12 2012.
\newblock \href {http://dx.doi.org/10.1017/CBO9781139525343}
  {\path{doi:10.1017/CBO9781139525343}}.

\end{thebibliography}

\end{document}